\documentclass[12pt, english]{article}

\usepackage{amsmath}
\usepackage{amsthm}
\usepackage{thm-restate}
\usepackage{amssymb}
\usepackage{lmodern}
\usepackage[LGR,T1]{fontenc}
\usepackage{textcomp}
\usepackage{xcolor}
\usepackage{babel}
\usepackage{stackrel}
\usepackage[a4paper]{geometry}
\usepackage[authoryear]{natbib}
\PassOptionsToPackage{normalem}{ulem}
\usepackage{ulem}
\usepackage[]
 {hyperref}
\usepackage{tikz-cd}

\usepackage{tikz}
\usepackage{pgfplots}
\pgfplotsset{compat=1.18}
\usepackage{bm}

\theoremstyle{plain}
\newtheorem{prop}{Proposition}
\newtheorem{thm}{\protect\theoremname}
\theoremstyle{definition}
\newtheorem{example}[thm]{\protect\examplename}

\providecommand{\examplename}{Example}
\providecommand{\theoremname}{Theorem}
\newtheorem{defn}{Definition}

\def\pp{p}				
\def\bP{\mathbf{P}}		
\def\mP{\mathbb{P}}		
\def\mm{\mu}			
\def\bM{\mathbf{M}}		
\def\mM{\mathbb{M}}	    
\def\SA{A}              

\begin{document}

\title{Trajectory of Probabilities, Probability on Trajectories, and the Stochastic--Quantum Correspondence}
\author{\textit{Gy{\H{o}}z{\H{o}} Egri\textsuperscript{\,1}, M\'arton G\"om\"ori\textsuperscript{\,2,3,*}, Bal\'azs Gyenis\textsuperscript{\,3}, G\'abor Hofer-Szab\'o\textsuperscript{\,3}}\\[6pt]
\small \textit{\textsuperscript{1}Faulhorn Labs, Budapest}\\
\small \textit{\textsuperscript{2}ELTE Faculty of Humanities, Institute of Philosophy, Budapest}\\
\small \textit{\textsuperscript{3}ELTE Research Centre for the Humanities, Institute of Philosophy, Budapest}\\
\small \textit{\textsuperscript{*}Corresponding author: \href{mailto:gomori.marton@btk.elte.hu}{gomori.marton@btk.elte.hu}}}

\maketitle

\begin{abstract}
The probabilistic description of the time evolution of a physical system can take two conceptually distinct forms: a trajectory of probabilities, which specifies how probabilities evolve over time, and a probability on trajectories, which assigns probabilities to possible histories. A lack of a clear distinction between these two probabilistic descriptions has given rise to a number of conceptual difficulties, particularly in recent analyses of stochastic--quantum correspondence. This paper provides a systematic account of their relationship. We define probability dynamics and stochastic process families together with a precise notion of implementation that connects the two descriptions. We show that implementations are generically non-unique, that every probability dynamics admits a Markovian implementation, and characterize when non-Markovian implementations are possible. We expose fallacies in common arguments for the linearity of probability dynamics based on the law of total probability and clarify the proper interpretation of ``transition matrices'' by distinguishing dynamics-level maps from the conditional probability matrices of implementing processes. We further introduce decomposability as the appropriate general notion of stepwise evolution for (possibly nonlinear) probability dynamics, relate it to divisibility in the linear case---showing that the two can come apart---and disentangle both notions from Markovianity and time-homogeneity. Finally, we connect these results to what we call statistical dynamics, in which linearity is indeed physically motivated, and contrast the framework with quantum mechanics.

\end{abstract}

\section{Introduction}

The time evolution of a particular physical system can be probabilistically described in two conceptually distinct ways: either by a {\em trajectory of probabilities}, or by a {\em probability on trajectories}. This paper analyzes the relationship between the two probabilistic descriptions in general terms. Although the mathematical distinction between these two probabilistic descriptions is straightforward, their insufficient separation has led to a number of conceptual difficulties in characterizing quantum mechanical phenomena, including the approach taken by \cite{barandes2025,gillespie1994,gillespie2000,gillespie2001,gullo2014,vacchini2011,vacchini2012,smirne2013,rivas2014,breuer2016,li2018,li2019,spekkens2019,canturk2024}. 

We begin by illustrating the distinction using a simple example and highlighting ten observations of this paper concerning conceptual ambiguities found in the literature.
\par\vspace{0.5\baselineskip}

\noindent 1. Suppose that we toss a coin three times. The coin has a hidden built-in weight which can change position between the different tosses and consequently change the coin's bias; we want to describe the time evolution of the probabilities with which the coin may land on heads or tails at the three moments of time. 

In the first probabilistic description, we fix a trajectory of probabilities: a {\em probability vector trajectory} for the two configurations $C \doteq \{ H, T \}$ is defined as
\begin{equation}
\vec{\pp}(t) \doteq (\pp_H(t), \pp_T(t))
\end{equation}
for $t=1,2,3$, where the three numbers $0 \leq \pp_H(t) \leq 1$ are interpreted as the probability that the outcome of the $t$th coin toss is heads, and where $\pp_T(t) \doteq 1 - \pp_H(t)$. 

In the second probabilistic description, we fix a probability on trajectories. First, we define the event $H(t)$, with the interpretation that the outcome of the $t$th coin toss is heads, as the set of all trajectories in the configuration trajectory sample space $$C^3 = \{ HHH, HHT, HTH, HTT, THH, THT, TTH, TTT \}$$ whose $t$th outcome is $H$. Thus, 
\begin{eqnarray}\label{eq:headsdef1}
	H(1)						& \doteq &  \{ HHH, HHT, HTH, HTT  \} \\
	H(2) 						& \doteq & \{ HHH, HHT, THH, THT \} \\
	H(3) 						& \doteq & \{ HHH, HTH, THH, TTH \} \label{eq:headsdef3}
\end{eqnarray}
(and {\em mutatis mutandis} for the events $T(t)$ stating that the outcome of the $t$th coin toss is tails). It follows directly that
\begin{eqnarray}
	H(1) \wedge H(2)			& = & \{ HHH, HHT \} \label{eq:headsdef4} \\
	H(1) \wedge H(3)			& = & \{ HHH, HTH \} \\
	H(2) \wedge H(3)			& = & \{ HHH, THH \} \\
	H(1) \wedge H(2) \wedge H(3)	& = & \{ HHH \} \label{eq:headsdef7}
\end{eqnarray}
Second, we define a probability measure $\mm$ on $C^3$ as sample space. This can be achieved by fixing seven parameters; for instance, by assigning values to the probabilities of the seven events in (\ref{eq:headsdef1})-(\ref{eq:headsdef7}). A so-defined $\mm$ probability on trajectories naturally corresponds to a so-called {\em canonical stochastic process} whose random variable for time $t$ assigns to a trajectory in $C^3$ the $t$th outcome of the trajectory. Since different canonical stochastic processes on $C^3$ are only distinguished by their probability measures (see Section \ref{sec:twodescriptions} for details), when this does not lead to ambiguity, we will use $\mm$ to denote both the probability measure on the configuration trajectories and the corresponding canonical stochastic process.

The two probabilistic descriptions are closely related. For $t=1,2,3$, the vectors 
\begin{equation}
	\vec{\mm}(t) \doteq \left( \mm(H(t)), \mm(T(t)) \right)
\end{equation}
which are naturally defined from $\mm$, form a probability vector trajectory in the sense described above. In turn, we say that the stochastic process $\mm$ {\em implements} the probability vector trajectory $\vec{\pp}(t)$ if
\begin{equation}\label{eq:impl}
	\vec{\mm}(t) = \vec{\pp}(t)
\end{equation}
holds for all $t=1,2,3$. 

Section \ref{sec:twodescriptions} develops the distinction between these two probabilistic descriptions and the notion of implementation in more general terms. In our coin toss example, the probability vector trajectory $\vec{\pp}(t)$ and the implementation condition (\ref{eq:impl}) fix the $\mm$-probabilities of the three events (\ref{eq:headsdef1})-(\ref{eq:headsdef3}), but these are only three out of the seven parameters needed to define $\mm$. Without fixing four additional parameters, such as the $\mm$-probabilities of the four conjunctions (\ref{eq:headsdef4})-(\ref{eq:headsdef7}), an implementation of $\vec{\pp}(t)$ has four free parameters left. This leads to our

\par\vspace{0.5\baselineskip}
\noindent\textit{First observation: the implementation of a probability vector trajectory by a stochastic process is not unique.}
\par\vspace{0.5\baselineskip}

\noindent 2. A canonical stochastic process has more mathematical structure than a probability vector trajectory. The additional structure is superfluous if one only wants to describe probabilities of events at given moments of time. However, the additional structure is necessary if one also wants to describe the joint probabilities of events at {\em different} moments of time. If we merely fix a probability vector trajectory, it is {\em meaningless} to ask what is the probability that the third and the second coin tosses are both heads, or to ask what is the conditional probability that the third coin toss is heads given that the second coin toss is heads, etc. This is so since there is no unique extension of the three probability measures generated by $\vec{\pp}(1)$, $\vec{\pp}(2)$, and $\vec{\pp}(3)$ on three copies of the configuration space $C$ to a probability measure on the trajectory sample space $C^3$. Thus, merely fixing a probability vector trajectory $\vec{\pp}(t)$ leaves expressions of the form ``$\pp(H(3) \wedge H(2))$'' or ``$\pp(H(3) | H(2))$'' ill-defined. To make claims about joint probabilities of events at different moments of time we need to define a canonical stochastic process, which both makes it meaningful to talk about the conjunction of events at different times and fixes the probabilities of these conjunctions. We may sum up these remarks as our

\par\vspace{0.5\baselineskip}
\noindent\textit{Second observation: a probability vector trajectory says nothing about the probabilities of conjunctions of events at different moments of time. A canonical stochastic process does.}
\par\vspace{0.5\baselineskip}

\noindent 3. As a further consequence, the question whether a probability vector trajectory $\vec{\pp}(t)$ satisfies the Markov condition is ill-defined, since the Markov condition is expressed by connecting events at different moments of time. However, it is meaningful to ask, for instance, whether for the stochastic process $\mm$ the Markov condition
\begin{equation}\label{eq:mark}
	\mm\left( H(3) \mid H(2) \wedge H(1) \right) = \mm\left( H(3) \mid H(2) \right)
\end{equation}
holds. Whether equation (\ref{eq:mark}) holds is not determined, in general, by the implementation conditions (\ref{eq:impl}) alone, and hence the question of Markovianity is in general independent of how the probability vector trajectory $\vec{\pp}(t)$ evolves. We give a more precise characterization of this claim in Section \ref{sec:relating-Markovianity}; for now, it can be succinctly stated as our

\par\vspace{0.5\baselineskip}
\noindent\textit{Third observation: a probability vector trajectory says nothing about Markovianity. In general, a probability vector trajectory can be implemented both by a stochastic process that is Markovian and by a stochastic process that is non-Markovian.}
\par\vspace{0.5\baselineskip}

\noindent 4. It is natural to generalize from a single probability vector trajectory $\vec{\pp}(t)$ to a set of probability vector trajectories which satisfy a law. When these probability vector trajectories are in one-to-one correspondence with initial conditions we obtain what may be called {\em probability dynamics}. To illustrate the idea using our coin toss example, suppose that the built-in weight in the coin moves in such a way that for an initial coin bias $0 \leq r \leq 1$ for heads the bias changes in time as
\begin{eqnarray}
	\pp_H(1) & = & r					\\
	\pp_H(2) & = &f(r)  \label{eq:ph2f}	\\
	\pp_H(3) & = & r
\end{eqnarray}
with a function $f: [0,1]\rightarrow[0,1]$. In other words, we defined a probability dynamics $\bP(t, \vec{\pp}_1) = \bP_t(\vec{\pp}_1)$ with initial value $\vec{\pp}_1 = (r, 1-r)$ such that 
\begin{eqnarray}\label{eq_probdyn1}
	\bP(1, \vec{\pp}_1) & \doteq & \vec{\pp}(1) = (r, 1-r)					\\
	\bP(2, \vec{\pp}_1) & \doteq & \vec{\pp}(2) = (f(r), 1-f(r))	\\
	\bP(3, \vec{\pp}_1) & \doteq & \vec{\pp}(3) = (r, 1-r)  \label{eq_probdyn3}
\end{eqnarray}

Since a probability dynamics is a set of probability vector trajectories, and since a single probability vector trajectory is implemented by a single stochastic process, a probability dynamics is implemented by a {\em set} of canonical stochastic processes. Continuing our coin-toss example, for every $0 \leq r \leq 1$ define the canonical stochastic processes $\mm_{r}$ by fixing
\begin{eqnarray}
			\mm_{r}(HHH) & = & r \, f(r) \, r	\label{eq:mur1}	\\
			\mm_{r}(HHT)  & = & r \, f(r) \, (1-r)		\\
			\mm_{r}(HTH) & =  & r \, (1-f(r)) \, r			\\
			\mm_{r}(HTT)  & =  & r \, (1-f(r)) \, (1-r)				\\
			\mm_{r}(THH)  & = & (1-r) \, f(r) \, r			\\
			\mm_{r}(THT)  & = & (1-r) \, f(r) \, (1-r)		\\
			\mm_{r}(TTH)  & = & (1-r) \, (1-f(r)) \, r			\\
			\mm_{r}(TTT)  & = & (1-r) \, (1-f(r)) \, (1-r)	\label{eq:mur8}	
\end{eqnarray}
It is easy to verify that for every $0 \leq r^* \leq 1$ the so-defined canonical stochastic process $\mm_{r^*}$ implements the probability vector trajectory $\bP(t, (r^*, 1-r^*))$. In other words, we may say that the {\em stochastic process family} $\bM$, understood as a set of canonical stochastic processes $\{ \mm_{r} \}_{r \in [0,1]}$ parametrized by $r$, implements probability dynamics $\bP$, understood as a set of probability vector trajectories parametrized by $r$. With Section \ref{sec:twodescriptions} elaborating on details and important qualifications, we arrive at our

\par\vspace{0.5\baselineskip}
\noindent\textit{Fourth observation: while a single probability vector trajectory is implemented by a single stochastic process, one has to be aware that a probability dynamics, which consists of a set of probability vector trajectories, is implemented by a set of canonical stochastic processes, each of which assigns different probabilities to the possible configuration trajectories.}
\par\vspace{0.5\baselineskip}

\noindent 5. Overlooking the fourth observation has contributed to a number of problematic statements in the literature. To see this, define the stochastic matrices $\mM_{r} \left( t \leftarrow t' \right)$ using the two-time conditional probabilities as
\begin{align}
	\mM_{r} \left( t \leftarrow t' \right) & \doteq \left(
			\begin{array}{cc}
				\mm_{r} \left( H(t) | H(t') \right)	&	\mm_{r} \left( H(t) | T(t') \right)		\\
				\mm_{r} \left( T(t) | H(t') \right)		&	\mm_{r} \left( T(t) | T(t') \right)		
			\end{array}
		\right) \label{eq_delta_re}
\end{align}
for $t,t' = 1,2, 3$. Then, by the law of total probability, one immediately obtains
\begin{equation}
	\vec{\mm}_{r}(t) = \mM_{r} \left(t \leftarrow t' \right) \cdot \vec{\mm}_{r}(t')
\end{equation}
Since the stochastic process family $\bM$ implements probability dynamics $\bP$, it then follows that
\begin{eqnarray}\label{eq:delta_p2p1}
	\vec{\pp}(2) & = & \mM_{r}\left(2 \leftarrow 1 \right) \cdot \vec{\pp}(1)	\\
	\vec{\pp}(3) & = & \mM_{r}\left(3 \leftarrow 2 \right) \cdot \vec{\pp}(2) \label{eq:delta_p3p2}	\\
	\vec{\pp}(3) & = & \mM_{r}\left(3 \leftarrow 1 \right) \cdot \vec{\pp}(1) \label{eq:delta_p3p1}
\end{eqnarray}

Equations (\ref{eq:delta_p2p1})-(\ref{eq:delta_p3p1}) may give the impression that the probability dynamics $\bP$ must be linear in the sense of preserving convex combination of probability vectors. However, this impression is misleading. In our example, whether probability dynamics $\bP$ is linear depends on how the function $f$ in (\ref{eq:ph2f}) is defined. For example, if $f(r)=r$, then $\bP$ is linear (the identity); however, if $f(r) = r^2$, then $\bP$ is not linear, since
\begin{align}
\bP_2(r,1-r) = (r^2, 1-r^2) \neq (r,1-r) = r\,\bP_2(1,0) + (1-r)\,\bP_2(0,1)
\end{align}
when $r \neq 0,1$. 
\noindent In fact, equations (\ref{eq:delta_p2p1})-(\ref{eq:delta_p3p1}) hold regardless of whether the probability dynamics is linear or not linear. The wrong impression arises from ignoring the dependence of $\mM_{r} \left( t \leftarrow t' \right)$ on the initial value $r$, that is, from ignoring our fourth observation above. In general, different $r$ initial values correspond to different probability measures $\mm_r$ and henceforth to different matrices $\mM_{r} \left( t \leftarrow t' \right)$ (see equation (\ref{eq_delta_re})), and thus a single matrix $\mM_{r^*} \left( t \leftarrow t' \right)$, for a single fixed value $r = r^*$, cannot be viewed as a generator of probability dynamics $\bP$, which is a function of non-fixed variable $r$ (see Section \ref{sec:Linearity} for more details). In sum,

\par\vspace{0.5\baselineskip}
\noindent\textit{Fifth observation: the fact that each stochastic process of a stochastic process family that implements a probability dynamics trivially satisfies its own temporal law of total probability has nothing to do with the question whether the probability dynamics is linear. In particular, this fact cannot be relied upon to motivate the linearity of probability dynamics.}
\par\vspace{0.5\baselineskip}

\noindent
6. Recall that $\bP_2$ tells how an initial probability vector $\vec{\pp}(1) = (r, 1-r)$ evolves from time 1 to time 2, and $\bP_3$ tells how $\vec{\pp}(1)$ evolves from time 1 to time 3. It is natural to ask whether this time evolution is {\em decomposable} in the sense that there exists, on the range of $\bP_2$, a function $\bP_{3 \leftarrow 2}$ such that 
\begin{equation}
    \bP_3 = \bP_{3 \leftarrow 2} \circ \bP_2
\end{equation}
\noindent When $\bP$ is linear---say, when $f(r) = r$ in (\ref{eq:ph2f}) in our example---then there exist $r$-independent stochastic matrices $\mP(2)$ and $\mP(3)$ such that
\begin{align}
	\vec{\pp}(2) & = \mP(2) \cdot \vec{\pp}(1)  \label{eq:linearpintro1} \\
	\vec{\pp}(3) & = \mP(3) \cdot \vec{\pp}(1)   \label{eq:linearpintro2}
\end{align}
for all $0 \leq r \leq 1$. By simple application of the definition, a linear $\bP$ is decomposable if there exists a function $\bP_{3 \leftarrow 2}$ such that
\begin{equation}
    \vec{\pp}(3) = \mP(3) \cdot \vec{\pp}(1) = \bP_{3 \leftarrow 2} \left( \mP(2) \cdot \vec{\pp}(1) \right)
\end{equation}
for all $0 \leq r \leq 1$. We obtain a {\em special case} of a decomposable linear $\bP$ when there exists a {\em stochastic} matrix $\mP(3 \leftarrow 2)$ such that 
\begin{equation}\label{eq:divisibilityeq1}
    \vec{\pp}(3) = \mP(3) \cdot \vec{\pp}(1) = \mP(3 \leftarrow 2) \mP(2) \cdot \vec{\pp}(1)
\end{equation}
for all $0 \leq r \leq 1$. This special case of decomposability is often called {\em divisibility} (see Section \ref{sec:Decomposability} for qualifications).

Now, substituting (\ref{eq:delta_p2p1}) into (\ref{eq:delta_p3p2}) results in
\begin{equation}\label{eq:delta_p3p2p1}
	\vec{\pp}(3) = \mM_{r}\left(3 \leftarrow 2 \right) \mM_{r}\left(2 \leftarrow 1 \right) \cdot \vec{\pp}(1)
\end{equation}
and a superficial comparison with \eqref{eq:divisibilityeq1} might give the impression that divisibility always holds. However, this impression would also be wrong for the same reason as before: it ignores the dependence of $\mM_{r} \left( t \leftarrow t' \right)$ on the initial value $\vec{\pp}(1) = (r, 1-r)$.

In Section \ref{sec:Markovianity} we address the substantial ambiguity in the literature regarding the different concepts of stepwise evolution of probability dynamics, clarify the notions of divisibility and decomposability, and analyze their relationships with time-homogeneity and Markovianity. One of our conceptual findings can be summarized as follows:

\par\vspace{0.5\baselineskip}
\noindent\textit{Sixth observation: the appropriate mathematical notion that captures the intuitive idea of a stepwise evolution of probability dynamics is decomposability, not divisibility. Since every divisible probability dynamics is decomposable, but the converse does not hold even for linear probability dynamics, in cases where the probability dynamics is decomposable but not divisible, it is not the intuitive notion of stepwise evolution that fails.}
\par\vspace{0.5\baselineskip}

\noindent 7. Our third observation entails that Markovianity is a property that may or may not characterize each stochastic process of a stochastic process family which implements a probability dynamics. Since linearity, decomposability, and divisibility are---in contrast to Markovianity---properties of probability dynamics, it is easy to see (see Section \ref{sec:Markovianity} for details) that 

\par\vspace{0.5\baselineskip}
\noindent\textit{Seventh observation: Markovianity or non-Markovianity of stochastic processes of a stochastic process family that implements a probability dynamics has nothing to do with the question whether the probability dynamics is linear, decomposable, or divisible.}
\par\vspace{0.5\baselineskip}

\noindent 8. Suppose again that probability dynamics $\bP$ is linear, hence there exist $r$-independent stochastic matrices $\mP(t)$ such that
\begin{equation}
	\vec{\pp}(t) = \bP \left( t, \vec{\pp}(1) \right) = \mP(t) \cdot \vec{\pp}(1)
\end{equation}
hold for all $0 \leq r \leq 1$, $\vec{\pp}(1) = (r, 1-r)$, and for all $t =1,2,3$. Every such linear probability dynamics can be implemented by a stochastic process family $\bM$ so that
\begin{equation}\label{eq:linearmatrixeq}
	\mP(t) = \mM_r (t \leftarrow 1)
\end{equation}
for all $0 < r < 1$ and for all $t =1,2,3$ (see Section \ref{sec:Transition}). Since the matrices $\mP(t)$ are independent of $r$, equation (\ref{eq:linearmatrixeq}) can give the impression that at least a linear probability dynamics can be meaningfully generated from a single canonical stochastic process. However, this impression is again wrong: although a linear probability dynamics may indeed be implemented by a stochastic process family such that the conditional probability matrices $\mM_r (t \leftarrow 1)$ are independent of $r$, in the nontrivial case (when the $\bP(t, (r, 1-r))$ probability vector trajectories differ for different values of $r$) different canonical stochastic processes are still needed to implement the different probability vector trajectories. This is so even though all of the canonical stochastic processes in the family are such that they entail the same conditional probability matrices, that is, the stochastic process family may be called {\em transition-constant} (for a precise definition, see Section \ref{sec:Transition})! Thus, we still need a stochastic process \emph{family} to implement linear probability dynamics, not just a \emph{single} canonical stochastic process.

To further drive home the point, we note that the interpretations of the matrices $\mP(t)$ and $\mM_{r^*} (t \leftarrow 1)$ are quite different. $\mP(2)$ answers the question what is the probability that the second coin toss is heads if the probability of the first coin toss being heads is $r$ (for all $0 \leq r \leq 1$), but $\mP(2)$ is silent on the question what is the probability that the second coin toss is heads if the first coin toss is heads. In contrast, $\mM_{r^*} (2 \leftarrow 1)$ answers the question what is the probability that the second coin toss is heads if the first coin toss is heads, but $\mM_{r^*} (2 \leftarrow 1)$ is silent on the question what is the probability that the second coin toss is heads if the probability of the first coin toss being heads is $r$ (except when $r = r^*$). For more details, see Section \ref{sec:Transition}. In summary,

\par\vspace{0.5\baselineskip}
\noindent\textit{Eighth observation: constructing a probability dynamics from a single canonical stochastic process is a category mistake even if linearity of probability dynamics could be motivated on appropriate physical grounds.}
\par\vspace{0.5\baselineskip}

\noindent 9. Appropriate physical--interpretational assumptions can justify linearity of probability dynamics. Section \ref{sec:statistical-dynamics} introduces and analyzes the relationship of three varieties of what we call statistical dynamics, and shows that deterministic, stochastic, and initially independent system--ancilla statistical dynamics each entail a probability dynamics that is linear. Beyond ensuring linearity, the three varieties possess additional appealing features; for instance, in the simplest case of deterministic statistical dynamics, decomposability, divisibility, and Markovianity are all mathematically equivalent properties. However, in all three varieties, linearity of the ensuing probability dynamics depends crucially on the physical interpretation that instantaneous probability vectors represent frequency distributions of configurations in an ensemble of systems, each system evolving independently according to a deterministic or stochastic law of evolution. Thus,

\par\vspace{0.5\baselineskip}
\noindent\textit{Ninth observation: although linearity of probability dynamics cannot be assumed purely on mathematical grounds, it can be derived from certain physical--interpretational assumptions, notably from statistical mixing. Consequently, to justify assuming that a certain probability dynamics is linear, one must always clearly and explicitly state and defend the physical--interpretational assumption from which linearity is supposed to ensue.}
\par\vspace{0.5\baselineskip}

\noindent 10. Our ninth observation becomes especially pertinent in the case of quantum mechanics, the probabilities of which, according to the prevailing view, do not admit a statistical interpretation. In Section \ref{sec:quantum-dynamics} we emphasize the difficulty of obtaining a well-defined probability dynamics from Born probabilities alone, the generic nonlinearity of quantum probability evolution, and the limitations this imposes on proposed stochastic--quantum correspondences. One of the upshots of our analysis can be put as follows:

\par\vspace{0.5\baselineskip}
\noindent\textit{Tenth observation: linearity of quantum probability evolution cannot be inferred from statistical mixing. Moreover, linearity of quantum probability evolution (in contrast with linearity of density matrix evolution) would be in tension with quantum mechanics itself.}
\par\vspace{0.5\baselineskip}

\noindent In sum, the paper is structured as follows. In Section \ref{sec:twodescriptions} we introduce the conceptual distinction between trajectories of probabilities and probabilities on trajectories by providing precise definitions of probability vector trajectories and probability dynamics, canonical stochastic processes, stochastic process families, and the notion of implementation that relates them. Section \ref{sec:Linearity} analyzes linearity at the level of probability dynamics and shows why the temporal law of total probability, when applied within individual implementing stochastic processes, does not by itself justify linear evolution of probabilities; crucially, the dependence of multi-time conditional probabilities on initial conditions cannot in general be ignored. Section \ref{sec:Transition} then distinguishes the matrices $\mP(t)$ representing a linear probability dynamics from the conditional-probability matrices $\mM_{\vec{\pp}_{0}}(t)$ associated with particular implementing processes, introduces the notion of a transition-constant stochastic process family, and proves that transition-constancy suffices to recover linearity together with the usual identification of ``transition matrices'' with conditional probabilities (while also showing that this implication is not reversible in general). In Section \ref{sec:Markovianity} we disentangle the literature's notion of ``divisibility'' by separating it into the distinct concepts of decomposability and divisibility for probability dynamics, and we relate these notions to Markovianity as a property of stochastic processes; among other results, we show that every probability dynamics admits a Markovian implementation, that implementations are highly non-unique, and that non-Markovian implementations are also generically available. We also provide explicit examples establishing that decomposability does not imply divisibility, even in the linear case. Section \ref{sec:statistical-dynamics} examines a physically motivated special case---statistical dynamics---in which probability vectors represent ensemble frequencies, thereby providing clear interpretational conditions under which linearity is warranted and under which decomposability, divisibility, and Markovianity coincide or come apart. Finally, Section \ref{sec:quantum-dynamics} contrasts the framework with quantum mechanics, emphasizing the difficulty of defining a probability dynamics from Born probabilities alone, the generic nonlinearity of quantum probability evolution, and the resulting constraints on proposed stochastic--quantum correspondences. The Appendices collect proofs of key propositions (Appendix \ref{appendix:proofs}), discuss an alternative notion of implementation (Appendix \ref{appendix:innerimplementation}), and provide a detailed, critical analysis of the interference argument of \cite{barandes2025} (Appendix \ref{appendix:Barandes-interference}).

\section{Trajectories of Probabilities vs. Probabilities on Trajectories}
\label{sec:twodescriptions}

We consider a physical system with a finite set of possible configurations
$C=\left\{ 1,2,\ldots,N\right\} $, labelled by positive integers.
A vector $\vec{\pp}=\left(p_{1},p_{2}\ldots,p_{N}\right)\in\left[0,1\right]^{N}$
with
\begin{equation}
\sum_{i=1}^{N}p_{i}=1
\end{equation}
is called an $N$-dimensional \emph{probability vector}. We denote
the set of all such vectors by $\mathcal{S}_{N}$. Obviously, a probability
distribution on $C$ is uniquely characterized by the $N$-dimensional
probability vector whose $i$th entry specifies the probability of
the $i$th configuration. Hence, the set of all probability distributions
on $C$ can be identified with $\mathcal{S}_{N}$.

Let $T\subseteq\mathbb{R}$, with $0\in T$, be a set of time indices.
We now formulate two distinct ways to describe the temporal evolution
of the system in probabilistic terms. In comparing these two descriptions,
the sets $C$ and $T$ are regarded as fixed.

\begin{defn}
\label{def:prob-dyn} A\emph{ probability vector trajectory} is a
map $T\rightarrow\mathcal{S}_{N}$, and a\emph{ probability dynamics}
is a map 
\begin{equation}
\bP:T\times\mathcal{S}_{N}\rightarrow\mathcal{S}_{N}
\end{equation}
such that $\bP\left(0,\cdot\right):\mathcal{S}_{N}\rightarrow\mathcal{S}_{N}$
is the identity function. The \emph{solution} of probability dynamics
$\bP$ with \emph{initial condition} $\vec{\pp}_{0}\in\mathcal{S}_{N}$
is the probability vector trajectory 
\begin{equation}
T\rightarrow\mathcal{S}_{N},\,\,\,\,\,t\mapsto\bP\left(t,\vec{\pp}_{0}\right)\label{eq:solutionofprobdyn}
\end{equation}
We will use the notation $\vec{\pp}\left(t\right)$ for a generic
probability vector trajectory, and $\bP_{t}$ for the map $\bP\left(t,\cdot\right):\mathcal{S}_{N}\rightarrow\mathcal{S}_{N}$. 
\end{defn}

To introduce the second conception, recall that
the standard notion of a stochastic process consists of a probability
space $\left(\Omega,\mathcal{F},\mu\right)$ together with a time-parametrized
family of random variables $\left\{ X_{t}:\Omega\rightarrow C\right\} _{t\in T}$.
In the special case where the sample space $\Omega$ is taken to be
the set of all temporal trajectories through configuration space,
the stochastic process is called \emph{canonical} (\cite{Rogers_Williams2000},
p.~122; \cite{medvegyev2007}, pp.~2--3). More precisely, for a canonical
stochastic process
\begin{itemize}
\item $\Omega$ is the set of all functions $T\rightarrow C$,
\item $\mathcal{F}$ is the cylinder $\sigma$-algebra of $\Omega$,\footnote{That is, $\mathcal{F}$ is the natural $\sigma$-algebra on $\Omega$,
generated by the so-called cylinder sets---subsets of $\Omega$ of
the form 
\begin{equation}
\left\{ \omega\in\Omega~|~\omega(t_{1})\in C_{1},\omega(t_{2})\in C_{2},\dots,\omega(t_{n})\in C_{n}\right\} \label{eq:cylinder-algebra}
\end{equation}
where $t_{1},t_{2},\dots,t_{n}\in T$, $C_{1},C_{2},...,C_{n}\subseteq C$,
and $n\in\mathbb{N}$.} 
\item $ X_{t}:\Omega\rightarrow C$  is defined as $X_{t}\left(\omega\right)\doteq\omega\left(t\right)$.
\end{itemize}
In what follows, $\Omega$, $\mathcal{F}$ and $X_{t}$ will always
denote these canonical objects. Since for a given choice of sets $C$
and $T$ these three components are fixed, we can identify a canonical
stochastic process as follows.

\begin{defn}
\label{def:canonical-stoch-process}A \emph{canonical stochastic
process} is a probability measure $\mu$ on the measurable space $\left(\Omega,\mathcal{F}\right)$.
\end{defn}

\noindent For our present purposes it will be sufficient to consider only \emph{canonical}
stochastic processes.\footnote{Every stochastic process naturally gives rise to a canonical
one, but not vice versa. A generic stochastic process therefore contains additional structure that the canonical one does not.  To see this, consider a generic stochastic process consisting
of probability space $\left(\tilde{\Omega},\tilde{\mathcal{F}},\tilde{\mu}\right)$
together with a family of random variables $\left\{ \tilde{X}_{t}:\tilde{\Omega}\rightarrow C\right\} _{t\in T}$.
This family defines a single (measurable) function
\begin{equation}
\tilde{\mathcal{X}}:\tilde{\Omega}\rightarrow\Omega,\,\,\,\,\,\left(\tilde{\mathcal{X}}\left(\tilde{\omega}\right)\right)\left(t\right)\doteq\tilde{X}_{t}\left(\tilde{\omega}\right)
\end{equation}
which simply repackages the family $\left\{ \tilde{X}_{t}\right\} $.
What $\tilde{\mathcal{X}}$ does is specifying how each random event $\tilde{\omega}\in\tilde{\Omega}$
determines a full temporal trajectory of the system through its configuration
space. This map then induces a probability distribution over the set
$\Omega$ of possible trajectories---thus defining a canonical stochastic
process---by setting $\mu\doteq\tilde{\mu}\circ\tilde{\mathcal{X}}^{-1}$,
where $\tilde{\mathcal{X}}^{-1}$ denotes the preimage map associated with $\tilde{\mathcal{X}}$. From the canonical process $\mu$ alone, however, one cannot recover the original stochastic process from which it was constructed.
For details, see e.g. \cite{Rogers_Williams2000}, p.~122; \cite{medvegyev2007},
pp.~2--3. \\ Appendix \ref{appendix:innerimplementation} briefly addresses the representation of information encoded by a probability dynamics with a non-canonical stochastic process.} Accordingly, we will omit the adjective ``canonical'' and simply
refer to stochastic processes in the sense of Definition~\ref{def:canonical-stoch-process}.

While stochastic processes are well known and extensively studied
in the mathematical literature, concepts akin to what we have defined
as probability dynamics have only recently appeared in the physics
literature. In quantum foundations, one often encounters similar notions
under labels such as ``dynamical map'' (\cite{breuer2016}), ``dynamical
semigroup'' (\cite{breuerpetruccione}), ``stochastic map'' (\cite{spekkens2019}),
``transition map'' (\cite{barandes2025}), or ``propagator'' (\cite{vacchini2012}).
However, the standard characterizations of these notions frequently
conflate features that belong separately to probability dynamics and
to stochastic processes. To clarify this, in the remainder of this
section we begin a broad comparison of these two conceptions.

One reason it may be tempting to associate a probability dynamics
with a stochastic process is that both express law-like relationships.
A stochastic process is a specification of a probability distribution
on the set of possible temporal trajectories through configuration
space. In this sense, a stochastic process provides what can be interpreted
as a \emph{probabilistic law about the temporal evolution of configurations}.
By contrast, a probability dynamics can be thought to specify admissible
probability vector trajectories corresponding to every possible initial
condition. In this sense, a probability dynamics can be interpreted
as a \emph{deterministic law about the temporal evolution of probabilities
of configurations}. Put differently: what a stochastic process supplies
is a single time-independent distribution over time-dependent
configurations (probability on trajectories); whereas what a probability
dynamics supplies are multiple time-dependent distributions over time-independent configurations (trajectories
of probabilities).

Crucially, the two structures contain different types of probabilistic
information. On the one hand, in the context of a stochastic process one can speak
about the probabilities of conjunctions of events across different
moments of time. Indeed, define 
\begin{equation}
E_{i}\left(t\right)\doteq\left\{ \omega\in\Omega~|~X_{t}\left(\omega\right)=i\right\} \label{eq:event}
\end{equation}
as the event that at time $t$ the system occupies the $i$th configuration.
Then, one can express the probabilities of conjunctions of such events
as 
\begin{gather}
\mm\left(E_{i_{1}}\left(t_{1}\right)\wedge E_{i_{2}}\left(t_{2}\right)\wedge...\wedge E_{i_{n}}\left(t_{n}\right)\right)\\
=\mm\left(\left\{ \omega\in\Omega~|~X_{t_{1}}\left(\omega\right)=i_{1},X_{t_{2}}\left(\omega\right)=i_{2},\dots,X_{t_{n}}\left(\omega\right)=i_{n}\right\} \right)\label{eq:prob-conjunction-2}
\end{gather}
By contrast, in the context of a probability dynamics no probability
is assigned to such conjunctions. This is because the sample space
$C$, on which the probability distributions $\vec{\pp}\left(t\right)$
are defined, only includes the timeless configurations $1,2,\dots,N$;
it includes no events of the form ``a configuration occurs at a particular
time.'' In other words, in the context of a probability dynamics,
no probability is assigned to temporal trajectories in $C$. Yet it is precisely the assignment of probabilities to temporal trajectories that allows one to determine the probabilities of conjunctions of events at different times, in the manner of \eqref{eq:prob-conjunction-2}.

An important consequence of this is the following. In the language
of a stochastic process we can define conditional probabilities across
different moments of time using Bayes' rule.\footnote{We assume, throughout the paper, that conditional probabilities are defined by Bayes' rule. In principle, one could introduce both conditional and unconditional probabilities as independent, primitive notions, but without connecting conditional probabilities with unconditional probabilities via Bayes' rule it is not possible to derive either the law of total probability or Bayes' theorem (however, the argument for linearity that this paper challenges (see Section \ref{sec:Linearity}) invokes the law of total probability as a mathematical theorem.)} For example: 
\begin{equation}
\mm\left(E_{i}\left(t\right)|E_{j}\left(0\right)\right)\doteq\frac{\mm\left(E_{i}\left(t\right)\wedge E_{j}\left(0\right)\right)}{\mm\left(E_{j}\left(0\right)\right)}\label{eq:bayes-rule}
\end{equation}
However, such multi-time conditional probabilities are not well-defined
in the context of a probability dynamics. This is because the definition
of multi-time conditional probability involves the probability of
a conjunction of events at different times, such as $\mm\left(E_{i}\left(t\right)\wedge E_{j}\left(0\right)\right)$
in \eqref{eq:bayes-rule}. But, as we have seen, this kind of joint
probability is not defined in a probability dynamics.

On the other hand, in the context of a probability dynamics one can
speak about a functional dependence of the probability of a configuration
at a certain moment of time on the probabilities of configurations
at an initial moment of time. Indeed, the $i$th component of $\bP_{t}\left(\vec{\pp}_{0}\right)$
expresses the probability of the $i$th configuration at time $t$
as a function of the initial $\vec{\pp}_{0}$ probabilities of configurations.
By contrast, even though the probability vector $\vec{\mm}(t)$ defined
as 
\begin{align}
\vec{\mm}\left(t\right) & \doteq\left(\mm\left(E_{1}\left(t\right)\right),\mm\left(E_{2}\left(t\right)\right),...,\mm\left(E_{N}\left(t\right)\right)\right)\label{eq:mu-vector}
\end{align}
expresses the probabilities of configurations at time $t$, including
$t=0$, a single stochastic process does not contain information
about the functional dependency of how probabilities $\vec{\mm}(t)$
would differ in case initial probabilities $\vec{\mm}(0)$ were different.

What this means is that if one simultaneously wants to speak about
the functional relationship between probabilities at different times
and multi-time conditional probabilities, one has to relate and integrate
the two probabilistic descriptions. There is a natural way to do this.\footnote{Although the notion of implementation defined here is quite natural,
one may introduce other notions which capture the idea that a stochastic
process can reproduce a probability dynamics; see Appendix \ref{appendix:innerimplementation}
for details.}

\begin{defn}
\label{def:implementation}(i) Let ${\cal M}$ denote the set of
all probability measures on $\left(\Omega,\mathcal{F}\right)$. A
\emph{stochastic process family} is a map
\[
\bM:\mathcal{S}_{N}\rightarrow{\cal M},\,\,\,\,\,\vec{\pp}_{0}\mapsto\mm_{\vec{\pp}_{0}}
\]
such that $\vec{\mm}_{\vec{\pp}_{0}}\left(0\right)=\vec{\pp}_{0}$
for all $\vec{\pp}_{0}\in\mathcal{S}_{N}$.

(ii) We say that a \emph{stochastic process $\mm$ implements a probability
vector trajectory} $\vec{\pp}\left(t\right)$ iff
\begin{equation}
\vec{\mm}\left(t\right)=\vec{\pp}\left(t\right)
\end{equation}
for all $t\in T$. A \emph{stochastic process family $\bM$ implements
probability dynamics $\bP$} iff for each $\vec{\pp}_{0}\in\mathcal{S}_{N}$,
the stochastic process $\mm_{\vec{\pp}_{0}}$ implements the solution
of $\bP$ with initial condition $\vec{\pp}_{0}\in\mathcal{S}_{N}$.
That is,
\begin{equation}
\vec{\mm}_{\vec{\pp}_{0}}\left(t\right)=\bP\left(t,\vec{\pp}_{0}\right)
\end{equation}
for all $t\in T$ and $\vec{\pp}_{0}\in\mathcal{S}_{N}$.
\end{defn}

A stochastic process family is a mathematical structure that extends
the expressive power of both a single stochastic process and the probability
dynamics the family implements. It is richer than a single stochastic process
because it specifies one for every initial probability distribution
$\vec{\pp}_{0}\in\mathcal{S}_{N}$. This allows expressing how probabilities
of configurations at a given time depend on probabilities of configurations
at the initial time. A stochastic process family also enriches the
structure of the probability dynamics it implements, since each solution
of the dynamics---a probability vector trajectory that can only encode
the one-time probabilities $\vec{\pp}\left(t\right)$---is extended
to a full stochastic process, which can specify all multi-time joint
probabilities. This, in turn, allows one to express multi-time conditional
probabilities. Note that the different stochastic processes $\mm_{\vec{\pp}_{0}}$
implementing different solutions of a probability dynamics \emph{need not
assign the same values to multi-time conditional probabilities}. This
observation will play a crucial role in our analysis.

To ensure that the unification of the two probabilistic frameworks
achieved by Definition~\ref{def:implementation} is fully general,
we first need to establish that every probability dynamics can, in
principle, be extended to a stochastic process family. This is guaranteed
by the following corollary of Proposition~\ref{prop:Markovian-implementation}
of Section~\ref{sec:relating-Markovianity}:
\begin{prop}
\label{prop:implementsalways} Every probability dynamics is implemented
by a stochastic process family.
\end{prop}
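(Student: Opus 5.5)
The plan is to build, for each initial condition $\vec{\pp}_{0}\in\mathcal{S}_{N}$, a single stochastic process $\mm_{\vec{\pp}_{0}}$ whose one-time marginals are the vectors $\bP(t,\vec{\pp}_{0})$, $t\in T$. The map $\vec{\pp}_{0}\mapsto\mm_{\vec{\pp}_{0}}$ is then the required family $\bM$. Since the paper says this follows from Proposition~\ref{prop:Markovian-implementation}, the natural construction is Markovian. The simplest version uses the product (independent) measure. For a fixed $\vec{\pp}_{0}$ and any finite set of times $t_{1}<\dots<t_{n}$ in $T$, set
\begin{equation*}
\mm_{\vec{\pp}_{0}}\left(E_{i_{1}}(t_{1})\wedge\dots\wedge E_{i_{n}}(t_{n})\right)\doteq\prod_{k=1}^{n}\left[\bP(t_{k},\vec{\pp}_{0})\right]_{i_{k}}.
\end{equation*}
Under this definition the configurations at different times are independent, with time-$t$ distribution $\bP(t,\vec{\pp}_{0})$. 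Such a process is trivially Markovian, since conditioning on past values does nothing.

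The key step is showing these finite-dimensional distributions extend to a probability measure on $(\Omega,\mathcal{F})$. First, each assignment is a probability distribution on $C^{n}$, because it is a product of probability vectors. Second, the family is consistent: summing over $i_{k}$ removes the factor for $t_{k}$, since $\sum_{i}[\bP(t_{k},\vec{\pp}_{0})]_{i}=1$, and this gives exactly the distribution for the remaining times. Because $C$ is finite, hence a Polish space with its discrete topology, the Kolmogorov extension theorem applies to the arbitrary index set $T\subseteq\mathbb{R}$. It yields a unique measure $\mm_{\vec{\pp}_{0}}$ on the cylinder $\sigma$-algebra with these marginals. Equivalently, $\mm_{\vec{\pp}_{0}}$ is the infinite product measure $\bigotimes_{t\in T}\bP(t,\vec{\pp}_{0})$ on $C^{T}$, which exists without any topological hypotheses.

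The implementation conditions can then be checked directly. Taking $n=1$ gives $\vec{\mm}_{\vec{\pp}_{0}}(t)=\bP(t,\vec{\pp}_{0})$ for every $t\in T$. In particular, $\vec{\mm}_{\vec{\pp}_{0}}(0)=\bP(0,\vec{\pp}_{0})=\vec{\pp}_{0}$, because $\bP_{0}$ is the identity by Definition~\ref{def:prob-dyn}. So $\bM$ is a stochastic process family in the sense of Definition~\ref{def:implementation}(i), and it implements $\bP$ in the sense of (ii).

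No regularity of $\bP$ in $\vec{\pp}_{0}$ or in $t$ is needed. Definition~\ref{def:implementation} asks only for a map $\mathcal{S}_{N}\rightarrow\mathcal{M}$, with no measurability requirement. The only potentially delicate point is the existence of the measure on $C^{T}$ for uncountable $T$, and this is handled by the standard existence of infinite product probability measures, or by Kolmogorov's theorem. If one prefers a less degenerate Markovian implementation in the spirit of Proposition~\ref{prop:Markovian-implementation}, one can replace the independent factors with the transition kernels $[\bP(t_{k+1},\vec{\pp}_{0})]_{i_{k+1}}$, which do not depend on $i_{k}$. This is the same construction written in Markov form, and the consistency check is unchanged.
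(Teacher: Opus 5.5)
Your proposal is correct and takes essentially the paper's route: the paper obtains this proposition as a corollary of Proposition~\ref{prop:Markovian-implementation}(a), whose proof uses the same product-of-marginals finite-dimensional distributions, checks their consistency, and applies Kolmogorov's extension theorem to each solution $t\mapsto\bP(t,\vec{\pp}_{0})$. Your explicit check that $\vec{\mm}_{\vec{\pp}_{0}}(0)=\vec{\pp}_{0}$ because $\bP_{0}$ is the identity, which the paper leaves implicit, is a useful addition.
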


\section{Understanding Linearity}\label{sec:Linearity}

$\mathcal{S}_{N}$ is the probability simplex in $\mathbb{R}^{N}$,
and the function $\bP_{t}$, for any $t\in T$, maps $\mathcal{S}_{N}$
into itself. Any $\mathcal{S}_{N}\rightarrow\mathcal{S}_{N}$ map
that preserves convex combinations extends uniquely to an $\mathbb{R}^{N}\rightarrow\mathbb{R}^{N}$
linear map.\footnote{Let $f:\mathcal{S}_{N}\rightarrow\mathcal{S}_{N}$ be a convex combination
preserving function. Define the following map: 
\begin{equation}
F:\mathbb{R}^{N}\rightarrow\mathbb{R}^{N},\,\,\,\,\left(x_{1},x_{2},...,x_{N}\right)\mapsto\sum_{i=1}^{N}x_{i}f\left(\vec{e}_{i}\right)
\end{equation}
where $\vec{e}_{i}$ is the $i$th standard basis vector in $\mathbb{R}^{N}$
(and also one of the $N$ vertices of $\mathcal{S}_{N}$). Clearly,
$F$ is linear and extends $f$. Moreover, any linear map that extends
$f$ must be equal to $F$, since a linear map is uniquely determined
by its values on a basis---here fixed by $f\left(\vec{e}_{1}\right),f\left(\vec{e}_{2}\right),...,f\left(\vec{e}_{N}\right)$.\label{fn:extension}} Hence, if $\bP_{t}$, for a given $t$, preserves convex combinations,
then it can be represented as multiplication by a suitable $N\times N$
matrix. Since, conversely, if $\bP_{t}$ can be represented as multiplication
by an $N\times N$ matrix then it preserves convex combinations, we
may define linearity of a probability dynamics as follows:
\begin{defn}
\label{def:linear-prob-dyn} We say that a probability dynamics $\bP$ is \emph{linear} iff $\bP_{t}$
preserves convex combinations for all $t\in T$. 
\end{defn}

Definition~\ref{def:prob-dyn} does not require a probability dynamics
to be linear by default, and we see no mathematical justification to impose such a requirement in general. Indeed, we showed an example of a physically reasonable
scenario in the Introduction for which the probability dynamics is nonlinear.

This, of course, stands in tension with the widely held assumption in the literature that the time evolution of probabilities is linear. To illustrate how this view is typically presented, we provide a sample of quotations from the literature (with minor adjustments to notation):\footnote{The earliest occurrence of this assumption that we could identify in the research literature—though not as explicitly formulated as in the following quotations—appears in \cite{hanggi1977}.}
\begin{quote}
``The one-time probability density $p\left(\boldsymbol{x}_{t}\right)$
of a classical statistical process is related to the initial probability
density via $p\left(\boldsymbol{x}_{t}\right)=\int d\mu\left(\boldsymbol{x}_{t_{0}}\right)p\left(\boldsymbol{x}_{t}|\boldsymbol{x}_{t_{0}}\right)p\left(\boldsymbol{x}_{t_{0}}\right)$.
{[}...{]} {[}T{]}he classical conditional probability density $p\left(\boldsymbol{x}_{t}|\boldsymbol{x}_{t_{0}}\right)$,
also called the transition matrix, {[}...{]} is always independent
of the initial probability density $p\left(\boldsymbol{x}_{t_{0}}\right)$
{[}...{]}. To emphasize this, we will write it in the form
\begin{equation}
T\left(\boldsymbol{x}_{t},\boldsymbol{x}_{t_{0}}\right)\doteq p\left(\boldsymbol{x}_{t}|\boldsymbol{x}_{t_{0}}\right)
\end{equation}

for any $t\geq t_{0}$. {[}...{]} For a discrete-valued stochastic
process, $T$ is indeed a matrix, depending on $t$, with elements
$T\left(\boldsymbol{x}_{t},\boldsymbol{x}_{t_{0}}\right)$.'' (\cite{li2018},
p.~41)

\medskip{}

``In this section, we shall focus on the evolution of one-point probabilities
$p\left(x,t\right)$ during a stochastic process. Thus, consider a linear map $T$ that connects the probability of a random variable $X$, at different times $t_{0}$ and $t_{1}$:
\begin{equation}
p\left(x_{1},t_{1}\right)=\underset{x_{0}}{\sum}T\left(x_{1},t_{1}|x_{0},t_{0}\right)p\left(x_{0},t_{0}\right)
\end{equation}
{[}...{]} Consider $t=t_{0}$ to be the initial time of some {[}...{]}
stochastic process $\left\{ X\left(t\right),\,t\in I\right\} $. From
the definition of conditional probability,
\begin{equation}
\begin{split}p\left(x_{2},t_{2};x_{0},t_{0}\right) & =p\left(x_{2},t_{2}|x_{0},t_{0}\right)p\left(x_{0},t_{0}\right)\\
\Rightarrow p\left(x_{2},t_{2}\right) & =\underset{x_{0}}{\sum}p\left(x_{2},t_{2}|x_{0},t_{0}\right)p\left(x_{0},t_{0}\right)
\end{split}
\end{equation}
and therefore $T\left(x_{2},t_{2}|x_{0},t_{0}\right)=p\left(x_{2},t_{2}|x_{0},t_{0}\right)$
for every $t_{2}$.'' (\cite{rivas2014}, p.~3)

\medskip{}

``An \emph{indivisible stochastic process} will be defined as a model
consisting of two basic ingredients: a \emph{configuration space}
$C$; and a dynamical law in the form of a family of \emph{transition
maps} $\Gamma_{t\leftarrow t_{0}}$ that acts linearly on probability
distributions over $C$ at times $t_{0}$ from some index set, called
\emph{conditioning times}, to yield corresponding probability distributions
over $C$ at times $t$ from some possibly distinct index set, called
\emph{target times}.

{[}...{]} the system's \emph{standalone probabilities} at a conditioning
time $t_{0}$ can be denoted by $p_{j}(t_{0})$, the standalone probabilities
at a target time $t$ can be denoted by $p_{i}(t)$, and the transition
maps $\Gamma_{t\leftarrow t_{0}}$ consist of \emph{conditional probabilities}
\begin{equation}
\Gamma_{ij}\left(t\leftarrow t_{0}\right)\equiv\pp\left(i,t|j,t_{0}\right)
\end{equation}
each of which is the conditional probability for the system to be
in its $i$th configuration at the target time $t$, given that the
system is in its $j$th configuration at the conditioning time $t_{0}$.
{[}...{]} Then, from the law of total probability, or marginalization,
$p_{i}\left(t\right)=\sum_{j=1}^{N}p\left(i,t|j,t_{0}\right)p_{j}\left(t_{0}\right)$,
one has the \emph{linear} relationship 
\begin{equation}
\pp_{i}\left(t\right)=\sum_{j=1}^{N}\Gamma_{ij}\left(t\leftarrow t_{0}\right)\pp_{j}\left(t_{0}\right)\label{eq:fake-marginalization}
\end{equation}
where the standalone probabilities $\pp_{j}\left(t_{0}\right)$ at
the conditioning time $t_{0}$ are assumed to be arbitrary and contingent,
and can therefore be freely adjusted without altering the conditional
probabilities $\Gamma_{ij}\left(t\leftarrow t_{0}\right)$, which
are regarded as fixed features of the model.'' (\cite{barandes2025},
pp.~3--4, emphasis in original) 
\end{quote}
In the account of \cite{li2018}, starting from the law of total probability
and the assumption that two-time conditional probabilities are independent
of the initial one-time probabilities, it is concluded that one-time
probabilities have to be related to the initial probabilities by matrix
multiplication---that is, linearly. In the exposition of \cite{rivas2014},
by contrast, linearity is taken as an assumption, and from this together
with the law of total probability (and the tacit assumption that two-time
conditionals are independent of the initial one-time probabilities)
it is derived that the matrix representing the linear evolution map
must consist of two-time conditional probabilities. The presentation
of \cite{barandes2025} combines elements of the earlier two. However,
all of these arguments depend on premises that do not hold in general---as becomes
clear once they are formulated in the precise probabilistic framework
introduced in the preceding section, which allows one to speak simultaneously
about the functional relationship between probabilities at different
times and multi-time conditional probabilities.

To see this, suppose that the stochastic process family $\bM: \vec{\pp}_0 \mapsto \vec{\mm}_{\vec{\pp}_0}$ implements
probability dynamics $\bP$, that is, for all $t\in T$ and $\vec{\pp}_{0}\in\mathcal{S}_{N}$,
\begin{equation}
\bP\left(t,\vec{\pp}_{0}\right)=\vec{\mm}_{\vec{\pp}_{0}}\left(t\right)\label{eq:implement-p0}
\end{equation}
Recall the definition of events $E_{i}\left(t\right)$ in \eqref{eq:event}.
For all fixed $t\in T$, $\left\{ E_{i}\left(t\right)\right\} _{i=1}^{N}$
forms a partition of $\Omega$. Hence, in any given probability space
$\left(\Omega,\mathcal{F},\mm_{\vec{\pp}_{0}}\right)$, one can apply
the law of total probability: 
\begin{align}
\mm_{\vec{\pp}_{0}}\left(E_{i}\left(t\right)\right) & =\sum_{j=1}^{N}\mm_{\vec{\pp}_{0}}\left(E_{i}\left(t\right)|E_{j}\left(0\right)\right)\mm_{\vec{\pp}_{0}}\left(E_{j}\left(0\right)\right)\label{eq:law-of-total-prob}
\end{align}
if $\pp_{01},\pp_{02},...,\pp_{0N}>0$. This can be expressed in matrix
form as 
\begin{align}
\vec{\mm}_{\vec{\pp}_{0}}\left(t\right) & =\mM_{\vec{\pp}_{0}}\left(t\right)\vec{\mm}_{\vec{\pp}_{0}}\left(0\right)\label{eq:law-of-total-prob-matrix-p0}
\end{align}
with 
\begin{align}
\mM_{\vec{\pp}_{0}}\left(t\right) & \doteq\left(\begin{array}{cccc}
\mm_{\vec{\pp}_{0}}\left(E_{1}\left(t\right)|E_{1}\left(0\right)\right) & \mm_{\vec{\pp}_{0}}\left(E_{1}\left(t\right)|E_{2}\left(0\right)\right) & \ldots & \mm_{\vec{\pp}_{0}}\left(E_{1}\left(t\right)|E_{N}\left(0\right)\right)\\
\mm_{\vec{\pp}_{0}}\left(E_{2}\left(t\right)|E_{1}\left(0\right)\right) & \mm_{\vec{\pp}_{0}}\left(E_{2}\left(t\right)|E_{2}\left(0\right)\right) & \ldots & \mm_{\vec{\pp}_{0}}\left(E_{2}\left(t\right)|E_{N}\left(0\right)\right)\\
\vdots & \vdots &  & \vdots\\
\mm_{\vec{\pp}_{0}}\left(E_{N}\left(t\right)|E_{1}\left(0\right)\right) & \mm_{\vec{\pp}_{0}}\left(E_{N}\left(t\right)|E_{2}\left(0\right)\right) & \ldots & \mm_{\vec{\pp}_{0}}\left(E_{N}\left(t\right)|E_{N}\left(0\right)\right)
\end{array}\right)\label{eq:delta-matrix-p0}
\end{align}
\eqref{eq:implement-p0}, \eqref{eq:law-of-total-prob-matrix-p0}
and $\bP\left(0,\vec{\pp}_{0}\right)=\vec{\pp}_{0}$ together imply 

\begin{equation}
\bP\left(t,\vec{\pp}_{0}\right)=\mM_{\vec{\pp}_{0}}\left(t\right)\vec{\pp}_{0}\label{eq:fake-lin-0}
\end{equation}
for all $t\in T$ and for all $\vec{\pp}_{0}\in\mathcal{S}_{N}$,
$\pp_{01},\pp_{02},...,\pp_{0N}>0$. Adopting the more expressive
notation $\vec{\pp}\left(t\right)\doteq\bP\left(t,\vec{\pp}_{0}\right)$,
we can write this as
\begin{equation}
\vec{\pp}\left(t\right)=\mM_{\vec{\pp}_{0}}\left(t\right)\vec{\pp}_{0}\label{eq:fake-lin}
\end{equation}
This equation is the precise formulation of \eqref{eq:fake-marginalization}
as quoted earlier from \cite{barandes2025}. Without the subscript
on $\mM_{\vec{\pp}_{0}}\left(t\right)$, it might give the impression
that the relationship between $\vec{\pp}\left(t\right)$ and $\vec{\pp}_{0}$
must necessarily be linear, since it is represented as matrix multiplication. But this is not the case. Contrary to the assertion in \cite{barandes2025}, the standalone probabilities $\vec{\pp}_{0}$ cannot be assumed to be arbitrary and cannot be adjusted freely without altering the conditional probabilities.
As indicated by the subscript $\vec{\pp}_{0}$, the conditional probabilities
$\mm_{\vec{\pp}_{0}}\left(E_{i}\left(t\right)|E_{j}\left(0\right)\right)$,
and hence the matrix $\mM_{\vec{\pp}_{0}}\left(t\right)$, are dependent
on $\vec{\pp}_{0}$. To see this more explicitly, recall that 
\begin{equation}
\mm_{\vec{\pp}_{0}}\left(E_{i}\left(t\right)|E_{j}\left(0\right)\right)=\frac{\mm_{\vec{\pp}_{0}}\left(E_{i}\left(t\right)\wedge E_{j}\left(0\right)\right)}{\mm_{\vec{\pp}_{0}}\left(E_{j}\left(0\right)\right)}\label{eq:dependence-on-p0}
\end{equation}
Here the denominator is simply the $j$th entry of $\vec{\pp}_{0}$;
and the numerator also depends on $\vec{\pp}_{0}$, since the measure
$\mm_{\vec{\pp}_{0}}$ that implements the particular solution of
$\bP$ itself depends on the corresponding initial condition $\vec{\pp}_{0}$.
Therefore, adjusting the initial probabilities $\vec{\pp}_{0}$ generally
\emph{does} alter the conditional probabilities that make up the entries
of $\mM_{\vec{\pp}_{0}}\left(t\right)$, and hence the map $\vec{\pp}_{0}\mapsto\mM_{\vec{\pp}_{0}}\left(t\right)\vec{\pp}_{0}$
is \emph{not} linear.

More fundamentally, note the following. Given a single stochastic
process $\mm$, one can write down the law of total probability, analogously
to \eqref{eq:law-of-total-prob-matrix-p0}, as 
\begin{align}
\vec{\mm}\left(t\right) & =\mM\left(t\right)\vec{\mm}\left(0\right)\label{eq:law-of-total-prob-matrix}
\end{align}
with 
\begin{equation}
\mM\left(t\right)\doteq\left(\begin{array}{cccc}
\mm\left(E_{1}\left(t\right)|E_{1}\left(0\right)\right) & \mm\left(E_{1}\left(t\right)|E_{2}\left(0\right)\right) & \ldots & \mm\left(E_{1}\left(t\right)|E_{N}\left(0\right)\right)\\
\mm\left(E_{2}\left(t\right)|E_{1}\left(0\right)\right) & \mm\left(E_{2}\left(t\right)|E_{2}\left(0\right)\right) & \ldots & \mm\left(E_{2}\left(t\right)|E_{N}\left(0\right)\right)\\
\vdots & \vdots &  & \vdots\\
\mm\left(E_{N}\left(t\right)|E_{1}\left(0\right)\right) & \mm\left(E_{N}\left(t\right)|E_{2}\left(0\right)\right) & \ldots & \mm\left(E_{N}\left(t\right)|E_{N}\left(0\right)\right)
\end{array}\right)\label{eq:delta-matrix}
\end{equation}
It must be emphasized that equation \eqref{eq:law-of-total-prob-matrix},
by itself, does \emph{not} encode any functional/law-like relationship
(linear or otherwise) between $\vec{\mm}\left(t\right)$ and $\vec{\mm}\left(0\right)$.
The matrix $\mM\left(t\right)$ is merely a characterization of the
particular measure $\mm$ on $\left(\Omega,\mathcal{F}\right)$ and
says nothing about ``what happens to $\vec{\mm}\left(t\right)$ if
$\vec{\mm}\left(0\right)$ is modified.'' In fact, this latter question
is meaningless in the context of a \emph{single} stochastic process.
If one aims to describe a functional/law-like relationship between probabilities at different times—assuming such a relationship exists at all for the system in question—$\mM\left(t\right)$, a collection of conditional
probabilities, is not the right sort of object. The right sort
of object for this role is instead what we have defined as probability
dynamics $\bP$, or a stochastic process family $\bM$ that implements
$\bP$. There is no probability-theoretic reason to expect that such
an object yields a linear relationship between probabilities at different
times.

\section{Transition Matrices and Transition Probabilities}
\label{sec:Transition}

One can, of course, consider the special case when a probability dynamics
\emph{is} linear. In this case, there exists a unique family $\left\{ \mP\left(t\right)\right\} _{t\in T}$
of $N\times N$ matrices such that 
\begin{equation}
\bP_{t}\left(\vec{\pp}_{0}\right)=\mP\left(t\right)\vec{\pp}_{0}\label{eq:lin-prob-dyn}
\end{equation}
for all $\vec{\pp}_{0}\in\mathcal{S}_{N}$ and $t\in T$. Suppose
further that $\bP$ is implemented by the stochastic process family
$\bM$. Then the system is described by two kinds of matrices: $\mP\left(t\right)$,
which encodes the probability dynamics, and $\mM_{\vec{\pp}_{0}}\left(t\right)$,
defined in \eqref{eq:delta-matrix-p0}, which characterizes the stochastic
process implementing a given solution of the probability dynamics.
As witnessed by the quotations in the preceding section, these two types of matrices are often treated interchangeably in the literature under the label ``transition matrix'' (see also \cite{gillespie1994,gillespie2000,gillespie2001,spekkens2019})
But one has to ask: is there any relation between $\mP\left(t\right)$
and $\mM_{\vec{\pp}_{0}}\left(t\right)$?

Formally, both of them are \emph{stochastic matrices}, meaning that their
columns are probability vectors. However, the reason for this is different
in the two cases. For $\mM_{\vec{\pp}_{0}}\left(t\right)$, the $j$th column 
consists of entries 
\begin{equation}
\mm_{\vec{\pp}_{0}}\left(E_{1}\left(t\right)|E_{j}\left(0\right)\right),\mm_{\vec{\pp}_{0}}\left(E_{2}\left(t\right)|E_{j}\left(0\right)\right),...,\mm_{\vec{\pp}_{0}}\left(E_{N}\left(t\right)|E_{j}\left(0\right)\right)
\end{equation}
which, for $\pp_{0j}\neq0$, are numbers between 0 and 1, and sum
to $1$. This follows from the fact $\mm_{\vec{\pp}_{0}}\left(\cdot|E_{j}\left(0\right)\right)$
is a probability measure and $\left\{ E_{i}\left(t\right)\right\} _{i=1}^{N}$
is a partition. For $\mP\left(t\right)$, the $j$th column 
is $\bP_{t}\left(\vec{e}_{j}\right)$, where $\vec{e}_{j}$ is the
$j$th standard basis vector of $\mathbb{R}^{N}$. Since $\vec{e}_{j}$
is a probability vector and $\bP_{t}$ maps probability vectors to
probability vectors, $\bP_{t}\left(\vec{e}_{j}\right)$ is also a
probability vector.

What is the connection between these two types of stochastic matrices?
From the fact that the stochastic process family $\bM$ implements $\bP$,
equations \eqref{eq:fake-lin} and \eqref{eq:lin-prob-dyn} together yield
\begin{align}
\mP\left(t\right)\vec{\pp}_{0} & =\mM_{\vec{\pp}_{0}}\left(t\right)\vec{\pp}_{0}\label{eq:implement-lin}
\end{align}
for all $\vec{\pp}_{0}\in\mathcal{S}_{N}$, $\pp_{01},\pp_{02},...,\pp_{0N}>0$.
Crucially, however, this does \emph{not} entail that $\mM_{\vec{\pp}_{0}}\left(t\right)=\mP\left(t\right)$
for any $\vec{\pp}_{0}$. Since $\mM_{\vec{\pp}_{0}}\left(t\right)$
depends on $\vec{\pp}_{0}$,  equation \eqref{eq:implement-lin} merely  states that the two
matrices act identically on that specific vector $\vec{\pp}_{0}$. It does not assert that they coincide on all input vectors, which would be required for the matrices themselves to be equal. For a simple illustration of a situation in which  $\mP\left(t\right)$
and $\mM_{\vec{\pp}_{0}}\left(t\right)$ differ for every $\vec{\pp}_{0}$, despite the fact that equation~\eqref{eq:implement-lin} holds for all  $\vec{\pp}_{0}$,
consider the following example.
\begin{example}
\label{exa:different-Ms}Let $N=2$, $T=\left\{ 0,t\right\} $, and
consider a probability dynamics $\bP$ such that $\bP_{t}$ can be
represented by the matrix 
\begin{equation}
\mP\left(t\right)=\left(\begin{array}{cc}
0 & 1\\
1 & 0
\end{array}\right)
\end{equation}
flipping the probability vector $\vec{\pp}_{0}=\left(\pp_{0},1-\pp_{0}\right)$
into $\vec{\pp}\left(t\right)=\left(1-\pp_{0},\pp_{0}\right)$.

On the other hand, consider the following stochastic process family
$\bM$ implementing $\bP$:

\begin{equation}
\begin{aligned}
\mu_{\vec{\pp}_{0}}\!\left(E_{1}(0)\right) &= p_{0} 
&\qquad\qquad \mu_{\vec{\pp}_{0}}\!\left(E_{1}(t)\mid E_{1}(0)\right) &= 1-p_{0} \\
\mu_{\vec{\pp}_{0}}\!\left(E_{2}(0)\right) &= 1-p_{0} 
&\qquad\qquad \mu_{\vec{\pp}_{0}}\!\left(E_{1}(t)\mid E_{2}(0)\right) &= 1-p_{0} \\
\mu_{\vec{\pp}_{0}}\!\left(E_{1}(t)\right) &= 1-p_{0} 
&\qquad\qquad \mu_{\vec{\pp}_{0}}\!\left(E_{2}(t)\mid E_{1}(0)\right) &= p_{0} \\
\mu_{\vec{\pp}_{0}}\!\left(E_{2}(t)\right) &= p_{0} 
&\qquad\qquad \mu_{\vec{\pp}_{0}}\!\left(E_{2}(t)\mid E_{2}(0)\right) &= p_{0}
\end{aligned}
\end{equation}where $\vec{\pp}_{0}=\left(\pp_{0},1-\pp_{0}\right)$ with $\pp_{0}\in\left[0,1\right]$,
except that some of the conditional probabilities are undefined for
 $\pp_{0}\in\left\{ 0,1\right\}$. Accordingly, for $\pp_{0}\in\left(0,1\right)$ we
have
\begin{equation}
\mM_{\vec{\pp}_{0}}\left(t\right)=\left(\begin{array}{cc}
1-\pp_{0} & 1-\pp_{0}\\
\pp_{0} & \pp_{0}
\end{array}\right)
\end{equation}

Clearly, (i) both $\mP\left(t\right)$ and all the $\mM_{\vec{\pp}_{0}}\left(t\right)$-s
are stochastic matrices; (ii) $\mP\left(t\right)\vec{\pp}_{0}=\mM_{\vec{\pp}_{0}}\left(t\right)\vec{\pp}_{0}$
for all $\vec{\pp}_{0}=\left(\pp_{0},1-\pp_{0}\right)$ with $\pp_{0}\in\left(0,1\right)$;
but (iii) $\mM_{\vec{\pp}_{0}}\left(t\right)\neq\mP\left(t\right)$
for any $\vec{\pp}_{0}\in\mathcal{S}_{2}$.
\end{example}

Consequently, although $\mP\left(t\right)$ and $\mM_{\vec{\pp}_{0}}\left(t\right)$
are both stochastic matrices, they are generally not identical, and
so the entries of $\mP\left(t\right)$ cannot be identified with two-time
conditional probabilities (relative to the given stochastic process
family that implements $\bP$). Yet this identification is often made
in the literature, as is made particularly transparent in the quote
from \cite{rivas2014} in the preceding section, with further examples
found in \cite{barandes2025,gillespie1994,gillespie2000,gillespie2001,vacchini2012,spekkens2019,canturk2024}.

One way to locate the source of the problem is by using the following
notion.
\begin{defn}
\label{def:transition-const}We call a stochastic process family $\bM$
\emph{transition-constant} iff 
\begin{equation}
\mm_{\vec{\pp}_{0}}\left(E_{i}\left(t\right)|E_{j}\left(0\right)\right)=\mm_{\vec{q}_{0}}\left(E_{i}\left(t\right)|E_{j}\left(0\right)\right)\label{eq:transition-constant}
\end{equation}
holds for all $i,j\in C$, $t\in T$, and for all $\vec{\pp}_{0},\vec{q}_{0}\in\mathcal{S}_{N}$,
$\pp_{0j},q_{0j}\neq0$.\footnote{The notion of a transition-constant stochastic process {\em family} should
not be confused with that of a {\em single} stochastic process whose ``transition
probabilities'' or ``rates'' remain constant \emph{over time}. A concept
related to the latter is that of a \emph{time-homogeneous Markov  process},
introduced in Definition~\ref{def:time-homogeneous-Markov-process}
in Section~\ref{sec:Markovianity-sub}.}
\end{defn}

\noindent The matrices $\mM_{\vec{\pp}_{0}}\left(t\right)$ of a transition-constant
stochastic process family do not depend on $\vec{\pp}_{0}$, except
for the undefined entries $\left(\mM_{\vec{\pp}_{0}}\left(t\right)\right)_{ij}$
where $\pp_{0j}=0$. To avoid this minor technical complication, it
will be useful to define a common transition matrix as 
\begin{equation}
\mM\left(t\right)\doteq\mM_{\vec{\pp}_{0}}\left(t\right),\,\,\,\,\,\pp_{01},\pp_{02},...,\pp_{0N}>0\label{eq:common-M}
\end{equation}
for all $t\in T$. 

Notice that when a ``stochastic process'' is described ``as a model {[}...{]}
where the standalone probabilities $\vec{\pp}_{0}$ are assumed to
be arbitrary and contingent, and can therefore be freely adjusted
without altering the conditional probabilities $\mM_{ij}\left(t\right)$,
which are regarded as fixed features of the model,'' (\cite{barandes2025},
p.~3) the structure being defined corresponds not to a single stochastic process but rather to a transition-constant stochastic process family. In such a model, the standard assumptions regarding linearity and the equality of the two types of transition matrices—assumptions that are unwarranted in general—do in fact hold:

\begin{restatable}{prop}{transitionconstimplieslinearity}
\label{prop:transition-const-implies-linearity}Suppose that a probability
dynamics $\bP$ is implemented by a transition-constant stochastic
process family $\bM$. Then $\bP$ is linear and $\mP\left(t\right)=\mM\left(t\right)$
for all $t\in T$.
\end{restatable}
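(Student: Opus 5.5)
The plan is to show that $\bP_t$ agrees with multiplication by the common matrix $\mM(t)$ on the whole simplex, first on the interior and then on the boundary.

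\textbf{Interior.} Fix $t\in T$. For every $\vec{\pp}_{0}$ with all entries strictly positive, equation \eqref{eq:fake-lin-0} gives $\bP_t(\vec{\pp}_0)=\mM_{\vec{\pp}_0}(t)\vec{\pp}_0$. By transition-constancy and the definition \eqref{eq:common-M}, this equals $\mM(t)\vec{\pp}_0$. So $\bP_t$ coincides with the linear map $\vec{\pp}_0\mapsto\mM(t)\vec{\pp}_0$ on the interior of $\mathcal{S}_N$.

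\textbf{Boundary.} This is the main obstacle: continuity of $\bP_t$ is not assumed, so the result cannot be extended by a limit argument. Instead, for a boundary point $\vec{\pp}_0$ I would apply the law of total probability only over the configurations $j$ with $\pp_{0j}>0$. The events $E_j(0)$ with $\pp_{0j}=0$ are $\mm_{\vec{\pp}_0}$-null, so
\begin{equation}
\mm_{\vec{\pp}_0}\left(E_i(t)\right)=\sum_{j:\,\pp_{0j}>0}\mm_{\vec{\pp}_0}\left(E_i(t)|E_j(0)\right)\pp_{0j}.
\end{equation}
Transition-constancy, as stated in Definition~\ref{def:transition-const}, requires equality only for pairs $\vec{\pp}_0,\vec{q}_0$ with $\pp_{0j},q_{0j}\neq0$. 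Choosing $\vec{q}_0$ strictly positive therefore shows that each surviving conditional probability equals $\mM(t)_{ij}$. The omitted terms have $\pp_{0j}=0$, so they contribute nothing to $\sum_j \mM(t)_{ij}\pp_{0j}$. Together with the implementation condition $\bP_t(\vec{\pp}_0)=\vec{\mm}_{\vec{\pp}_0}(t)$, this gives $\bP_t(\vec{\pp}_0)=\mM(t)\vec{\pp}_0$ for every $\vec{\pp}_0\in\mathcal{S}_N$. The vertices $\vec{e}_j$ are covered as well, since only the $j$th term survives.

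\textbf{Conclusion.} Since $\bP_t$ is given by matrix multiplication on all of $\mathcal{S}_N$, it preserves convex combinations, so $\bP$ is linear by Definition~\ref{def:linear-prob-dyn}. Its representing matrix is unique: by footnote~\ref{fn:extension}, the $j$th column must be $\bP_t(\vec{e}_j)=\mM(t)\vec{e}_j$. Hence $\mP(t)=\mM(t)$ for all $t\in T$. For $t=0$ this is consistent, since $\mM(0)$ is the identity matrix.
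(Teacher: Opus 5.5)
Your proof is correct and follows essentially the same route as the paper's. You first use \eqref{eq:fake-lin-0} together with the common matrix on strictly positive $\vec{\pp}_0$. You then treat arbitrary $\vec{\pp}_0$ by applying the law of total probability only over the configurations with $\pp_{0j}>0$ and identifying those conditionals with $\mM_{ij}(t)$ via transition-constancy, and conclude linearity and $\mP(t)=\mM(t)$ from the resulting matrix representation. As in the paper, the boundary-case argument in fact works for every $\vec{\pp}_0$, so the separate interior step is redundant but harmless.
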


\noindent\emph{Sketch of proof.} By transition-constancy, the two-time conditionals $\mm_{\vec{\pp}_{0}}(E_i(t)\mid E_j(0))$ are independent of $\vec{\pp}_{0}$, so the law of total probability yields $\mm_{\vec{\pp}_{0}}(E_i(t))=\sum_{j=1}^{N}\mM_{ij}(t)\pp_{0j}$ for every initial $\vec{\pp}_{0}$. Hence $\bP(t,\vec{\pp}_{0})=\mM(t)\vec{\pp}_{0}$, so $\bP$ is linear and $\mP(t)=\mM(t)$. (Formal proof: Appendix \ref{proof:transition-const-implies-linearity}.)

Crucially, however, the converse is not true: a stochastic process
family implementing a linear probability dynamics need \emph{not}
be transition-constant. This is illustrated by Example~\ref{exa:different-Ms}.
What is true is the following existence claim:

\begin{restatable}{prop}{linearimplement}
\label{prop:linearimplement} Let $\bP$ be a linear probability dynamics
with $T\subseteq\mathbb{R}_{\geq0}$. There exists a transition-constant
stochastic process family $\bM$ that implements $\bP$. 
\end{restatable}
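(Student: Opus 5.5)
The plan is to build, for each initial condition $\vec{\pp}_{0}$, a stochastic process in which the configurations at all nonzero times are conditionally independent of one another given the configuration at time $0$, with the conditional law at time $t$ read off from the $j$th column of $\mP(t)$. Since $\bP$ is linear, each $\mP(t)$ is a well-defined stochastic matrix, and $\mP(0)$ is the identity because $\bP_0$ is. For each $\vec{\pp}_{0}\in\mathcal{S}_{N}$ I specify the finite-dimensional distributions on finite sets of times $0=t_0<t_1<\dots<t_n$ in $T$ (this is where $T\subseteq\mathbb{R}_{\geq0}$ is used, so that $0$ is the earliest time):
\begin{equation}
\mm_{\vec{\pp}_{0}}\left(E_{j}(0)\wedge E_{i_1}(t_1)\wedge\dots\wedge E_{i_n}(t_n)\right)\doteq \pp_{0j}\prod_{k=1}^{n}\mP_{i_k j}(t_k).
\end{equation}
For time sets not containing $0$, I define the distribution by summing this expression over $j$.

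The first step is to check that this family of finite-dimensional distributions is consistent. Marginalizing out a time $t_k>0$ removes one factor, because columns of $\mP(t_k)$ sum to $1$. Marginalizing out time $0$ is exactly how the distributions without $0$ were defined. If $t_k=0$ appears again among the other times, the factor $\mP(0)=I$ forces $i_k=j$, so including time $0$ twice is also consistent. The Kolmogorov extension theorem, applicable since $C$ is finite and $\mathcal{F}$ is the cylinder $\sigma$-algebra, then yields a unique probability measure $\mm_{\vec{\pp}_{0}}$ on $(\Omega,\mathcal{F})$. This is the only step requiring care, and it is routine; I expect the main obstacle to be a clean statement of the consistency conditions rather than any real difficulty. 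Alternatively, one could realize the process as a product measure on $C\times\prod_{t>0}C^{C}$ and push it forward, which avoids Kolmogorov bookkeeping by appealing to the existence of infinite product measures.

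The remaining steps are direct computations. Taking $n=0$ gives $\vec{\mm}_{\vec{\pp}_{0}}(0)=\vec{\pp}_{0}$, so $\bM$ is a stochastic process family. Taking $n=1$ and summing over $j$ gives
\begin{equation}
\mm_{\vec{\pp}_{0}}(E_i(t))=\sum_j \mP_{ij}(t)\pp_{0j}=\left(\mP(t)\vec{\pp}_{0}\right)_i=\bP_t(\vec{\pp}_{0})_i,
\end{equation}
so $\bM$ implements $\bP$. Finally, whenever $\pp_{0j}\neq0$,
\begin{equation}
\mm_{\vec{\pp}_{0}}\left(E_i(t)\mid E_j(0)\right)=\frac{\pp_{0j}\mP_{ij}(t)}{\pp_{0j}}=\mP_{ij}(t).
\end{equation}
This value is independent of $\vec{\pp}_{0}$, which is transition-constancy. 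As a consistency check, it also agrees with Proposition~\ref{prop:transition-const-implies-linearity}, which gives $\mM(t)=\mP(t)$.
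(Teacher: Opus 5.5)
Your proposal is correct and is essentially the paper's proof. The paper defines the same finite-dimensional distributions $\sum_{j\in C_0}\pp_{0j}\prod_k\bigl(\sum_{i\in C_k}\mP_{ij}(t_k)\bigr)$, in which the nonzero times are conditionally independent given the time-$0$ configuration, verifies Kolmogorov consistency, and reads off $\vec{\mm}_{\vec{\pp}_0}(t)=\mP(t)\vec{\pp}_0$ and $\mm_{\vec{\pp}_0}(E_i(t)\mid E_j(0))=\mP_{ij}(t)$ exactly as you do; your product-measure pushforward would be a valid alternative route to the same measure.
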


\noindent\emph{Sketch of proof.} For each initial distribution $\vec{\pp}_{0}$, construct an implementing stochastic process by first drawing the initial configuration $j$ with probability $\pp_{0j}$, and then (conditional on $j$) letting the random variables at all later times be independent with $\mm_{\vec{\pp}_{0}}(E_i(t)\mid E_j(0))=\mP_{ij}(t)$ for every $t>0$. These finite-dimensional probabilities are consistent, so Kolmogorov's extension theorem yields a measure on full trajectories whose one-time marginals reproduce $\vec{\mm}_{\vec{\pp}_{0}}(t)=\mP(t)\vec{\pp}_{0}$, and since the conditionals are exactly $\mP_{ij}(t)$ (independent of $\vec{\pp}_{0}$), the resulting implementing family is transition-constant. (Formal proof: Appendix \ref{proof:linearimplement}.)

What this proposition shows is that, given only the one-time
probabilities of a linear probability dynamics $\bP\left(t,\vec{\pp}_{0}\right)$, it is always possible,
at least in principle, to interpret the entries of the corresponding matrix $\mP\left(t\right)$
as two-time conditional probabilities---relative to a suitably
selected transition-constant stochastic process family that implements
$\bP$ and hence for which $\mP\left(t\right)=\mM\left(t\right)$.
Of course, the mere \emph{possibility} of such an interpretation does
not imply that $\mP\left(t\right)$ has any \emph{necessary/conceptual}
connection to multi-time conditional probabilities.

More broadly, the distinction between the two types of ``transition
matrices'' is tightly related to the distinction between two types
of ``transition probabilities'':
\begin{itemize}
\item[1)] ``the probability of configuration $i$ at time $t_{}$, given the
system is in configuration $j$ at time $0$''
\item[2)] ``the probability of configuration $i$ at time $t$, given the \emph{probabilities}
of the system over all of its $j$ configurations at time $0$''  
\end{itemize}
As discussed in Section~\ref{sec:twodescriptions}, the first quantity
describes the (probabilistic) transition \emph{of configurations}
and corresponds to multi-time conditional probability, whereas the
second describes the (deterministic) transition \emph{of the probabilities
themselves}, reflecting their functional relationship across different
times. Accordingly, the first probability is well-defined in the context
of a stochastic process and is given by $\mm\left(E_{i}\left(t\right)|E_{j}\left(0\right)\right)$;
but it is not well-defined in a probability dynamics. Conversely,
the second probability is well-defined in a probability dynamics (provided
that the antecedent probability is specified for all $j=1,2,...,N$),
and it is given by the $i$th entry of $\bP\left(t,\vec{\pp}_{0}\right)$;
but it is not well-defined for a (single) stochastic process. To make
sense of both notions of ``transition probability,'' one has to
consider a stochastic process family implementing a probability
dynamics.

In that context, a connection between the two kinds of transition probabilities arises only in the special case in which the antecedent probability associated with the type-2 transition probability---the probability that the system
is in configuration $j$ at time $0$ for some $j$---equals 1. In
this case, the two probabilities coincide. More precisely, if $\mm_{\vec{\pp}_{0}}\left(E_{j}\left(0\right)\right)=1$,
then 
\begin{equation}
\mm_{\vec{\pp}_{0}}\left(E_{i}\left(t\right)|E_{j}\left(0\right)\right)=\mm_{\vec{\pp}_{0}}\left(E_{i}\left(t\right)\right)=\pp_{i}\left(t\right)
\end{equation}
where $\pp_{i}\left(t\right)$ denotes $i$th entry of $\bP\left(t,\vec{\pp}_{0}\right)$.

When $\bP$ is linear, this implies a link between the two types of
transition matrices. $\mm_{\vec{\pp}_{0}}\left(E_{j}\left(0\right)\right)=1$
means that $\vec{\pp}_{0}=\vec{e}_{j}$, one of the standard basis
vectors. In that case, the right hand side of \eqref{eq:implement-lin}
is the $j$th column of $\mM_{\vec{\pp}_{0}}\left(t\right)$, and
the left hand side of \eqref{eq:implement-lin} is the $j$th column
of $\mP\left(t\right)$. Hence, for all $j=1,2,...,N$, the $j$th
columns of $\mM_{\vec{e}_{j}}\left(t\right)$ and $\mP\left(t\right)$
must coincide. All other entries $\mm_{\vec{e}_{j}}\left(E_{i}\left(t\right)|E_{k}\left(0\right)\right)$
of $\mM_{\vec{e}_{j}}\left(t\right)$ are undefined, since their denominators
vanish: $\mm_{\vec{e}_{j}}\left(E_{k}\left(0\right)\right)=0$ if
$k\neq j$.

Note that even in the special case in which a probability dynamics
is implemented by a transition-constant stochastic-process family,
so that $\mP\left(t\right)=\mM\left(t\right)$, the two types of transition
\emph{probabilities} coincide only when the initial probability that the system
is in configuration $j$ equals 1 for some $j$. This
is because only in this case do the entries of the matrix $\mP\left(t\right)$
coincide with the corresponding values of the second type of transition probability. Even in the transition-constant case, for a general initial
probability distribution $\vec{\pp}_{0}$, the two kinds of transition
probabilities are related by
\begin{equation}
\pp_{i}\left(t\right)=\left(\mM\left(t\right)\vec{\pp}_{0}\right)_{i}=\sum_{j=1}^{N}\mm_{\vec{\pp}_{0}}\left(E_{i}\left(t\right)|E_{j}\left(0\right)\right)\pp_{0j}
\end{equation}
So $\pp_{i}\left(t\right)= \mm_{\vec{\pp}_{0}}\left(E_{i}\left(t\right)|E_{j}\left(0\right)\right)$ for all $i$ if and only if $\pp_{0j}=1$.

\section{Divisibility, Decomposability, and Markovianity}
\label{sec:Markovianity}

One may impose various additional conditions that the probabilistic
description of a system's evolution must satisfy. One such condition
is known in the literature as \emph{divisibility}. Here is an introduction
to the idea: 
\begin{quote}
Crucially, the transition matrix $\Gamma(t\leftarrow t_{0})$ will
\emph{not} be assumed to be ``divisible'' {[}...{]} That is, $\Gamma(t\leftarrow t_{0})$
will generically be \emph{indivisible} {[}...{]} meaning that for
intermediate times $t'$ satisfying $t>t'>t_{0}$, there will not
generally exist a genuinely stochastic matrix $\tilde{\Gamma}(t\leftarrow t')$
satisfying the composition law or \emph{divisibility condition} 
\begin{equation}
\Gamma\left(t\leftarrow t_{0}\right)=\tilde{\Gamma}\left(t\leftarrow t'\right)\Gamma\left(t'\leftarrow t_{0}\right)\label{eq:divis-Barandes}
\end{equation}
(\citet{barandes2025}, p.~4.) 
\end{quote}
Given that the notation $\Gamma(t\leftarrow t_{0})$ in \cite{barandes2025} treats interchangeably properties of what we have denoted as $\mM(t)$ and $\bP_{t}$ (see equation~\eqref{eq:delta-matrix}
and Definition~\ref{def:prob-dyn}), we have to ask: is ``divisibility''
meant to be a condition about the former or the latter? We consider each possibility in turn. We will use \emph{``divisibility''} (with quotation marks) to refer
to the notion as it appears in the literature, while \emph{divisibility}
(without quotation marks) will denote the concept defined in the context
of a probability dynamics in Definition~\ref{def:divisib}.

\subsection{Markovianity}\label{sec:Markovianity-sub}

Assuming first that ``divisibility'' concerns $\mM(t)$, consider a stochastic process $\mm$. Define,
for any $t,t'\in T$, the transition matrix 
\begin{align}
\mM\left(t\leftarrow t'\right) & \doteq\left(\begin{array}{cccc}
\mm\left(E_{1}\left(t\right)|E_{1}\left(t'\right)\right) & \mm\left(E_{1}\left(t\right)|E_{2}\left(t'\right)\right) & \ldots & \mm\left(E_{1}\left(t\right)|E_{N}\left(t'\right)\right)\\
\mm\left(E_{2}\left(t\right)|E_{1}\left(t'\right)\right) & \mm\left(E_{2}\left(t\right)|E_{2}\left(t'\right)\right) & \ldots & \mm\left(E_{2}\left(t\right)|E_{N}\left(t'\right)\right)\\
\vdots\\
\mm\left(E_{N}\left(t\right)|E_{1}\left(t'\right)\right) & \mm\left(E_{N}\left(t\right)|E_{2}\left(t'\right)\right) & \ldots & \mm\left(E_{N}\left(t\right)|E_{N}\left(t'\right)\right)
\end{array}\right)\label{eq:delta-t-t'}
\end{align}
Similarly to $\mM\left(t\right) = \mM\left(t\leftarrow 0\right)$, $\mM\left(t\leftarrow t'\right)$ is a stochastic matrix. By applying the law of total probability first
to the partition $\left\{ E_{j}\left(t'\right)\right\} _{j=1}^{N}$
and then to $\left\{ E_{k}\left(0\right)\right\} _{k=1}^{N}$, we
obtain 
\begin{equation}
\mm\left(E_{i}\left(t\right)\right)=\sum_{j,k=1}^{N}\mm\left(E_{i}\left(t\right)|E_{j}\left(t'\right)\right)\mm\left(E_{j}\left(t'\right)|E_{k}\left(0\right)\right)\mm\left(E_{k}\left(0\right)\right)
\end{equation}
Using the definition of $\mM\left(t\leftarrow t'\right)$, this equation
can be recast in matrix form as 
\begin{align}
\vec{\mm}\left(t\right) & =\mM\left(t\leftarrow t'\right)\mM\left(t'\right)\vec{\mm}\left(0\right)\label{eq:transition-2}\\
 & =\mM\left(t\right)\vec{\mm}\left(0\right)\label{eq:transition-1-again}
\end{align}
where the second equality comes from \eqref{eq:law-of-total-prob-matrix-p0}.
 Although \eqref{eq:transition-2} and \eqref{eq:transition-1-again} both hold by the definition of a stochastic process,
in general we do not have 
\begin{equation}
\mM\left(t\right)=\mM\left(t\leftarrow t'\right)\mM\left(t'\right)\label{eq:transition-3}
\end{equation}
This is because two matrices that yield the same result when applied
to a \emph{specific} vector---here, $\vec{\mm}\left(0\right)$---are
not necessarily equal. 

Equation~\eqref{eq:transition-3} is the matrix
form of the so-called Chapman-Kolmogorov equation. A sufficient condition
for this to hold is the Markov property. 
\begin{defn}
A stochastic process $\mm$ is called \emph{Markovian} iff 
\begin{equation}
\mm\left(E_{i_{m+1}}\left(t_{m+1}\right)|E_{i_{1}}\left(t_{1}\right)\wedge...\wedge E_{i_{m}}\left(t_{m}\right)\right)=\mm\left(E_{i_{m+1}}\left(t_{m+1}\right)|E_{i_{m}}\left(t_{m}\right)\right)\label{eq:markov-1}
\end{equation}
for all $m\in\mathbb{N}$, $i_{1},...,i_{m},i_{m+1}\in C$, and $t_{1}\leq t_{2}\leq...\leq t_{m}\leq t_{m+1}\in T$ (when both sides are well-defined). 
\end{defn}

\noindent It is easy to see that:

\begin{restatable}{prop}{markoviansatisfieschapmankolmogorov}
\label{prop:markovian-satisfies-chapman-kolmogorov}
If a stochastic process is Markovian, then the Chapman-Kolmogorov equation \eqref{eq:transition-3}
holds for all $t,t'\in T,\,0\leq t'\leq t$. 
\end{restatable}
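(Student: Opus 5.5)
We prove the matrix identity \eqref{eq:transition-3} entrywise. Fix $0\leq t'\leq t$ in $T$ and indices $i,k\in C$, and compare $\left(\mM\left(t\right)\right)_{ik}=\mm\left(E_{i}\left(t\right)|E_{k}\left(0\right)\right)$ with $\sum_{j}\mm\left(E_{i}\left(t\right)|E_{j}\left(t'\right)\right)\mm\left(E_{j}\left(t'\right)|E_{k}\left(0\right)\right)$. As with \eqref{eq:law-of-total-prob}, the column $k$ only makes sense when $\mm\left(E_{k}\left(0\right)\right)>0$; the entries on both sides are understood to exist under that proviso.

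First step: the law of total probability in the conditional measure $\mm\left(\cdot|E_{k}\left(0\right)\right)$, applied to the partition $\left\{ E_{j}\left(t'\right)\right\} _{j=1}^{N}$, gives
\begin{equation*}
\mm\left(E_{i}\left(t\right)|E_{k}\left(0\right)\right)=\sum_{j}\mm\left(E_{i}\left(t\right)|E_{j}\left(t'\right)\wedge E_{k}\left(0\right)\right)\mm\left(E_{j}\left(t'\right)|E_{k}\left(0\right)\right).
\end{equation*}
In this sum we keep only the terms with $\mm\left(E_{j}\left(t'\right)\wedge E_{k}\left(0\right)\right)>0$. Every other term vanishes, since its second factor is zero; we may also assign the undefined first factor any value.

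Second step: for each $j$ we keep, the Markov property \eqref{eq:markov-1} with $m=2$, $t_{1}=0\leq t_{2}=t'\leq t_{3}=t$, $i_{1}=k$, $i_{2}=j$ and $i_{3}=i$ replaces $\mm\left(E_{i}\left(t\right)|E_{j}\left(t'\right)\wedge E_{k}\left(0\right)\right)$ by $\mm\left(E_{i}\left(t\right)|E_{j}\left(t'\right)\right)$. Both sides are well-defined here, because $\mm\left(E_{j}\left(t'\right)\right)\geq\mm\left(E_{j}\left(t'\right)\wedge E_{k}\left(0\right)\right)>0$. Substituting yields exactly $\left(\mM\left(t\leftarrow t'\right)\mM\left(t'\right)\right)_{ik}$.

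The main obstacle is this bookkeeping of undefined conditional probabilities. Suppose $\mm\left(E_{j}\left(t'\right)\right)=0$, so that column $j$ of $\mM\left(t\leftarrow t'\right)$ is undefined. Then $\mm\left(E_{j}\left(t'\right)|E_{k}\left(0\right)\right)=0$, and the product is unaffected whatever convention is chosen for that column, so the identity holds under any convention. The edge cases $t'=0$ and $t'=t$ also need a check. For $t'=0$ we have $\mM\left(0\right)=I$ on the defined columns. For $t'=t$, $\mM\left(t\leftarrow t\right)$ is the identity on the defined columns. Both are immediate, and are in any case covered by the general argument, since \eqref{eq:markov-1} allows $t_{1}\leq t_{2}\leq t_{3}$ with equalities.
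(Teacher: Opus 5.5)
Your proof is correct and follows essentially the same route as the paper's. You condition on $E_k(0)$, apply the law of total probability over the partition $\{E_j(t')\}$ restricted to the $j$ with $\mm(E_j(t')\wedge E_k(0))>0$, and use the Markov property to replace $\mm(E_i(t)\mid E_j(t')\wedge E_k(0))$ by $\mm(E_i(t)\mid E_j(t'))$. Your check that undefined columns of $\mM(t\leftarrow t')$ only ever meet zero entries of $\mM(t')$ spells out more carefully what the paper covers with its ``whenever well-defined'' proviso.
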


\noindent\emph{Sketch of proof.} Expand $\mm(E_i(t)|E_k(0))$ by conditioning on the intermediate-time $E_j(t')$ events, and then use the Markov property to replace $\mm(E_i(t)\mid E_k(0)\wedge E_j(t'))$ by $\mm(E_i(t)\mid E_j(t'))$, which yields the Chapman--Kolmogorov equation \eqref{eq:transition-3} entrywise. (Formal proof: Appendix \ref{proof:markovian-satisfies-chapman-kolmogorov}.)

The Markov property \eqref{eq:markov-1} is sometimes described as
expressing that the system has ``no memory.'' This holds in the following
sense. In any stochastic process $\mm$, given a configuration $j\in C$
and a time $t'\in T$ at which the system occupies $j$, the probabilities
of subsequent configurations for $t\geq t'$ are determined as 
conditional probabilities $\mM_{ij}\left(t\leftarrow t'\right)=\mm\left(E_{i}\left(t\right)|E_{j}\left(t'\right)\right)$,
$i=1,2,...,N$. In a Markovian stochastic process, these conditional
probabilities are unaffected by information about the earlier trajectory
of configurations ($t\leq t'$) from which $j$ evolved. In this sense,
in a Markov process, a configuration can be regarded as the instantaneous
``stochastic state'' of the system---one that does not uniquely
determine future configurations, as in a deterministic system, but
nonetheless specifies past-independent probabilities for them. Importantly,
Markovianity still allows this specification to depend explicitly
on $t'$, the time at which the system is in configuration $j$, as
indicated by the argument of $\mM_{ij}\left(t\leftarrow t'\right)$.
One might instead require that the system's
configuration $j$ at time $t'$ determine the probabilities of later
configurations independently of $t'$ itself---that is, independently of \emph{when} the system occupies $j$. This may be regarded as a stronger sense of “no memory”: the system need not keep track of its past trajectory, nor even of the passage of time. This stronger condition is
often referred to as \emph{time-homogeneity}, and its precise definition
is as follows.
\begin{defn}
\label{def:time-homogeneous-Markov-process}A Markovian stochastic
process $\mm$ is called \emph{time-homogeneous} iff 
\begin{equation}
\mm\left(E_{i}\left(t\right)|E_{j}\left(t'\right)\right)=\mm\left(E_{i}\left(t-t'\right)|E_{j}\left(0\right)\right)\label{eq:time-homogeneous-Markov}
\end{equation}
for all $t,t'\in T,\,0\leq t'\leq t$ and $i,j\in C$, $\mm\left(E_{j}\left(0\right)\right),\mm\left(E_{j}\left(t'\right)\right)\neq0$.
\end{defn}

\noindent Thus, in a time-homogeneous Markov process, the transition
matrices satisfy
\begin{equation}
\mM\left(t\leftarrow t'\right)=\mM\left(t-t'\right)\label{eq:transition-matrix-stationary}
\end{equation}
for $0\leq t'\leq t$, and the Chapman-Kolmogorov equation becomes
a condition on the matrix family $\left\{ \mM\left(t\right)\right\} _{t\in T}$
itself:
\begin{equation}
\mM\left(t\right)=\mM\left(t-t'\right)\mM\left(t'\right)\label{eq:transition-3-stationary}
\end{equation}
for all $t,t'\in T,\,0\leq t'\leq t$.

It must be emphasized that neither equation \eqref{eq:transition-2}
nor the Chapman-Kolmogorov equation \eqref{eq:transition-3} or \eqref{eq:transition-3-stationary}
can be interpreted as a condition about the ``divisibility'' of
the evolution of probabilities; that is, they do \emph{not} express
that the transition from $\vec{\mm}\left(0\right)$ to $\vec{\mm}\left(t\right)$
can be decomposed into two steps---from $\vec{\mm}\left(0\right)$
to $\vec{\mm}\left(t'\right)$ and then from $\vec{\mm}\left(t'\right)$
to $\vec{\mm}\left(t\right)$. As discussed in Section~\ref{sec:Transition},
this is because a single stochastic process, characterized by the
matrices $\mM$, does \emph{not} describe the transition of \emph{probability
vectors};\emph{ }rather, it describes the (probabilistic) transition
of \emph{configurations}. Note that it is only under this latter construal,
as conditions concerning the evolution of \emph{configurations}, that
Markovianity \eqref{eq:markov-1} and time-homogeneity \eqref{eq:time-homogeneous-Markov}
can be regarded as expressing the ``memorylessness'' of the time
evolution.

Motivated by the close analogy with the ``divisibility condition'' \eqref{eq:divis-Barandes} in \cite{barandes2025},\footnote{This question is discussed and explicitly related to the notion of
``divisibility'' in the literature. See e.g. \citet{vacchini2012,canturk2024}.} one may ask, for a stochastic process that does \emph{not} satisfy
the Chapman-Kolmogorov equation, whether there exists a stochastic
matrix $\tilde{\mM}\left(t\leftarrow t'\right)$, distinct from $\mM\left(t\leftarrow t'\right)$,
such that 
\begin{equation}
\mM\left(t\right)=\tilde{\mM}\left(t\leftarrow t'\right)\mM\left(t'\right)\label{eq:transition-4}
\end{equation}
Although mathematically well-defined, this question again has nothing
to do with the ``divisibility'' of the evolution of probabilities.
The reason is the same as before: a single stochastic process---and
the matrices $\mM\left(t\right)$ and $\tilde{\mM}\left(t\leftarrow t'\right)$
used to describe it---does \emph{not} determine how probability
vectors evolve. 

\subsection{Decomposability}\label{sec:Decomposability}

To express the ``divisibility'' of the evolution of probabilities,
one must instead apply equation \eqref{eq:divis-Barandes} in the
context of a probability dynamics, as a condition about $\bP_{t}$. First, note that in this context, equation \eqref{eq:divis-Barandes}
only makes sense if $\bP$ is linear. Otherwise $\bP_{t}$ cannot
generally be represented by a matrix $\mP\left(t\right)$, so there
is no reason to expect the existence of a \emph{matrix} $\mP\left(t\leftarrow t'\right)$
satisfying the analogue of \eqref{eq:divis-Barandes}, and to impose
the stochasticity condition on it. However, before discussing the
linear case, let us first formulate a simple, general definition that
we believe captures the sought-after idea behind ``divisibility,'' and that is
applicable in the nonlinear case as well. 
\begin{defn}
\label{def:decomposability}Let $\mathrm{Ran}\bP_{t}\subseteq\mathcal{S}_{N}$
denote the range of map $\bP_{t}$. We say that the family of maps
$\left\{ \bP_{t\leftarrow t'}:\mathrm{Ran}\bP_{t'}\rightarrow\mathcal{S}_{N}\right\} _{t,t'\in T,\,t'\leq t}$
\emph{decomposes} probability dynamics $\bP$ iff 
\begin{align}
\bP_{t} & =\bP_{t\leftarrow t'}\circ\bP_{t'}\label{eq:decomp}
\end{align}
for all $t,t'\in T,\,t'\leq t$.\footnote{We note that equation \eqref{eq:decomp}, together with some of the
notions introduced below, would in fact make sense for all $t,t'\in T$,
even when the condition $t'\leq t$ is not satisfied. Since this added
generality plays no role in the present discussion, we shall restrict
attention to the standard case $t'\leq t$.} We call $\bP$ \emph{decomposable} iff there exists a family of maps
that decomposes it.
\end{defn}
\noindent Decomposability provides a natural formalization of the
idea that the evolution of probabilities, whether linear or nonlinear,
can be broken down into intermediate steps. The following simple result,
which is an immediate consequence of Definition~\ref{def:decomposability},
will allow us to refine this interpretation.
\begin{prop}
\label{prop:decomp-characterization}(i) Probability dynamics $\bP$
is decomposable if and only if, for all $t,t'\in T,\,t\geq t'$, $\bP_{t'}\left(\vec{\pp}_{0}\right)=\bP_{t'}\left(\vec{q}_{0}\right)$
entails $\bP_{t}\left(\vec{\pp}_{0}\right)=\bP_{t}\left(\vec{q}_{0}\right)$
for all $\vec{\pp}_{0},\vec{q}_{0}\in\mathcal{S}_{N}$.

(ii) If $\bP$ is decomposable, then the family of maps $\left\{ \bP_{t\leftarrow t'}\right\} $
decomposing it is unique.
\end{prop}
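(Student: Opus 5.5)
The plan is to treat Proposition~\ref{prop:decomp-characterization} as an instance of the elementary fact that a map $g$ factors through a map $h$, i.e.\ $g=\phi\circ h$ for some $\phi$ defined on $\mathrm{Ran}\,h$, exactly when $h(x)=h(y)$ implies $g(x)=g(y)$. The factor $\phi$ is then uniquely determined on $\mathrm{Ran}\,h$. I will apply this with $h=\bP_{t'}$ and $g=\bP_{t}$, separately for each pair $t'\leq t$. The definition of decomposability imposes no coupling between different pairs, so the pairwise constructions assemble directly into a family.

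For part (i), the forward direction is immediate. If $\{\bP_{t\leftarrow t'}\}$ decomposes $\bP$ and $\bP_{t'}(\vec{\pp}_{0})=\bP_{t'}(\vec{q}_{0})$, then
\[
\bP_{t}(\vec{\pp}_{0})=\bP_{t\leftarrow t'}\bigl(\bP_{t'}(\vec{\pp}_{0})\bigr)=\bP_{t\leftarrow t'}\bigl(\bP_{t'}(\vec{q}_{0})\bigr)=\bP_{t}(\vec{q}_{0}).
\]
For the converse, I define $\bP_{t\leftarrow t'}$ on $\mathrm{Ran}\bP_{t'}$ as follows. For $\vec{x}\in\mathrm{Ran}\bP_{t'}$, pick any $\vec{\pp}_{0}$ with $\bP_{t'}(\vec{\pp}_{0})=\vec{x}$ and set $\bP_{t\leftarrow t'}(\vec{x})\doteq\bP_{t}(\vec{\pp}_{0})$. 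The hypothesis is exactly what makes this independent of the choice of preimage, so the map is well defined. Its values lie in $\mathcal{S}_{N}$, and $\bP_{t}=\bP_{t\leftarrow t'}\circ\bP_{t'}$ holds by construction. For $t=t'$ the construction yields the identity on $\mathrm{Ran}\bP_{t'}$, which is consistent.

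For part (ii), suppose two families $\{\bP_{t\leftarrow t'}\}$ and $\{\bP'_{t\leftarrow t'}\}$ both decompose $\bP$. Every $\vec{x}\in\mathrm{Ran}\bP_{t'}$ has the form $\bP_{t'}(\vec{\pp}_{0})$, so
\[
\bP_{t\leftarrow t'}(\vec{x})=\bP_{t}(\vec{\pp}_{0})=\bP'_{t\leftarrow t'}(\vec{x}).
\]
The two maps therefore agree on their common domain. Uniqueness depends on the domain being restricted to $\mathrm{Ran}\bP_{t'}$; on all of $\mathcal{S}_{N}$ it would fail.

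There is no real obstacle here. The only point needing care is the well-definedness step in the converse of (i), which relies on choosing a preimage (trivially available, since $\vec{x}$ lies in the range) and on the hypothesis guaranteeing that the choice does not matter.
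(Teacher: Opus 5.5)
Your proposal is correct and takes essentially the same approach as the paper. The paper gives no separate proof, calling the result an immediate consequence of Definition~\ref{def:decomposability}. Your factorization-through-the-range argument is exactly the reasoning that remark leaves implicit: the hypothesis makes $\bP_{t\leftarrow t'}$ well defined on $\mathrm{Ran}\bP_{t'}$, and uniqueness follows because every point of that domain has the form $\bP_{t'}(\vec{\pp}_{0})$.
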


\noindent In line with this characterization, what decomposability
says is that there are no initial probability vectors $\vec{\pp}_{0}$
and $\vec{q}_{0}$ such that $\bP_{t'}\left(\vec{\pp}_{0}\right)=\bP_{t'}\left(\vec{q}_{0}\right)$
while $\bP_{t}\left(\vec{\pp}_{0}\right)\neq\bP_{t}\left(\vec{q}_{0}\right)$,
for $t\geq t'$. That is, for $t\geq t'$, $\vec{\pp}\left(t'\right)$ uniquely determines
$\vec{\pp}\left(t\right)$. In that sense, just like $\vec{\pp}_{0}$ at the initial
moment of time, $\vec{\pp}\left(t'\right)$ serves as the description
of the ``dynamical state'' of the system at time $t'$. This
is the core idea behind decomposability. 

Note that this idea asserts only that the instantaneous probability
vector of the system at any given time uniquely determines its later values.
It does \emph{not} imply the claim that at any given time (other than
$t=0$) the system could be initialized from an \emph{arbitrary} probability
vector. Indeed, decomposability is a notion relative to a particular
probability dynamics $\bP$, which may exclude some probability vectors
(those not in the range of $\bP_{t'}$) from the set of possible states
at time $t'$---unless $\bP_{t'}$ is surjective. Put differently, the dynamics can be initialized arbitrarily only \emph{once} (at the moment we label $0$), after which the dynamical law itself determines which states are possible at any later time $t'$. This is the reason
why the decomposing maps $\bP_{t\leftarrow t'}$ are only required
to be defined on the range of $\bP_{t'}$ and not on the entire simplex
$\mathcal{S}_{N}$. As we will see, in the linear case this subtlety
plays an important role.

The fact that the instantaneous probability vector provides a notion
of state, in the sense just discussed, is sometimes characterized
as the system having ``no memory.'' For a decomposable dynamics, this
is true in the following sense: given a probability vector $\vec{\pp}\in\mathcal{S}_{N}$,
and a time $t'\in T$ at which the system occupies $\vec{\pp}$, the
subsequent trajectory of probability vectors for $t\geq t'$ is fully
determined, without any need for information about the earlier trajectory
($t\leq t'$) from which $\vec{\pp}$ evolved. Importantly, however,
decomposability allows this determination to depend explicitly on
$t'$, the time at which the system is in $\vec{\pp}$, as indicated
by the subscript of the decomposing maps $\bP_{t\leftarrow t'}$.
One might adopt a stronger interpretation of “no memory,” according to which the system's state
$\vec{\pp}$ at time $t'$ determines the future evolution independently
of $t'$ itself---that is, regardless of \emph{when} the system occupies
$\vec{\pp}$. As before, this stronger condition is often referred to as \emph{time-homogeneity},
and its precise definition is as follows (note the distinction between the notations  $\bP_{t- t'}$ and  $\bP_{t\leftarrow t'}$). 
\begin{defn}
\label{def:time-homogeneity}Probability dynamics $\bP$ is called
\emph{time-homogeneous} iff 
\begin{align}
\bP_{t} & =\bP_{t-t'}\circ\bP_{t'}\label{eq:time-homogeneous}
\end{align}
for all $t,t'\in T,\,t'\leq t$.
\end{defn}

\noindent Every time-homogeneous probability dynamics is
decomposable, with 
\begin{equation}
\bP_{t\leftarrow t'}=\left.\bP_{t-t'}\right|{}_{\mathrm{Ran}\bP_{t'}}\label{eq:time-homogeneity}
\end{equation}
for all $t,t'\in T,\,t'\leq t$. What equation \eqref{eq:time-homogeneity}
expresses is that in the time-homogeneous case there not only exists
a map specifying how instantaneous probability vectors at time $t'$
evolve, but moreover this map is the \emph{same} one applied at time
$0$ (restricted to those vectors that are possible at $t'$).

Let us now consider a linear probability dynamics that is decomposable.
Does it then follow that the family of maps decomposing it consists
of maps that themselves preserve convex combinations? The answer is
affirmative: 

\begin{restatable}{prop}{decompmapslinear}
\label{prop:decomp-maps-linear}Suppose that $\bP$ is a linear and
decomposable probability dynamics. Then each map $\bP_{t\leftarrow t'}$
in the family decomposing $\bP$ is convex-combination preserving.
\end{restatable}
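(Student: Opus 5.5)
The plan is to lift convex combinations in the range of $\bP_{t'}$ back to convex combinations of initial conditions. Linearity of $\bP_{t'}$ makes this lifting possible, and the uniqueness of the decomposing maps (Proposition~\ref{prop:decomp-characterization}) makes it harmless.

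Fix $t'\leq t$ in $T$. First note that $\mathrm{Ran}\bP_{t'}$ is convex, so the statement is meaningful. Indeed, if $\vec{x}=\bP_{t'}(\vec{\pp}_0)$ and $\vec{y}=\bP_{t'}(\vec{q}_0)$, then for $\lambda\in[0,1]$ the vector $\lambda\vec{\pp}_0+(1-\lambda)\vec{q}_0$ lies in the convex set $\mathcal{S}_N$. Linearity of $\bP_{t'}$ then gives
\begin{equation*}
\bP_{t'}\bigl(\lambda\vec{\pp}_0+(1-\lambda)\vec{q}_0\bigr)=\lambda\vec{x}+(1-\lambda)\vec{y},
\end{equation*}
so this combination is again in the range. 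The same argument works for finitely many points, and the two-point case suffices anyway.

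Next, apply the decomposition identity $\bP_t=\bP_{t\leftarrow t'}\circ\bP_{t'}$ at the initial condition $\lambda\vec{\pp}_0+(1-\lambda)\vec{q}_0$:
\begin{equation*}
\bP_{t\leftarrow t'}\bigl(\lambda\vec{x}+(1-\lambda)\vec{y}\bigr)=\bP_{t}\bigl(\lambda\vec{\pp}_0+(1-\lambda)\vec{q}_0\bigr).
\end{equation*}
By linearity of $\bP_t$, the right-hand side equals $\lambda\bP_t(\vec{\pp}_0)+(1-\lambda)\bP_t(\vec{q}_0)$. Applying the decomposition identity again at $\vec{\pp}_0$ and $\vec{q}_0$ separately, this is $\lambda\bP_{t\leftarrow t'}(\vec{x})+(1-\lambda)\bP_{t\leftarrow t'}(\vec{y})$, which is the required convex-combination preservation.

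The only subtle point is that the preimages $\vec{\pp}_0,\vec{q}_0$ of $\vec{x},\vec{y}$ are generally not unique, since $\bP_{t'}$ need not be injective. One might worry that the argument depends on which preimages are chosen. It does not: $\bP_{t\leftarrow t'}$ is a genuine function on $\mathrm{Ran}\bP_{t'}$, and its value at a point is pinned down by the decomposition identity through \emph{any} preimage. This is exactly the content of Proposition~\ref{prop:decomp-characterization}(i). So we may pick arbitrary preimages and the chain of equalities holds regardless. No further obstacle arises, because $\bP_{t\leftarrow t'}$ is only claimed to preserve convex combinations on its domain $\mathrm{Ran}\bP_{t'}$, not to extend linearly to all of $\mathcal{S}_N$ as a stochastic matrix. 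That stronger question is presumably where divisibility comes apart from decomposability.
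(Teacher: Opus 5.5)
Your proof is correct and follows the paper's argument essentially step for step. You first show that $\mathrm{Ran}\bP_{t'}$ is convex using linearity of $\bP_{t'}$, then run the chain: decomposition identity, linearity of $\bP_t$, decomposition identity again. One small simplification: the appeal to Proposition~\ref{prop:decomp-characterization} (both (i) and the uniqueness in (ii)) is unnecessary. The identity $\bP_t=\bP_{t\leftarrow t'}\circ\bP_{t'}$ holds at every initial condition, so any choice of preimages works automatically, and the argument applies to any decomposing family.
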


\noindent\emph{Sketch of proof.} Take $\vec{\pp},\vec{q}\in\mathrm{Ran}\bP_{t'}$ so that $\vec{\pp}=\bP_{t'}(\vec{\pp}_0)$ and $\vec{q}=\bP_{t'}(\vec{q}_0)$, and use decomposability $\bP_t=\bP_{t\leftarrow t'}\circ\bP_{t'}$ together with convex-combination preservation of $\bP_{t'}$ and $\bP_t$ to compute
$\bP_{t\leftarrow t'}(\lambda\vec{\pp}+(1-\lambda)\vec{q})=\bP_t(\lambda\vec{\pp}_0+(1-\lambda)\vec{q}_0)=\lambda\bP_{t\leftarrow t'}(\vec{\pp})+(1-\lambda)\bP_{t\leftarrow t'}(\vec{q})$ for all $\lambda\in[0,1]$. (Formal proof: Appendix \ref{proof:decomp-maps-linear}.)

If $\bP$ is a linear probability dynamics, there exists a unique
family $\left\{ \mP\left(t\right)\right\} _{t\in T}$ of stochastic
matrices that represents $\bP$ in the sense of \eqref{eq:lin-prob-dyn}.
In terms of these matrices, one can now introduce the notion of \emph{divisibility}.
\begin{defn}
\label{def:divisib}A linear probability dynamics $\bP$ is called
\emph{divisible} iff there exists a family $\left\{ \mP\left(t\leftarrow t'\right)\right\} _{t,t'\in T,\,t'\leq t}$
of stochastic matrices such that 
\begin{equation}
\mP\left(t\right)=\mP\left(t\leftarrow t'\right)\mP\left(t'\right)\label{eq:divisbility-matrix-definition}
\end{equation}
for all $t,t'\in T,\,t'\leq t$. 
\end{defn}

\noindent We now ask: what is the relationship between decomposability
and divisibility for a linear probability dynamics $\bP$?

It is clear that if $\bP$ is divisible, then it is also decomposable.
Indeed, the action of each stochastic matrix $\mP\left(t\leftarrow t'\right)$
defines an $\mathcal{S}_{N}\rightarrow\mathcal{S}_{N}$ convex combination
preserving map, which, when restricted to the range of $\bP_{t'}$,
yields $\bP_{t\leftarrow t'}$. This map satisfies the defining equation
\eqref{eq:decomp} of decomposability as a consequence of \eqref{eq:divisbility-matrix-definition}.
(For a more detailed argument, see proof of Proposition~\ref{prop:linear-extension}
below.)

However, the converse does \emph{not} hold, as shown by the following
counterexample.
\begin{example}
\label{exa:counterexample}Let $N=2,\ T=\left\{ 0,1,2\right\} $, and
consider the linear probability dynamics $\bP$ defined by the following
stochastic matrices: 
\begin{gather}
\mP\left(1\right)=\left(\begin{array}{cc}
1 & \frac{1}{2}\\
0 & \frac{1}{2}
\end{array}\right),\,\,\,\,\,\mP\left(2\right)=\left(\begin{array}{cc}
\frac{1}{2} & 1\\
\frac{1}{2} & 0
\end{array}\right)
\end{gather}
If $\bP$ were divisible, there would exist a stochastic matrix $\mP\left(2\leftarrow1\right)$
satisfying 
\begin{equation}
\mP\left(2\right)=\mP\left(2\leftarrow1\right)\mP\left(1\right)\label{eq:decomposibility-matrix}
\end{equation}
Since $\mP\left(1\right)$ is invertible, with inverse 
\begin{equation}
\mP\left(1\right)^{-1}=\left(\begin{array}{cc}
1 & -1\\
0 & 2
\end{array}\right)
\end{equation}
the only solution of \ref{eq:decomposibility-matrix} for $\mP\left(2\leftarrow1\right)$
is 
\begin{equation}
\mP\left(2\right)\mP\left(1\right)^{-1}=\left(\begin{array}{cc}
\frac{1}{2} & 1\\
\frac{1}{2} & 0
\end{array}\right)\left(\begin{array}{cc}
1 & -1\\
0 & 2
\end{array}\right)=\left(\begin{array}{cc}
\frac{1}{2} & \frac{3}{2}\\
\frac{1}{2} & -\frac{1}{2}
\end{array}\right)\label{eq:decomposing-matrix}
\end{equation}
However, this matrix is not stochastic: some of its entries lie outside
the $\left[0,1\right]$ interval. Therefore, no stochastic matrix
$\mP\left(2\leftarrow1\right)$ can satisfy \eqref{eq:decomposibility-matrix},
and hence $\bP$ is not divisible.

On the other hand, $\bP$ \emph{is} decomposable. Indeed, since $\mP\left(1\right)$
is invertible, the map $\bP_{1}$ is injective. Consequently, $\bP_{1}\left(\vec{\pp}_{0}\right)=\bP_{1}\left(\vec{q}_{0}\right)$
trivially entails $\bP_{2}\left(\vec{\pp}_{0}\right)=\bP_{2}\left(\vec{q}_{0}\right)$
for all $\vec{\pp}_{0},\vec{q}_{0}\in\mathcal{S}_{2}$, and therefore,
by Proposition~\ref{prop:decomp-characterization}, $\bP$ is decomposable.
\end{example}

The example highlights an important point. Observe that the range
of $\bP_{1}$ is restricted to one ``half'' of the simplex $\mathcal{S}_{2}$:
\begin{equation}
\mathrm{Ran}\bP_{1}=\left\{ \left(\pp,1-\pp\right)|\frac{1}{2}\leq\pp\leq1\right\} 
\end{equation}
On this range, the map $\bP_{2\leftarrow1}$ is given by
\begin{equation}
\bP_{2\leftarrow1}\left(\pp,1-\pp\right)=\left(\frac{3}{2}-\pp,\pp-\frac{1}{2}\right)\label{eq:L21}
\end{equation}
Importantly, although $\bP_{2\leftarrow1}$ preserves convex combinations
and maps probability vectors to probability vectors, it cannot be
represented as the action of a stochastic matrix.\footnote{It can, however, be expressed as the action of the non-stochastic
matrix $\mP\left(2\right)\mP\left(1\right)^{-1}$ given in \eqref{eq:decomposing-matrix}.} The reason is that its domain is not the entire simplex $\mathcal{S}_{2}$,
just a subset of it. But only those convex-combination preserving
maps can be represented by stochastic matrices which take \emph{all}
probability vectors to probability vectors, or can be extended so
as to do so. Crucially, however, \emph{$\bP_{2\leftarrow1}$ cannot
be extended to the entire $\mathcal{S}_{2}$ in a way that both preserves
convex combinations and keeps the image within $\mathcal{S}_{2}$}
(see Fig.~\ref{fig:example-nondivisib}). In such a case, even though decomposability holds, divisibility
fails. The following proposition states this observation in general
terms.

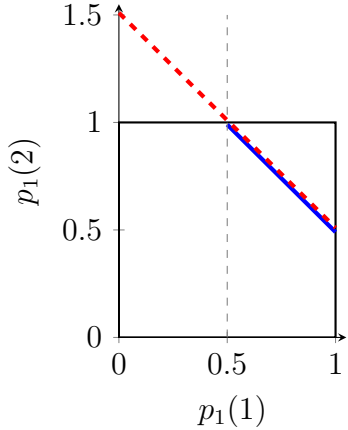
\begin{figure}
\begin{centering}
\begin{tikzpicture}
  \begin{axis}[
    width=6cm,
    height=6cm,
    xmin=0, xmax=1.05,  
    ymin=0, ymax=1.55,  
    axis lines=left,     
    xlabel={$p_{1}(1)$},
    ylabel={$p_{1}(2)$},
    xtick={0,0.5,1},
    ytick={0,0.5,1,1.5},
    axis equal image,    
    clip marker paths=true,
    ticks=both,
    enlargelimits=false
  ]

  \draw[black, thick] (axis cs:0,0) rectangle (axis cs:1,1);

  \addplot+[blue, ultra thick, mark=none, domain=0.5:1, samples=2] {1.49 - x};

  \addplot+[red, dashed, ultra thick, mark=none, domain=0:1, samples=2] {1.51 - x};

  \addplot+[gray, dashed, mark=none] coordinates {(0.5,0) (0.5,1.5)};

\end{axis}
\end{tikzpicture} 
\par\end{centering}
\caption{\emph{Example~\ref{exa:counterexample}: the  first component of the probability vector at time 2, $p_{1}(2)$, is plotted as a function of the first component of the probability vector at time 1, $p_{1}(1)$. The blue continuous graph corresponds to the map
$\bP_{2\leftarrow1}$ as given by \eqref{eq:L21}; the red dashed
graph corresponds to the action of the matrix in \eqref{eq:decomposing-matrix}.
There is no extension of the blue line segment to a graph over the
entire interval }$\left[0,1\right]$\emph{ that is both a straight
line and remains within the unit square.}}
\label{fig:example-nondivisib} 
\end{figure}

\begin{restatable}{prop}{linearextension}
\label{prop:linear-extension}Suppose $\bP$ is a decomposable linear
probability dynamics. Then $\bP$ is divisible if and only if each
map $\bP_{t\leftarrow t'}$ in the family decomposing $\bP$ can be
extended to a convex-combination preserving map \textup{$\bar{\bP}_{t\leftarrow t'}:\mathcal{S}_{N}\rightarrow\mathcal{S}_{N}$}. 
\end{restatable}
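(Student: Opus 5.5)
Both directions are direct; the plan is to use the correspondence between stochastic matrices and convex-combination preserving self-maps of $\mathcal{S}_{N}$ (footnote~\ref{fn:extension}), together with the uniqueness of the decomposing family (Proposition~\ref{prop:decomp-characterization}(ii)).

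\emph{($\Rightarrow$)} Suppose $\bP$ is divisible, with stochastic matrices $\mP(t\leftarrow t')$ satisfying \eqref{eq:divisbility-matrix-definition}. For each $t'\leq t$ define $\bar{\bP}_{t\leftarrow t'}(\vec{\pp})\doteq\mP(t\leftarrow t')\vec{\pp}$ on all of $\mathcal{S}_{N}$. Since the columns of a stochastic matrix are probability vectors, each $\mP(t\leftarrow t')\vec{\pp}$ is a convex combination of those columns, so it lies in $\mathcal{S}_{N}$; and matrix multiplication preserves convex combinations. Multiplying \eqref{eq:divisbility-matrix-definition} by $\vec{\pp}_{0}$ gives $\bP_{t}=\bar{\bP}_{t\leftarrow t'}\circ\bP_{t'}$. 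Hence the restrictions $\bar{\bP}_{t\leftarrow t'}|_{\mathrm{Ran}\bP_{t'}}$ decompose $\bP$. By Proposition~\ref{prop:decomp-characterization}(ii), these restrictions coincide with the unique decomposing maps $\bP_{t\leftarrow t'}$, so each $\bar{\bP}_{t\leftarrow t'}$ is the required extension.

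\emph{($\Leftarrow$)} Suppose each $\bP_{t\leftarrow t'}$ extends to a convex-combination preserving map $\bar{\bP}_{t\leftarrow t'}:\mathcal{S}_{N}\to\mathcal{S}_{N}$. By the argument of footnote~\ref{fn:extension}, $\bar{\bP}_{t\leftarrow t'}$ is given by the matrix $\mP(t\leftarrow t')$ whose $j$th column is $\bar{\bP}_{t\leftarrow t'}(\vec{e}_{j})\in\mathcal{S}_{N}$. That matrix is therefore stochastic. To check the identity on $\mathcal{S}_{N}$, use decomposability: for every $\vec{\pp}_{0}$,
$$\mP(t\leftarrow t')\mP(t')\vec{\pp}_{0}=\bar{\bP}_{t\leftarrow t'}(\bP_{t'}(\vec{\pp}_{0}))=\bP_{t\leftarrow t'}(\bP_{t'}(\vec{\pp}_{0}))=\bP_{t}(\vec{\pp}_{0})=\mP(t)\vec{\pp}_{0}.$$
Taking $\vec{\pp}_{0}=\vec{e}_{j}$ for each $j$ makes the columns of the two matrices agree, which gives \eqref{eq:divisbility-matrix-definition}.

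The main point of care is the extension step in ($\Leftarrow$): one must confirm that a convex-combination preserving self-map of the whole simplex is exactly the action of a stochastic matrix. Footnote~\ref{fn:extension} gives the linear extension and its matrix. Stochasticity then follows because the columns are images of vertices, and nonnegativity and column sums of $1$ hold automatically. Proposition~\ref{prop:decomp-maps-linear} is not needed for either direction.
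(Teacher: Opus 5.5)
Your proof is correct and follows the paper's own proof essentially step for step. In one direction you represent the extension $\bar{\bP}_{t\leftarrow t'}$ by the stochastic matrix whose columns are $\bar{\bP}_{t\leftarrow t'}(\vec{e}_{j})$ (footnote~\ref{fn:extension}) and compare $\mP(t\leftarrow t')\mP(t')$ with $\mP(t)$ on the basis vectors; in the other you let the divisibility matrices act on all of $\mathcal{S}_{N}$ and use the uniqueness in Proposition~\ref{prop:decomp-characterization}(ii) to identify their restrictions with the decomposing maps. The only difference from the paper is the order in which the two directions are presented.
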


\noindent\emph{Sketch of proof.} If $\bP_{t\leftarrow t'}$ extends to a convex-combination preserving map on $\mathcal{S}_N$, then it has a (unique) linear extension to $\mathbb{R}^N$, hence $\bP_{t\leftarrow t'}(\vec{\pp})=\mP(t\leftarrow t')\vec{\pp}$ for a stochastic matrix $\mP(t\leftarrow t')$, and using $\bP_t=\bP_{t\leftarrow t'}\circ\bP_{t'}$ yields $\mP(t)=\mP(t\leftarrow t')\mP(t')$, i.e.\ divisibility. Conversely, if $\bP$ is divisible with stochastic matrix $\mP(t\leftarrow t')$, then matrix multiplication defines the required convex-combination preserving extension of $\bP_{t\leftarrow t'}$. (Formal proof: Appendix \ref{proof:linear-extension}.)

According to Proposition~\ref{prop:linear-extension}, divisibility
strengthens decomposability by the additional requirement that the decomposing maps
$\bP_{t\leftarrow t'}$ admit extensions beyond the range of the maps
$\bP_{t'}$ in a specific way. From a mathematical point of view,
this is certainly a sensible additional requirement. However, its physical motivation
is much less clear. Why should the ``divisibility'' of a dynamics
$\bP$ depend on how the decomposing maps $\bP_{t\leftarrow t'}$,
once their existence is ensured, act on probability vectors \emph{that
are excluded by the dynamics itself}---that is, on vectors lying
outside the range of $\bP_{t'}$? 

In one special case, the required extension of $\bP_{t\leftarrow t'}$
is guaranteed to exist: when the dynamics is not merely decomposable
but also time-homogeneous, in the sense defined in Definition~\ref{def:time-homogeneity}.
In that case, $\bP_{t\leftarrow t'}$ is given by \eqref{eq:time-homogeneity}
as the restriction of $\bP_{t-t'}$ to the range of $\bP_{t'}$. Since
$\bP_{t-t'}$ is convex-combination preserving and defined on the
entire simplex $\mathcal{S}_{N}$, this restriction can obviously
be extended back to all of $\mathcal{S}_{N}$. Hence, by Proposition~\ref{prop:linear-extension},
we obtain:
\begin{prop}
Any time-homogeneous linear probability dynamics is divisible. 
\end{prop}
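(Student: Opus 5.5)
The plan is to verify decomposability and then invoke Proposition~\ref{prop:linear-extension}. Alternatively, one can exhibit the stochastic matrices directly, which is even more elementary. Let $\bP$ be linear and time-homogeneous, and let $\{\mP(t)\}_{t\in T}$ be the unique family of stochastic matrices representing it via \eqref{eq:lin-prob-dyn}.

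First, time-homogeneity gives $\bP_t=\bP_{t-t'}\circ\bP_{t'}$ for all $t'\le t$ in $T$. Since the definition requires $\bP_{t-t'}$ to be well-defined, $t-t'\in T$ whenever the condition is imposed. Hence $\bP$ is decomposable, with $\bP_{t\leftarrow t'}=\bP_{t-t'}|_{\mathrm{Ran}\bP_{t'}}$ as in \eqref{eq:time-homogeneity}. By uniqueness (Proposition~\ref{prop:decomp-characterization}(ii)), this is the decomposing family.

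Second, each $\bP_{t\leftarrow t'}$ extends to the convex-combination preserving map $\bar{\bP}_{t\leftarrow t'}\doteq\bP_{t-t'}:\mathcal{S}_N\to\mathcal{S}_N$. This map is convex-combination preserving by linearity of $\bP$ (Definition~\ref{def:linear-prob-dyn}). Proposition~\ref{prop:linear-extension} then yields divisibility.

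For a direct check, I would set $\mP(t\leftarrow t')\doteq\mP(t-t')$, which is stochastic. Evaluating $\mP(t)\vec{\pp}_0=\mP(t-t')\mP(t')\vec{\pp}_0$ on the vertices $\vec e_j$ of $\mathcal{S}_N$ gives the matrix identity \eqref{eq:divisbility-matrix-definition}.

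The only delicate point is the bookkeeping that $t-t'\in T$, so that $\bP_{t-t'}$ makes sense. I would treat this as implicit in Definition~\ref{def:time-homogeneity}. Otherwise there is no real obstacle.
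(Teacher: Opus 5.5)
Your proof is correct and follows the paper's own argument: time-homogeneity makes the decomposing maps the restrictions $\bP_{t-t'}|_{\mathrm{Ran}\bP_{t'}}$, and their evident convex-combination preserving extension $\bP_{t-t'}$ lets Proposition~\ref{prop:linear-extension} deliver divisibility. Your direct check with $\mP(t\leftarrow t')\doteq\mP(t-t')$, your appeal to uniqueness of the decomposing family, and your remark that $t-t'\in T$ is implicit in Definition~\ref{def:time-homogeneity} make explicit what the paper leaves tacit.
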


\noindent However, without the additional assumption of time-homogeneity,
decomposability alone does not guarantee the required extension of
$\bP_{t\leftarrow t'}$ and, in turn, does not imply divisibility.

In light of this remark, it is worth taking a look at how the intuitive
content of ``divisibility'' is characterized in the literature (the quotations
below include minor adjustments to notation and equation references):
\begin{quote}
``{[}...{]} the stochastic process based on the transition matrix
$\Gamma(t\leftarrow t_{0})$ will generically fail to be Markovian,
so {[}...{]} the model will {[}...{]} lack specific dynamical laws
describing transitions between arbitrarily chosen intermediate times.''
(\citet{barandes2025}, p.~4) 

``In summary, for a Markov process the transition matrices {[}...{]}
satisfy the composition law of equation \eqref{eq:divis-Barandes}.
Essentially, equation \eqref{eq:divis-Barandes} states that the evolution
from $t_{0}$ to $t$ can be written as the composition of the evolution
from $t$ to some intermediate time $t'$, and from this $t'$ to
the final time $t$.'' (\citet{rivas2014}, p.~3)

``{[}...{]} it is natural to define classical divisibility as corresponding
to being able to describe the evolution of the probability density
$p\left(x_{t}\right)$, between two arbitrary times, by a classical
information channel. Any such classical channel, mapping $p\left(x_{t_{1}}\right)$
to $p\left(x_{t_{2}}\right)$, is defined by a stochastic matrix $S\left(x_{t_{1}},x_{t_{2}}\right)$
{[}...{]}.'' (\citet{li2018}, p.~41)
\end{quote}
As the first two quotations illustrate, the literature tends to blur the distinction between divisibility, a property of probability dynamics, and
Markovianity, a property of a stochastic process. However, the idea meant to be conveyed by equation
\eqref{eq:divis-Barandes} is clear: it corresponds precisely to what
we have introduced as decomposability. It is the notion that the probabilities
at any given time---not only at $t=0$---determine the probabilities
at later times, so that the transition of probabilities from $t=0$
to $t>0$ can be divided into intermediate steps. (Notice that neither
the above quotations nor the passage from \citet{barandes2025} at
the beginning of this section suggests that these intermediate steps
should satisfy time-homogeneity.)
However, what happens in the usual accounts is that this intuitive
conception is expressed through a formalism that does not quite match the intended idea, namely by the technical notion of divisibility, which, as Proposition~\ref{prop:linear-extension}
shows, is burdened with an additional requirement whose physical motivation is unclear.

The source of the conceptual problem, we believe, is well illustrated by the
third quote above from \citet{li2018}. By a classical information channel,
the authors mean any rule governing the temporal evolution of (classical)
probabilities. If one assumes, as discussed in Section~\ref{sec:Linearity},
that such a rule must by default be represented by a linear map, then---given
the initial probability vector can be chosen arbitrarily---it follows
that a classical channel is associated with a stochastic matrix. This
is precisely the assumption made by \citet{li2018}. The following
\emph{further} assumption is then implicitly made: when the initial
time is changed (from $0$ to $t'$), the corresponding classical channel should still
be represented by a stochastic matrix. However, as we have argued,
even granting the linearity assumption, this last step is problematic.
The problem becomes apparent once we recognize that linear evolution
is only a special case, and that therefore the matrix-based paradigm of probability
dynamics must be replaced by a more general perspective based on functions and function composition. In
the next section, we add further support to this view by presenting
specific examples of probability dynamics where the linearity assumption
is justified, yet decomposability and divisibility nonetheless diverge.

\subsection{Relating Markovianity and Decomposability}\label{sec:relating-Markovianity}

In the first part of the section, we saw that the formal analogue
of ``divisibility'' for a stochastic process---the satisfaction
of the Chapman-Kolmogorov equation---has to do with the Markov property.
In the second part, we argued that the stepwise composition of a probability
dynamics is most appropriately captured by what we have called decomposability.
This raises a natural question: is there a connection between Markovianity
and decomposability? The short answer is no: Markovianity is a property
of a stochastic process, while decomposability is a property of a
probability dynamics. However, something much stronger can in fact
be said. To formulate this precisely, we first introduce some definitions. 
\begin{defn}
We say that \emph{the implementation $\mm$ of a probability vector
trajectory $\vec{\pp}(t)$ is Markovian} iff $\mm$ is a Markovian
stochastic process; otherwise, we say that \emph{the implementation
is non-Markovian}. We say that a\emph{ probability dynamics $\bP$
has a Markovian implementation} iff there exists a stochastic process
family implementing $\bP$ which is Markovian for every solution.
We say that \emph{$\bP$ has a non-Markovian implementation} iff there
exists a stochastic process family implementing $\bP$ which is non-Markovian
for at least one solution. 
\end{defn}

\smallskip{}

\begin{defn}
We say that \emph{a probability vector trajectory $\vec{\pp}(t)$
is non-degenerate} iff there exist $t_{1}<t_{2}<t_{3}\in T$ and $i_{1},i_{3}\in C$
such that $0<\pp_{i_{1}}(t_{1})<1$ and $0<\pp_{i_{3}}(t_{3})<1$,
where $\pp_{i}(t)$ is the $i$th entry of $\vec{\pp}(t)$. We say
that \emph{probability dynamics $\bP$ is non-degenerate} iff at least
one solution of $\bP$ is non-degenerate. 
\end{defn}

\noindent One can show the following:

\begin{restatable}{prop}{Markovianimplementation}
\label{prop:Markovian-implementation}(a) Every probability vector
trajectory has a Markovian implementation. Hence, every probability
dynamics has a Markovian implementation.

(b) A probability vector trajectory has a non-Markovian implementation
if and only if it is non-degenerate. Hence, a probability dynamics
has a non-Markovian implementation if and only if it is non-degenerate. 
\end{restatable}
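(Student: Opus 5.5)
For part (a) I would use the simplest possible implementation: make the configurations at different times independent. Given a probability vector trajectory $\vec{\pp}(t)$, define for every finite set of times $t_1<\dots<t_n$ in $T$ the finite-dimensional distribution
\begin{equation*}
\mm\left(E_{i_1}(t_1)\wedge\dots\wedge E_{i_n}(t_n)\right)\doteq \pp_{i_1}(t_1)\cdots \pp_{i_n}(t_n).
\end{equation*}
These distributions are consistent under marginalization, since summing out one index yields a factor $1$. Because $C$ is finite, the Kolmogorov extension theorem then gives a unique measure $\mm$ on $(\Omega,\mathcal{F})$, and its one-time marginals are $\vec{\pp}(t)$ by construction. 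Markovianity follows from independence. Whenever the conditional probabilities are defined, both sides of \eqref{eq:markov-1} equal $\pp_{i_{m+1}}(t_{m+1})$ if $t_{m+1}>t_m$. If $t_{m+1}=t_m$, both sides equal the Kronecker delta $\delta_{i_m i_{m+1}}$. For a probability dynamics $\bP$, I would apply this to every solution $t\mapsto\bP(t,\vec{\pp}_0)$. Since $\bP(0,\cdot)$ is the identity, the resulting family satisfies $\vec{\mm}_{\vec{\pp}_0}(0)=\vec{\pp}_0$, and it is a Markovian implementation. Proposition~\ref{prop:implementsalways} follows at once.

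For the "only if" direction of (b), I would show that a degenerate trajectory forces \emph{every} implementation to be Markovian. Call $t$ a \emph{Dirac time} if $\vec{\pp}(t)$ is a standard basis vector. Degeneracy says there are no two non-Dirac times $s<u$ with some element of $T$ strictly between them. Hence there is at most one non-Dirac time, or exactly two that are adjacent in $T$. In any implementation $\mm$, an event $E_j(t)$ at a Dirac time has probability $0$ or $1$. Conditioning on probability-one events does not change conditional probabilities, and conditioning on probability-zero events is undefined. Now check \eqref{eq:markov-1}:
\begin{itemize}
\item If $t_{m+1}$ is a Dirac time, both sides equal the same $0$ or $1$.
\item If $t_{m+1}=t_m$, both sides are the Kronecker delta.
\item If $t_{m+1}>t_m$ and $t_{m+1}$ is not a Dirac time, then $t_m$ cannot be a Dirac time. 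Otherwise the other non-Dirac time $s$, if it is among $t_1,\dots,t_{m-1}$, would satisfy $s<t_m<t_{m+1}$, contradicting degeneracy. If no non-Dirac time occurs before $t_{m+1}$, all earlier conditioning events are trivial anyway.
\end{itemize}
In every case the extra conditioning events on the left-hand side have probability one, so the two sides agree. This bookkeeping, including ties $t_k=t_{k+1}$ and undefined conditionals, is fiddly but elementary; it is the step I would write most carefully.

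For the "if" direction of (b), pick $t_1<t_2<t_3$ and $i_1,i_3$ with $0<\pp_{i_1}(t_1)<1$ and $0<\pp_{i_3}(t_3)<1$. Then choose $l\neq i_1$ with $\pp_l(t_1)>0$ and $k\neq i_3$ with $\pp_k(t_3)>0$. Start from the product coupling $\pi$ of $\vec{\pp}(t_1)$ and $\vec{\pp}(t_3)$, and perturb it by $+\varepsilon$ on the entries $(i_1,i_3)$ and $(l,k)$ and by $-\varepsilon$ on $(i_1,k)$ and $(l,i_3)$. For small $\varepsilon>0$ this gives a joint distribution with the correct marginals that is not the product. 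Then define $\mm$ by making the pair $(X_{t_1},X_{t_3})$ distributed according to this perturbed coupling, and making all other coordinates independent of this pair and of each other, with marginals $\vec{\pp}(t)$. These finite-dimensional distributions are again consistent, so Kolmogorov extension applies. Now take $j$ with $\pp_j(t_2)>0$. By independence of $X_{t_2}$,
\begin{equation*}
\mm\left(E_{i_3}(t_3)\mid E_{i_1}(t_1)\wedge E_j(t_2)\right)=\mm\left(E_{i_3}(t_3)\mid E_{i_1}(t_1)\right)>\pp_{i_3}(t_3)=\mm\left(E_{i_3}(t_3)\mid E_j(t_2)\right),
\end{equation*}
so the Markov condition fails. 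For a non-degenerate probability dynamics, I would use this construction on one non-degenerate solution and the Markovian construction of part (a) on all other solutions. The initial-condition requirement still holds, since each implementation reproduces its trajectory at $t=0$.
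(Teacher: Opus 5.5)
Your proof is correct, and it reaches part of (b) by a different construction from the paper's.

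\textbf{Where you agree with the paper.} Part (a) and the ``only if'' half of (b) follow the paper's route. Part (a) uses the independent product implementation obtained from Kolmogorov's extension theorem. For the ``only if'' half, a degenerate trajectory has at most one non-Dirac time, or exactly two that are adjacent in $T$, so all surplus conditioning in the Markov condition is trivial. Your case analysis here is more explicit than the paper's, which checks one instance and does not discuss ties $t_k=t_{k+1}$.

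One phrasing fix: ``then $t_m$ cannot be a Dirac time'' is false as stated, for instance when $t_{m+1}$ is the only non-Dirac time. It should be read as the dichotomy your next sentences actually prove. Either $t_m$ is non-Dirac and every strictly earlier conditioning time is Dirac, or all conditioning times are Dirac.

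\textbf{Where you differ: the ``if'' half of (b).} The paper builds a three-time law $K_{jln}$ on $(t_{k_1},t_{k_2},t_{k_3})$ using a parity-signed $\pm\epsilon$ perturbation. All three of its pairwise marginals are products, so the three times are pairwise but not mutually independent. The paper then needs a separate case (b2) for when $\vec{\pp}(t_{k_2})$ is a basis vector, because there pairwise independence would force full independence. In (b2) it perturbs only the $(t_{k_1},t_{k_3})$ coupling, which is essentially your construction. You use that construction throughout and keep $X_{t_2}$ independent of everything. So one argument covers both cases, and only a single two-time coupling needs a consistency check.

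\textbf{What each approach buys.} The paper's case (b1) gives a sharper example. Its non-Markovian process has exactly the same two-time joint distributions as the Markovian product implementation, so it even satisfies the Chapman--Kolmogorov equation. In your process the failure is already visible in two-time data: $\mM(t_3\leftarrow t_1)\neq\mM(t_3\leftarrow t_2)\,\mM(t_2\leftarrow t_1)$, since every defined column of the right-hand side equals $\vec{\pp}(t_3)$.
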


\noindent \emph{Sketch of proof.} Existence is shown by defining consistent finite-dimensional distributions with no inter-time correlations (products of the one-time marginals), and invoking Kolmogorov's extension theorem; the resulting process is Markov by construction. If the trajectory is non-degenerate, one can perturb a three-time joint law to violate the Markov conditional independence, whereas in the degenerate case the Markov condition becomes vacuous or undefined so no genuinely non-Markovian implementation can be found. (Formal proof: Appendix \ref{proof:Markovian-implementation}.)

There has been a discussion in the literature as to whether it is possible to find a realistic interpretation of the probability evolution arising from quantum mechanics in terms of an embedding into a classical Markov process  (\citet{gillespie1994, garbaczewski1996, hardy1997, gillespie2001}).
On one natural reading, our result answers this question in the affirmative. As we will see in the last section, the evolution of quantum probabilities is best characterized as a collection of probability vector trajectories, parametrized by the initial quantum state, and each such trajectory admits a Markovian implementation, as established by Proposition~\ref{prop:Markovian-implementation}.

\section{Statistical Dynamics}
\label{sec:statistical-dynamics} 

In the previous sections we argued that, contrary to common sentiment in the literature, there is no \emph{probability-theoretic} reason to assume that 
\begin{itemize}
\item[(a)] probability dynamics is linear, 
\item[(b)] the two types of transition matrices (when both exist) are
identical, 
\item[(c)] decomposability is equivalent to divisibility, 
\item[(d)] decomposability is equivalent to Markovianity. 
\end{itemize}
However, under a specific interpretation of the general formalism
we have developed, some of these assumptions can be given a \emph{physical}
motivation. In this section we look at this special case, which we
refer to as \emph{statistical dynamics}. We consider three types of
it in turn.

\subsection{Deterministic Statistical Dynamics}

Consider a deterministic dynamical system on configuration space $C$,
given by a map 
\begin{equation}
D:T\times C\rightarrow C\label{eq:deterministic-dynamics}
\end{equation}
such that $D_{0}$ is the identity function on $C$, where $D_{t}\doteq D\left(t,\cdot\right)$.
Imagine a large ensemble of copies of the system initiated from different
configurations and each evolving in time in accord with $D$. Let
$\pp_{0i}$ represent the relative frequency of copies in the ensemble
that start in configuration $i$. Then, since the initial configuration
of each copy uniquely determines its trajectory through configuration
space, the initial distribution $\vec{\pp}_{0}=\left(\pp_{01},...,\pp_{0N}\right)\in\mathcal{S}_{N}$
uniquely determines the distribution of trajectories over time in
the ensemble. To specify this distribution, first introduce the Dirac
measure $\mm_{i}^{D}$ concentrated at the trajectory with a given
initial configuration $i\in C$: 
\begin{equation}
\mm_{i}^{D}(E_{i_{1}}(t_{1})\wedge...\wedge E_{i_{n}}(t_{n}))=\begin{cases}
1 & \textrm{if }D(t_{1},i)=i_{1},...,D(t_{n},i)=i_{n}\\
0 & \textrm{otherwise}
\end{cases}\label{eq:det-joint-prob}
\end{equation}
for all $n\in\mathbb{N}$, $i_{1},...,i_{n}\in C$, and $t_{1},...,t_{n}\in T$.
The frequencies of the system's trajectories in the imagined ensemble
can now be written as a statistical mixture of the distributions $\mm_{i}^{D}$:\footnote{Alternatively, $\mm_{\vec{\pp}_{0}}^{D}$ can be defined directly
on the cylinder sets of $\mathcal{F}$ as follows: 
\begin{equation}
\mm_{\vec{\pp}_{0}}^{D}\left(\left\{ \omega\in\Omega~|~X_{t_{1}}\left(\omega\right)\in C_{1},X_{t_{2}}\left(\omega\right)\in C_{2},...,X_{t_{n}}\left(\omega\right)\in C_{n}\right\} \right)=\underset{i\in\bigcap_{k=1}^{N}D^{-1}\left(C_{k}\right)}{\sum}p_{0i}
\end{equation}
where $t_{1},t_{2},\dots,t_{n}\in T$, $C_{1},C_{2},...,C_{n}\subseteq C$,
$n\in\mathbb{N}$, and $D^{-1}\left(C_{k}\right)$ denotes the preimage
of $C_{k}$ under the map $D\left(t_{k},\cdot\right)$.} 
\begin{equation}
\mm_{\vec{\pp}_{0}}^{D}\doteq\sum_{i=1}^{N}\pp_{0i}\mm_{i}^{D}\label{eq:composite-measure1}
\end{equation}
From \eqref{eq:det-joint-prob} and \eqref{eq:composite-measure1},
a straightforward calculation shows that for the stochastic process
thus defined, we have 
\begin{align}
\vec{\mm}_{\vec{\pp}_{0}}^{D}\left(0\right) & =\vec{\pp}_{0}\label{eq:det-joint-prob0}\\
\left(\mM_{\vec{\pp}_{0}}^{D}\left(t\right)\right)_{ij} & =\mm_{\vec{\pp}_{0}}^{D}\left(E_{i}\left(t\right)|E_{j}\left(0\right)\right)=\begin{cases}
1 & \textrm{if }D(t,j)=i\\
0 & \textrm{if }D(t,j)\neq i
\end{cases}\label{eq:det-cond-prob}
\end{align}
Crucially, the entries in \eqref{eq:det-cond-prob} are independent
of $\vec{\pp}_{0}$.

The mapping $\bM^{D}:\vec{\pp}_{0}\mapsto\mm_{\vec{\pp}_{0}}^{D}$
defines a stochastic process family that gives rise to a probability
dynamics $\bP^{D}$, via the equation of implementation: 
\begin{equation}
\bP^{D}\left(t,\vec{\pp}_{0}\right)\doteq\vec{\mm}_{\vec{\pp}_{0}}^{D}\left(t\right)\label{eq:det-implement}
\end{equation}
for all $t\in T$ and $\vec{\pp}_{0}\in\mathcal{S}_{N}$. The probability
vector trajectory $t\mapsto\bP^{D}\left(t,\vec{\pp}_{0}\right)$ describes
how the relative frequencies of configurations evolve over time in
the ensemble, as each system copy follows its deterministic trajectory
from a given initial configuration. Importantly, since the entries
of $\mM_{\vec{\pp}_{0}}^{D}\left(t\right)$, whenever defined, are
independent of $\vec{\pp}_{0}$, $\bM^{D}$ is a transition-constant
stochastic process family, in the sense of Definition~\ref{def:transition-const}.
Hence, one can introduce the common transition matrix $\mM^{D}\left(t\right)$
as in \eqref{eq:common-M}, and, by Proposition~\ref{prop:transition-const-implies-linearity},
we obtain: 
\begin{prop}
\textup{\label{prop:det-linear}$\bP^{D}$} is linear, with $\mP^{D}\left(t\right)=\mM^{D}\left(t\right)$
for all $t\in T$. 
\end{prop}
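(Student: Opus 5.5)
The plan is to reduce Proposition~\ref{prop:det-linear} to Proposition~\ref{prop:transition-const-implies-linearity}. That requires two facts: $\bM^{D}$ is a stochastic process family that implements $\bP^{D}$, and $\bM^{D}$ is transition-constant with common matrix $\mM^{D}(t)$.

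\textbf{Step 1: $\bM^{D}$ is a stochastic process family implementing $\bP^{D}$.}

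First, I will check that each $\mm_{\vec{\pp}_{0}}^{D}$ is a genuine probability measure on $(\Omega,\mathcal{F})$. Each $\mm_{i}^{D}$ is the Dirac measure at the trajectory $\omega_{i}:t\mapsto D(t,i)$, which is an element of $\Omega$. Its values on the cylinder events are exactly those in \eqref{eq:det-joint-prob}. A finite convex combination of probability measures is a probability measure, so $\mm_{\vec{\pp}_{0}}^{D}$ is one as well.

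Second, I will verify \eqref{eq:det-joint-prob0}. Since $D_{0}$ is the identity,
\[
\mm_{i}^{D}(E_{k}(0))=\delta_{ik},
\]
so $\mm_{\vec{\pp}_{0}}^{D}(E_{k}(0))=\pp_{0k}$. This shows $\bM^{D}$ satisfies Definition~\ref{def:implementation}(i). It also shows $\bP^{D}(0,\cdot)$ is the identity, so $\bP^{D}$ defined by \eqref{eq:det-implement} is a probability dynamics. The family implements it by construction.

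\textbf{Step 2: transition-constancy.} I will compute the joint probability
\[
\mm_{\vec{\pp}_{0}}^{D}(E_{i}(t)\wedge E_{j}(0))=\sum_{k}\pp_{0k}\,[D(t,k)=i]\,\delta_{kj}=\pp_{0j}\,[D(t,j)=i].
\]
Dividing by $\pp_{0j}\neq0$ gives \eqref{eq:det-cond-prob}, and the result is visibly independent of $\vec{\pp}_{0}$. So $\bM^{D}$ is transition-constant in the sense of Definition~\ref{def:transition-const}. The common matrix $\mM^{D}(t)$ of \eqref{eq:common-M} is the $0$--$1$ matrix with entries $[D(t,j)=i]$.

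\textbf{Step 3: conclude.} Proposition~\ref{prop:transition-const-implies-linearity} then yields that $\bP^{D}$ is linear and $\mP^{D}(t)=\mM^{D}(t)$ for all $t\in T$.

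As an independent sanity check, one can compute directly
\[
\bP^{D}(t,\vec{\pp}_{0})_{i}=\sum_{j}[D(t,j)=i]\,\pp_{0j},
\]
which is manifestly convex-combination preserving.

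The only point needing care is the handling of initial vectors with zero entries. For such vectors some conditionals are undefined, but transition-constancy only quantifies over defined entries. The direct formula above covers all $\vec{\pp}_{0}\in\mathcal{S}_{N}$ anyway. I expect no real obstacle here; the main work is just the bookkeeping in Step 2.
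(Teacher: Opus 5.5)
Your proposal is correct and follows the paper's own route. The paper likewise computes \eqref{eq:det-joint-prob0}--\eqref{eq:det-cond-prob}, notes that the conditional probabilities are independent of $\vec{\pp}_{0}$ so that $\bM^{D}$ is transition-constant, and then invokes Proposition~\ref{prop:transition-const-implies-linearity}; your explicit joint-probability computation simply writes out the ``straightforward calculation'' the paper leaves implicit.
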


The decomposability of $\bP^{D}$ and the Markovianity of $\bM^{D}$
can be characterized directly in terms of the underlying deterministic
dynamical system $D$. To this end, we begin with a definition.
\begin{defn}
\label{def:D-decomp}Let $\mathrm{Ran}D_{t}\subseteq C$ denote the
range of map $D_{t}$. We say that the family of maps $\left\{ D_{t\leftarrow t'}:\mathrm{Ran}D_{t'}\rightarrow C\right\} _{t,t'\in T,\,t'\leq t}$
\emph{decomposes} the deterministic dynamical system $D$ iff 
\begin{align}
D_{t} & =D_{t\leftarrow t'}\circ D_{t'}\label{eq:decomp-1}
\end{align}
for all  $t,t'\in T,\,t'\leq t$. We call $D$ \emph{decomposable}
iff such a family of maps exists. 
\end{defn}

\noindent Just like the decomposability of a probability dynamics, the decomposability
of a deterministic dynamical system captures the idea that the configuration
at time $t'$ encodes the ``dynamical state'' of the system at that
moment, in the sense that it determines all future configurations.
Since the configuration at any given time uniquely determines the
future trajectory, trajectories of a decomposable deterministic dynamics
do not intersect in configuration space. With this notion in hand,
we have the following result:

\begin{restatable}{prop}{detdecompmarkov}
\label{prop:det-decomp-markov}(a) $\bP^{D}$ is decomposable if and
only if $D$ is decomposable.

(b) \textup{$\bM^{D}$} is Markovian if and only if $D$ is decomposable. 
\end{restatable}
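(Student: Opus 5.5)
The plan is to prove both parts by translating the dynamical conditions into statements about the deterministic maps $D_t$. Two facts do most of the work. First, $\bP^{D}_{t}(\vec{\pp}_{0})$ is the pushforward of the distribution $\vec{\pp}_{0}$ on $C$ under $D_{t}$, so its $i$th entry is $\sum_{k:\,D_t(k)=i}\pp_{0k}$; this follows directly from \eqref{eq:det-joint-prob} and \eqref{eq:composite-measure1}. Second, $\bP^{D}_{t}(\vec{e}_{k})=\vec{e}_{D_t(k)}$. For $D$, the exact analogue of Proposition~\ref{prop:decomp-characterization}(i) holds with the same trivial proof: $D$ is decomposable iff $D_{t'}(i)=D_{t'}(j)$ implies $D_{t}(i)=D_{t}(j)$ for all $t'\leq t$.

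For (a), suppose first that $D$ is decomposable. Then pushforwards compose, $(D_{t})_{*}=(D_{t\leftarrow t'})_{*}\circ(D_{t'})_{*}$, because $D_t=D_{t\leftarrow t'}\circ D_{t'}$. Hence if $\bP^{D}_{t'}(\vec{\pp}_0)=\bP^{D}_{t'}(\vec{q}_0)$, applying $(D_{t\leftarrow t'})_*$ to both sides gives $\bP^{D}_{t}(\vec{\pp}_0)=\bP^{D}_{t}(\vec{q}_0)$. Note that the common value is supported in $\mathrm{Ran}D_{t'}$, so the pushforward is well defined. Proposition~\ref{prop:decomp-characterization}(i) then gives decomposability of $\bP^D$. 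Conversely, suppose $D$ is not decomposable, so there are $i,j$ and $t'\le t$ with $D_{t'}(i)=D_{t'}(j)$ but $D_t(i)\neq D_t(j)$. The vertices $\vec{e}_i,\vec{e}_j$ then have equal images $\vec{e}_{D_{t'}(i)}$ under $\bP^D_{t'}$ but different images under $\bP^D_t$. This violates the criterion of Proposition~\ref{prop:decomp-characterization}(i).

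For (b), suppose $D$ is decomposable and fix $\vec{\pp}_0$, times $t_1\le\dots\le t_{m+1}$, and a conditioning event $A=E_{i_1}(t_1)\wedge\dots\wedge E_{i_m}(t_m)$ of positive measure. Under $\mm^D_{\vec{\pp}_0}$, every event is a set of initial configurations $k$ weighted by $\pp_{0k}$. Every $k$ compatible with $A$ (or with the larger event $E_{i_m}(t_m)$) satisfies $D_{t_{m+1}}(k)=D_{t_{m+1}\leftarrow t_m}(i_m)$. So both sides of \eqref{eq:markov-1} equal the indicator of $i_{m+1}=D_{t_{m+1}\leftarrow t_m}(i_m)$, and the process is Markovian for every $\vec{\pp}_0$. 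Conversely, take a witness $i,j,t'<t$ of non-decomposability as above. Choose $\vec{\pp}_0=\tfrac12(\vec{e}_i+\vec{e}_j)$ and write $c=D_{t'}(i)$, $a=D_t(i)$. Then $\mm(E_a(t)\mid E_c(t'))=\tfrac12$, since $D_t(j)\ne a$. On the other hand $\mm(E_a(t)\mid E_i(0)\wedge E_c(t'))=1$. This contradicts \eqref{eq:markov-1} with the time ordering $0\le t'\le t$.

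The main obstacle I expect is this last step when the witnessing $t'$ is negative, since $T$ is only assumed to contain $0$. In that case the event $E_i(0)$ does not precede $t'$, and the Markov condition only conditions on earlier (or equal) times. My first attempt would use the fact that equal times are allowed in \eqref{eq:markov-1}. I would then look for a distinguishing earlier time: take the infimum of times at which $i$ and $j$ have already merged, and condition on the configuration at a time just before merging. If no such time exists in $T$ (the trajectories coincide at all times $\le t'$), this seems to force $i=j$ whenever $0\le t'$; in the remaining case I would restrict to $T\subseteq\mathbb{R}_{\ge0}$, as in Proposition~\ref{prop:linearimplement}. Everything else is direct bookkeeping with Dirac mixtures.
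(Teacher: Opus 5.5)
Your proof follows the paper's. The converse of (a) uses the same witness pair, the ``if'' half of (b) uses the same $0$--$1$ conditional-probability argument, and the ``only if'' half of (b) uses the same mixture $\tfrac12(\vec{e}_i+\vec{e}_j)$. The only variation is in the forward half of (a). There the paper extends $D_{t\leftarrow t'}$ arbitrarily to $C$ and builds $0$--$1$ stochastic matrices with $\mP^{D}(t)=\mP^{D}(t\leftarrow t')\mP^{D}(t')$, which it reuses for Proposition~\ref{prop:det-divis}. You instead push forward along $D_{t\leftarrow t'}$ and invoke Proposition~\ref{prop:decomp-characterization}(i), which is equally valid.

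The obstacle you flag is genuine, and it is a defect of the statement, not of your proof: if $T$ has negative elements, the ``only if'' half of (b) is false. Take $N=2$, $T=\{-1,0\}$, $D(-1,\cdot)\equiv 1$ and $D(0,\cdot)=\mathrm{id}$. Then $D$ is not decomposable, because $D_{-1}(1)=D_{-1}(2)$ while $D_0(1)\neq D_0(2)$. Yet every $\mm^{D}_{\vec{\pp}_0}$ satisfies \eqref{eq:markov-1}. Since $X_{-1}=1$ almost surely, conditions at time $-1$ are either almost sure or null. If $t_m=0$, then also $t_{m+1}=0$, and both sides equal $1$ or $0$ according as $i_{m+1}=i_m$ or not.

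Your fallback of conditioning at an earlier time $s\le t'$ with $D_s(i)\neq D_s(j)$ does give a violation whenever such an $s$ exists. In this example no such $s$ exists, so no argument can close that case. The paper's proof makes the same tacit assumption: conditioning on $E_k(t')\wedge E_i(0)$ is an instance of \eqref{eq:markov-1} only when $0\le t'$. Your restriction to $T\subseteq\mathbb{R}_{\ge0}$ is therefore necessary for that direction, not a convenience. Part (a) and the ``if'' half of (b) hold for arbitrary $T$, exactly as you argue them.
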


\noindent \emph{Sketch of proof.} The key observation is that $D$ is decomposable iff deterministic trajectories never merge and then split: one can then define intermediate-time maps $D_{t\leftarrow t'}$ on $\mathrm{Ran}\,D_{t'}$, which induce $0$--$1$ stochastic matrices and hence decomposing maps for $\bP^D$, while failure of this property is detected via Proposition~\ref{prop:decomp-characterization}. When $D$ is decomposable, the relevant conditionals are always in $\{0,1\}$ so further conditioning on earlier events cannot change them (entailing Markovianity), whereas if two initial states coincide at $t'$ but diverge at $t$ a suitable initial mixture yields a direct violation of the Markov condition. (Formal proof: Appendix \ref{proof:det-decomp-markov}.)

By Proposition~\ref{prop:det-decomp-markov}, $\bP^{D}$ is decomposable
if and only if $\bM^{D}$ is Markovian. Moreover, the proof of part~(a)
shows that, whenever $\bP^{D}$ is decomposable, there exists a family
$\left\{ \mP^{D}\left(t\leftarrow t'\right)\right\} _{t,t'\in T,\,t'\leq t}$
of stochastic matrices that satisfies the divisibility condition (cf. \eqref{eq:division-matrix-proof} and \eqref{eq:composition-matrix-proof}).
Thus: 
\begin{prop}
\label{prop:det-divis}For any deterministic dynamical system $D$,
$\bP^{D}$ is decomposable if and only if it is divisible.
\end{prop}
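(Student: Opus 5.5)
The ``if'' direction is already available: as noted after Definition~\ref{def:divisib}, every divisible linear probability dynamics is decomposable, and $\bP^{D}$ is linear by Proposition~\ref{prop:det-linear}. So the work lies in the ``only if'' direction. The plan is to assume $\bP^{D}$ is decomposable and produce stochastic matrices $\mP^{D}(t\leftarrow t')$ with $\mP^{D}(t)=\mP^{D}(t\leftarrow t')\mP^{D}(t')$.

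First, I pass to the underlying deterministic system. By Proposition~\ref{prop:det-decomp-markov}(a), decomposability of $\bP^{D}$ gives decomposability of $D$. So there are maps $D_{t\leftarrow t'}:\mathrm{Ran}D_{t'}\rightarrow C$ with $D_{t}=D_{t\leftarrow t'}\circ D_{t'}$.

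Next, I extend each $D_{t\leftarrow t'}$ to a total map $\bar D_{t\leftarrow t'}:C\rightarrow C$. Outside $\mathrm{Ran}D_{t'}$ the values can be arbitrary, for instance $\bar D_{t\leftarrow t'}(k)=k$. This is the one point where the deterministic setting succeeds while Example~\ref{exa:counterexample} fails: any function $C\to C$ is induced by a $0$--$1$ stochastic matrix, so the extension off the range costs nothing. I then define
\[
\left(\mP^{D}(t\leftarrow t')\right)_{ik}=\begin{cases}1 & \textrm{if }\bar D_{t\leftarrow t'}(k)=i\\ 0 & \textrm{otherwise}\end{cases}
\]
Each column of this matrix has exactly one entry equal to $1$, so it is stochastic.

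Finally, I verify the composition law entrywise. By \eqref{eq:det-cond-prob} and Proposition~\ref{prop:det-linear}, $(\mP^{D}(t'))_{kj}=1$ exactly when $k=D_{t'}(j)$. Hence
\[
\left(\mP^{D}(t\leftarrow t')\mP^{D}(t')\right)_{ij}=\left(\mP^{D}(t\leftarrow t')\right)_{i,D_{t'}(j)}.
\]
This entry equals $1$ iff $\bar D_{t\leftarrow t'}(D_{t'}(j))=i$. Since $D_{t'}(j)\in\mathrm{Ran}D_{t'}$, where $\bar D_{t\leftarrow t'}$ agrees with $D_{t\leftarrow t'}$, the condition reads $D_{t}(j)=i$, which is exactly $(\mP^{D}(t))_{ij}=1$. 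Therefore $\mP^{D}(t)=\mP^{D}(t\leftarrow t')\mP^{D}(t')$ for all $t'\le t$, and $\bP^{D}$ is divisible.

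The main obstacle is the appeal to Proposition~\ref{prop:det-decomp-markov}(a), specifically that decomposability of $\bP^{D}$ forces decomposability of $D$. If one wanted to argue it directly, I would use Proposition~\ref{prop:decomp-characterization}. Suppose $D_{t'}(j)=D_{t'}(j')$ but $D_{t}(j)\neq D_{t}(j')$. Then the initial vectors $\vec e_{j}$ and $\vec e_{j'}$ have equal images under $\bP^{D}_{t'}$ but different images under $\bP^{D}_{t}$, so $\bP^{D}$ would not be decomposable. Taking this as given, the remaining steps are routine.
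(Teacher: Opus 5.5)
Your proposal is correct and follows the paper's argument essentially verbatim. The paper also reduces to decomposability of $D$ via Proposition~\ref{prop:det-decomp-markov}(a), extends each $D_{t\leftarrow t'}$ arbitrarily off $\mathrm{Ran}D_{t'}$, and uses the induced $0$--$1$ stochastic matrices to get $\mP^{D}(t)=\mP^{D}(t\leftarrow t')\mP^{D}(t')$; the converse likewise comes from the general fact that divisible linear dynamics are decomposable, and your entrywise check of the composition law just spells out the step the paper calls ``clear.''
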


What Proposition~\ref{prop:det-linear}, \ref{prop:det-decomp-markov}
and \ref{prop:det-divis} together establish is that all of the
assumptions (a)--(d) listed at the beginning of the section are satisfied
in the context of a probability dynamics in which probabilities represent
frequencies of configurations in a statistical mixture of deterministic
trajectories. To mark this special case, we will refer to $\bP^{D}$
as \emph{deterministic statistical dynamics}.

\subsection{System-Ancilla Statistical Dynamics}

A deterministic statistical dynamics represents a very special class
of probability dynamics in which the entries of the transition
matrices  $\mP\left(t\right)$ take only the values 0 and 1 (see \eqref{eq:det-cond-prob}). One may obtain more general stochastic transition matrices by embedding a target system into a larger, but still deterministic, system whose additional degrees of freedom are taken to be ``unobserved'' or ``unknown'' (cf.~\citet{spekkens2019}, p.~7). To describe
this scenario within our framework, suppose that a target system
is coupled with an ancillary (``environment''). The set of possible
configurations of the ancilla is assumed to be $\Lambda=\left\{ 1,2,...,M\right\} $.
Suppose that the system-ancilla composite is subjected to a deterministic
dynamics, in such a way that the configuration of the target system
at $t\neq0$ is uniquely determined by the joint configuration of
the target system plus the ancilla at $t=0$. This is described by
a map 
\begin{equation}
\SA:T\times C\times\Lambda\rightarrow C
\end{equation}
with the property that for any $\alpha\in\Lambda$, $\SA(0,\cdot,\alpha)$
is the identity function on $C$.

We now consider an ensemble of copies of the system-ancilla composite,
each prepared in some initial configuration and evolving in time according
to $\SA$. The distribution of initial configurations across
the ensemble is specified by a probability vector $\vec{\pi}\in\mathcal{S}_{NM}$,
representing a probability distribution on $C\times\Lambda$.

For a given initial configuration $i\in C$ and $\alpha\in\Lambda$
of the system-ancilla composite, the evolution of the target system
is uniquely determined. This defines a Dirac measure on the set of
trajectories of the target system, concentrated at the trajectory corresponding
to the initial configuration pair $\left(i,\alpha\right)$: 
\begin{equation}
\mm_{i,\alpha}^{\SA}\left(E_{i_{1}}\left(t_{1}\right)\wedge...\wedge E_{i_{n}}\left(t_{n}\right)\right)=\begin{cases}
1 & \textrm{if }\SA\left(t_{1},i,\alpha\right)=i_{1},...,\SA\left(t_{n},i,\alpha\right)=i_{n}\\
0 & \textrm{otherwise}
\end{cases}\label{eq:comp-joint-prob-rel2}
\end{equation}
for all $n\in\mathbb{N}$, $i_{1},...,i_{n}\in C$, and $t_{1},...,t_{n}\in T$.
The frequencies of the target system's trajectories in the ensemble
can then be written as a statistical mixture of these $\mm_{i,\alpha}^{\SA}$:
\begin{equation}
\mm_{\vec{\pi}}^{\SA}\doteq\sum_{i=1}^{N}\sum_{\alpha=1}^{M}\pi_{i,\alpha}\mm_{i,\alpha}^{\SA}\label{eq:composite-measure2}
\end{equation}
where $\pi_{i,\alpha}$ is the entry of $\vec{\pi}$ giving the probability
that the target system is in configuration $i\in C$ and the ancilla
is in configuration $\alpha\in\Lambda$. For this stochastic process
$\mm_{\vec{\pi}}^{\SA}$, equations \eqref{eq:comp-joint-prob-rel2}-\eqref{eq:composite-measure2}
yield 
\begin{align}
\vec{\mm}_{\vec{\pi}}^{\SA}\left(0\right) & =\vec{\pp}_{0}\label{eq:comp-p0-1}\\
\left(\mM_{\vec{\pi}}^{\SA}\left(t\right)\right)_{ij} & =\mm_{\vec{\pi}}^{\SA}\left(E_{i}\left(t\right)|E_{j}\left(0\right)\right)=\frac{\underset{\underset{\SA\left(t,j,\alpha\right)=i}{\alpha}}{\sum}\pi_{j,\alpha}}{p_{0j}}\label{eq:comp-cond-prob-1}
\end{align}
Here $\vec{\pp}_{0}$ is the marginal distribution of the target system
obtained from $\vec{\pi}$ via 
\begin{equation}
\pp_{0i}\doteq\sum_{\alpha=1}^{M}\pi_{i,\alpha}\label{eq:marginal}
\end{equation}
for all $i\in C$. Note that, although the overall system is governed
by a deterministic dynamics $\SA$, the matrix entries in
\eqref{eq:comp-cond-prob-1} are no longer restricted to $\left\{ 0,1\right\} $;
and in general they depend on the initial distribution $\vec{\pi}$,
specifically through its marginal $\vec{\pp}_{0}$.

Suppose now that the initial distribution of the system-ancilla composite
is not fixed but may vary. Instead of a single distribution $\vec{\pi}\in\mathcal{S}_{NM}$,
consider a family 
\begin{equation}\left\{ \vec{\pi}_{\vec{\pp}_{0}}\in\mathcal{S}_{NM}\right\} {}_{\vec{\pp}_{0}\in\mathcal{S}_{N}}
\end{equation}
The subscript $\vec{\pp}_{0}$ indicates that the marginal distribution
of the target system obtained from $\vec{\pi}_{\vec{\pp}_{0}}$, in
the sense of \eqref{eq:marginal}, is $\vec{\pp}_{0}$. That is, the
family specifies a joint initial distribution for the composite for
each possible initial distribution of the target.

In this setting, the mapping\footnote{If the family of initial distributions were not chosen to be of the
form $\left\{ \vec{\pi}_{\vec{\pp}_{0}}\in\mathcal{S}_{NM}\right\} {}_{\vec{\pp}_{0}\in\mathcal{S}_{N}}$,
there would be no guarantee that $\bM^{\SA}$ yields a well-defined
map, that is a unique assignment determined by the target system’s
initial distribution $\vec{\pp}_{0}$. This is because in general
different initial distributions $\vec{\pi}$ of the system-ancilla
composite can share the same marginal $\vec{\pp}_{0}$ on the system.
A related point is made by \citet{spekkens2019}, Sec.~III/E, but
with the diagnosis that the initial distribution of the system-ancilla
composite must be chosen so that the system and ancilla are uncorrelated.
We discuss this special case below.} $\bM^{\SA}:\vec{\pp}_{0}\mapsto\mm_{\vec{\pp}_{0}}^{\SA}\doteq\mm_{\vec{\pi}_{\vec{\pp}_{0}}}^{\SA}$
defines a stochastic process family. This family, in turn, induces
a probability dynamics $\bP^{\SA}$ through the equation of
implementation, 
\begin{equation}
\bP^{\SA}\left(t,\vec{\pp}_{0}\right)\doteq\vec{\mm}_{\vec{\pp}_{0}}^{\SA}\left(t\right)\label{eq:det-implement-12}
\end{equation}
for all $t\in T$ and for all $\vec{\pp}_{0}\in\mathcal{S}_{N}$.
$\bP^{\SA}$ will be referred to as \emph{system-ancilla statistical
dynamics}.

In contrast to $\bM^{D}$ and $\bP^{D}$, which are constrained by Propositions~\ref{prop:det-linear}, \ref{prop:det-decomp-markov}, and \ref{prop:det-divis}, $\bM^{\SA}$ and $\bP^{\SA}$ are far more general. 
This is made precise by the following simple result: 

\begin{restatable}{prop}{thereexistsenvironment}
\label{prop:thereexistsenvironment} Let $\bM: \vec{\pp}_0 \mapsto \mm_{\vec{\pp}_0}$ be an arbitrary stochastic
process family with $C=\left\{ 1,...,N\right\} $ and $T=\left\{ 0,1,...,\tau\right\} $
finite. Then there exists an ancillary system $\Lambda=\left\{ 1,...,M\right\} $,
a deterministic system-ancilla dynamics $\SA:T\times C\times\Lambda\rightarrow C$,
and a family of probability distributions $\left\{ \vec{\pi}_{\vec{\pp}_{0}}\in\mathcal{S}_{NM}\right\} {}_{\vec{\pp}_{0}\in\mathcal{S}_{N}}$
such that $\bM^{\SA}=\bM$. 
\end{restatable}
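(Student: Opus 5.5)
Since $T=\{0,1,\dots,\tau\}$ and $C$ are finite, the sample space $\Omega=C^{T}$ is finite, with $N^{\tau+1}$ elements, and each $\mm_{\vec{\pp}_0}$ is just a probability vector on $\Omega$. The natural plan is to let the ancilla encode the entire future of the trajectory. Take $\Lambda$ to be the set of all ``future tails'' $\alpha=(\alpha_1,\dots,\alpha_\tau)\in C^{\tau}$, enumerated as $\{1,\dots,M\}$ with $M=N^{\tau}$. Define the deterministic dynamics by
\[
\SA(0,i,\alpha)\doteq i,\qquad \SA(t,i,\alpha)\doteq\alpha_t\quad(t=1,\dots,\tau).
\]
Then $\SA(0,\cdot,\alpha)$ is the identity on $C$, as required. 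The pair $(i,\alpha)$ is in bijection with the trajectory $\omega_{i,\alpha}\in\Omega$ given by $\omega(0)=i$ and $\omega(t)=\alpha_t$. By \eqref{eq:comp-joint-prob-rel2}, $\mm^{\SA}_{i,\alpha}$ is therefore exactly the Dirac measure $\delta_{\omega_{i,\alpha}}$ on $(\Omega,\mathcal{F})$; it agrees with it on all cylinder sets, and cylinder sets generate $\mathcal{F}$, which here is the power set.

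Next, for each $\vec{\pp}_0\in\mathcal{S}_N$, define the initial distribution
\[
\pi_{\vec{\pp}_0;\,i,\alpha}\doteq\mm_{\vec{\pp}_0}\bigl(\{\omega_{i,\alpha}\}\bigr)=\mm_{\vec{\pp}_0}\bigl(E_i(0)\wedge E_{\alpha_1}(1)\wedge\dots\wedge E_{\alpha_\tau}(\tau)\bigr).
\]
These are nonnegative and sum to $1$, so $\vec{\pi}_{\vec{\pp}_0}\in\mathcal{S}_{NM}$. Its target marginal in the sense of \eqref{eq:marginal} is $\sum_\alpha\pi_{\vec{\pp}_0;i,\alpha}=\mm_{\vec{\pp}_0}(E_i(0))=\pp_{0i}$, using the defining property $\vec{\mm}_{\vec{\pp}_0}(0)=\vec{\pp}_0$ of a stochastic process family (Definition~\ref{def:implementation}(i)). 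This is exactly the labeling requirement that the subscript $\vec{\pp}_0$ encode the marginal.

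Finally, by \eqref{eq:composite-measure2},
\[
\mm^{\SA}_{\vec{\pp}_0}=\sum_{i,\alpha}\mm_{\vec{\pp}_0}(\{\omega_{i,\alpha}\})\,\delta_{\omega_{i,\alpha}}.
\]
Since $(i,\alpha)\mapsto\omega_{i,\alpha}$ is a bijection onto the finite set $\Omega$, the right-hand side is the decomposition of $\mm_{\vec{\pp}_0}$ into atoms. Hence $\mm^{\SA}_{\vec{\pp}_0}=\mm_{\vec{\pp}_0}$ for every $\vec{\pp}_0$, that is, $\bM^{\SA}=\bM$. There is no real obstacle. The only points needing care are two bookkeeping checks: that the atomic decomposition and the identification of $\mm^{\SA}_{i,\alpha}$ with a Dirac measure are legitimate, which is immediate because $\Omega$ is finite, and that the family $\{\vec{\pi}_{\vec{\pp}_0}\}$ has the correct marginals, which is what makes $\bM^{\SA}$ a well-defined map indexed by $\vec{\pp}_0$. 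Finiteness of $T$ is used essentially here, to keep $\Lambda$ finite.
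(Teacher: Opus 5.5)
Your proposal is correct and follows essentially the same route as the paper. Both identify $\Lambda$ with $C^{\tau}$ so that each pair $(i,\alpha)$ labels a unique trajectory $\omega_{i,\alpha}$, let $\SA(t,i,\alpha)$ read off that trajectory, and set $(\pi_{\vec{\pp}_0})_{i,\alpha}=\mm_{\vec{\pp}_0}(\{\omega_{i,\alpha}\})$, so that the mixture of Dirac measures reproduces $\mm_{\vec{\pp}_0}$. Your explicit check that $\vec{\pi}_{\vec{\pp}_0}$ has marginal $\vec{\pp}_0$, via $\vec{\mm}_{\vec{\pp}_0}(0)=\vec{\pp}_0$, is a detail the paper leaves implicit.
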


\noindent {\em Sketch of proof.} A large enough ancilla can label a complete system trajectory, allowing a probability distribution over the joint system-ancilla degrees of freedom to reproduce any stochastic process. (Formal proof: Appendix \ref{proof:thereexistsenvironment}.)

Proposition~\ref{prop:thereexistsenvironment} shows that any stochastic
process family $\bM$ (at least when its time index set is finite)
can, in principle, be realized as a system-ancilla process family
$\bM^{\SA}$. Since, by Proposition~\ref{prop:implementsalways},
every probability dynamics is implemented by a stochastic process
family, it follows that any probability dynamics $\bP$ (again, with
a finite time index set) can be represented as a system-ancilla statistical
dynamics, in the sense that $\bP^{\SA}=\bP$. As we will now
see, restrictions arise only when constraints are imposed on the initial
probability distributions of the system-ancilla composite.

Suppose that the initial configurations of the system and the ancilla
are \emph{uncorrelated} and that the probability distribution of the
ancilla can be kept \emph{fixed}. That is, assume there exists a fixed
probability distribution $\vec{\lambda}_{0}\in\mathcal{S}_{M}$ such
that 
\begin{equation}
(\pi_{\vec{\pp}_{0}})_{i,\alpha}=\pp_{0i}\lambda_{0\alpha}\label{eq:environmentsystemindependence}
\end{equation}
for all $i\in C$, $\alpha\in\Lambda$, and for all $\vec{\pp}_{0}\in\mathcal{S}_{N}$.
Substituting into \eqref{eq:composite-measure2} gives 
\begin{equation}
\mm_{\vec{\pp}_{0}}^{\SA}=\sum_{i=1}^{N}\sum_{\alpha=1}^{M}\pp_{0i}\lambda_{0\alpha}\mm_{i,\alpha}^{\SA}\label{eq:comp-measure2}
\end{equation}
and substituting into \eqref{eq:comp-cond-prob-1} yields 
\begin{equation}
\left(\mM_{\vec{\pp}_{0}}^{\SA}\left(t\right)\right)_{ij}=\mm_{\vec{\pp}_{0}}^{\SA}\left(E_{i}\left(t\right)|E_{j}\left(0\right)\right)=\underset{\underset{\SA\left(t,j,\alpha\right)=i}{\alpha}}{\sum}\lambda_{0\alpha}\label{eq:comp-cond-prob2}
\end{equation}
Crucially, \eqref{eq:comp-cond-prob2} depends only on $\vec{\lambda}_{0}$
and is independent of $\vec{\pp}_{0}$. Hence, $\bM^{\SA}$
is transition-constant, and one can introduce the common matrix $\mM^{\SA}\left(t\right)$
as in \eqref{eq:common-M}. By Proposition~\ref{prop:transition-const-implies-linearity},
it follows that: 
\begin{prop}
\label{prop:comp-linear}Suppose that the initial probability distributions
of the system and the ancilla are independent in the sense of equation~\eqref{eq:environmentsystemindependence}.
Then \textup{$\bP^{\SA}$} is linear and $\mP^{\SA}\left(t\right)=\mM^{\SA}\left(t\right)$,
for all $t\in T$. 
\end{prop}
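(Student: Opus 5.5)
The plan is to reduce the statement to Proposition~\ref{prop:transition-const-implies-linearity}. Almost all the work therefore lies in verifying, with some care, that the family $\bM^{\SA}$ built from the product initial distributions \eqref{eq:environmentsystemindependence} really is a stochastic process family implementing $\bP^{\SA}$, and that it is transition-constant in the sense of Definition~\ref{def:transition-const}.

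\emph{Step 1: $\bM^{\SA}$ is a stochastic process family.} Each $\mm^{\SA}_{i,\alpha}$ is a Dirac measure on the cylinder $\sigma$-algebra. By \eqref{eq:comp-measure2}, $\mm^{\SA}_{\vec{\pp}_0}$ is a convex combination of these measures with weights $\pp_{0i}\lambda_{0\alpha}$, which are nonnegative and sum to one, so it is a probability measure. Since $\SA(0,\cdot,\alpha)$ is the identity, we have $\mm^{\SA}_{\vec{\pp}_0}(E_j(0))=\sum_{\alpha}\pp_{0j}\lambda_{0\alpha}=\pp_{0j}$. Hence $\vec{\mm}^{\SA}_{\vec{\pp}_0}(0)=\vec{\pp}_0$, as Definition~\ref{def:implementation}(i) requires. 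By the defining equation \eqref{eq:det-implement-12}, $\bM^{\SA}$ implements $\bP^{\SA}$. Moreover, $\bP^{\SA}(0,\cdot)$ is the identity, so $\bP^{\SA}$ is a genuine probability dynamics.

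\emph{Step 2: transition-constancy.} Fix $\vec{\pp}_0$ with $\pp_{0j}>0$. We compute
\[
\mm^{\SA}_{\vec{\pp}_0}(E_i(t)\wedge E_j(0))=\sum_{\alpha:\,\SA(t,j,\alpha)=i}\pp_{0j}\lambda_{0\alpha}.
\]
Only terms with initial system configuration $j$ contribute, again because $\SA(0,\cdot,\alpha)$ is the identity. Dividing by $\pp_{0j}$ gives \eqref{eq:comp-cond-prob2}, namely $\sum_{\alpha:\,\SA(t,j,\alpha)=i}\lambda_{0\alpha}$. This depends only on $i$, $j$, $t$ and the fixed $\vec{\lambda}_0$, and not on $\vec{\pp}_0$. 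So \eqref{eq:transition-constant} holds for all admissible $\vec{\pp}_0,\vec q_0$, and the common matrix $\mM^{\SA}(t)$ of \eqref{eq:common-M} is well defined.

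\emph{Step 3: conclusion.} Proposition~\ref{prop:transition-const-implies-linearity} now gives that $\bP^{\SA}$ is linear and that $\mP^{\SA}(t)=\mM^{\SA}(t)$ for all $t\in T$.

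Nothing here is genuinely hard; the main point needing care is the treatment of boundary initial conditions with some $\pp_{0j}=0$. Transition-constancy only constrains defined entries, so there is no issue in Step 2. Linearity on all of $\mathcal{S}_N$, including the boundary, is already handled inside Proposition~\ref{prop:transition-const-implies-linearity}. As a safeguard, I would also note directly that
\[
\bP^{\SA}_t(\vec{\pp}_0)_i=\sum_j\pp_{0j}\sum_{\alpha:\,\SA(t,j,\alpha)=i}\lambda_{0\alpha}.
\]
This expression is manifestly linear in $\vec{\pp}_0$ on the whole simplex, which confirms the conclusion independently.
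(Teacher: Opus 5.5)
Your proposal is correct and follows essentially the same route as the paper. Substituting the product initial distribution \eqref{eq:environmentsystemindependence} into the conditional-probability formula gives \eqref{eq:comp-cond-prob2}, which is independent of $\vec{\pp}_0$. Hence $\bM^{\SA}$ is transition-constant, and Proposition~\ref{prop:transition-const-implies-linearity} gives the conclusion. Your extra checks are sound additions that the paper leaves implicit: that $\bM^{\SA}$ is a genuine stochastic process family with $\vec{\mm}^{\SA}_{\vec{\pp}_0}(0)=\vec{\pp}_0$, and the explicit linear formula for $\bP^{\SA}_t$.
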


Thus, a \emph{system-ancilla} statistical dynamics with independent
system and ancilla generalizes the concept of \emph{deterministic}
statistical dynamics: it yields a linear probability dynamics implemented
by a transition-constant stochastic process family, in which the transition
matrices are no longer restricted to entries 0 and 1.

\subsection{Stochastic Statistical Dynamics}

Another natural way to generalize single-system deterministic statistical dynamics is to allow the underlying dynamics itself to be stochastic rather than deterministic. In analogy with map $D$, such a dynamical system can be characterized
by a function 
\begin{equation}
S:T\times C\rightarrow\mathcal{S}_{N}
\end{equation}
with $S\left(0,\cdot\right):C\rightarrow\mathcal{S}_{N},i\mapsto\vec{e}_{i}$.
Here the $j$th entry $S_{j}\left(t,i\right)$ of $S\left(t,i\right)$
is interpreted as the probability that the system is in configuration
$j$ at time $t$, given that it started in configuration $i$. However,
if this probability is to be understood as a two-time conditional
probability defined by Bayes' rule---that is, as a ``transition
probability'' of type 1 in the terminology of Section~\ref{sec:Transition}---then
the map $S$ itself is not the appropriate mathematical object. As
discussed in Section~\ref{sec:twodescriptions}, this is because
the elements of the range of $S$, vectors in $\mathcal{S}_{N}$,
encode only one-time probabilities, whereas two-time conditional probabilities
require reference to two-time joint probabilities. Accordingly, the
stochastic system is more aptly represented by a collection of stochastic
processes. Specifically, consider $N$ stochastic processes, $\mm_{i}^{S}$
($i=1,...,N$), with
\begin{align}
\vec{\mm}_{i}^{S}\left(t\right) & =S\left(t,i\right)\label{eq:Si-process}
\end{align}
for all $t\in T$ and $i=1,...,N$. (By Proposition~\ref{prop:Markovian-implementation}, such processes always exist.) Since $S\left(0,i\right)=\vec{e}_{i}$,
each stochastic process $\mm_{i}^{S}$ describes the distribution
of trajectories in configuration space for a system initialized in
configuration $i$ with probability one. For each such process, the
two-time conditional probabilities satisfy 
\begin{equation}
\mm_{i}^{S}\left(E_{j}\left(t\right)|E_{i}\left(0\right)\right)=\mm_{i}^{S}\left(E_{j}\left(t\right)\right)=S_{j}\left(t,i\right)
\end{equation}
for all $t\in T$ and $j=1,...,N$. Accordingly, the transition
matrix $\mM_{i}^{S}\left(t\right)$ associated with $\mm_{i}^{S}$
has its $i$th column equal to the probability vector $S\left(t,i\right)$;
while all other entries of it are undefined, as they would correspond
to conditional probabilities with zero-probability conditioning events
(since $\mm_{i}^{S}\left(E_{j}\left(0\right)\right)=0$ for $j\neq i$).
Note that the Dirac measure $\mm_{i}^{D}$ defined in \eqref{eq:det-joint-prob}
is a special case of such a stochastic process $\mm_{i}^{S}$: for
every $t\in T$, the vector $\vec{\mm}_{i}^{S}\left(t\right)$, and
hence the well-defined column of $\mM_{i}^{S}\left(t\right)$, is
one of the standard basis vectors. Unlike $\mm_{i}^{D}$, which is
uniquely determined by the map $D$, a general stochastic process
$\mm_{i}^{S}$ is not determined by the map $S$ alone (again, since
$S$ encodes only one-time probabilities, but no multi-time probabilities).
In what follows, we therefore take the stochastic dynamical system
to be described directly by the collection of processes $\mm_{1}^{S},...,\mm_{N}^{S}$
rather than by the map $S$ itself. To indicate the underlying stochastic
dynamics, we will henceforth use the symbol $S$ not for a map but
for this collection of stochastic processes; thus $S=\left(\mm_{1}^{S},...,\mm_{N}^{S}\right)$.
In this setting, the only requirement imposed on these processes is
that
\begin{equation}
\mm_{i}^{S}\left(E_{j}\left(0\right)\right)=\begin{cases}
1 & \textrm{if }j=i\\
0 & \textrm{if }j\neq i
\end{cases}\label{eq:feature-mu-i-S}
\end{equation}
for all $i,j=1,...,N$.

Now consider a statistical ensemble of copies of the system, each
copy being initialized in a definite configuration and evolving in
time according to the stochastic dynamics $S$. More precisely, suppose
that 1)~$\pp_{0i}$ specifies the relative frequency of copies in
the ensemble that start in configuration $i$; and 2)~within each
subensemble of systems that start from a given configuration $i$,
the frequency distribution of trajectories over time is given by $\mm_{i}^{S}$.
Then the statistical mixture
\begin{equation}
\mm_{\vec{\pp}_{0}}^{S}\doteq\sum_{i=1}^{N}\pp_{0i}\mm_{i}^{S}\label{eq:mixture-stochastic}
\end{equation}
with $\vec{\pp}_{0}=\left(\pp_{01},...,\pp_{0N}\right)\in\mathcal{S}_{N}$,
describes the distribution of trajectories for the ensemble as a whole.
Note that $\mm_{\vec{\pp}_{0}}^{D}$, as given by \eqref{eq:composite-measure1},
is a special case of this mixture, obtained by setting $\mm_{i}^{S}=\mm_{i}^{D}$
for all $i=1,...,N$.

From \eqref{eq:feature-mu-i-S}, a straightforward calculation shows
that for the stochastic process $\mm_{\vec{\pp}_{0}}^{S}$ we have
\begin{align}
\vec{\mm}_{\vec{\pp}_{0}}^{S}\left(0\right) & =\vec{\pp}_{0}\\
\left(\mM_{\vec{\pp}_{0}}^{S}\left(t\right)\right)_{ij} & =\left(\mM_{j}^{S}\left(t\right)\right)_{ij}\label{eq:matrix-M-p0-S}
\end{align}
The entries in \eqref{eq:matrix-M-p0-S} can now take values other
than 0 and 1, but, crucially, remain independent of $\vec{\pp}_{0}$.
Hence, the mapping $\bM^{S}:\vec{\pp}_{0}\mapsto\mm_{\vec{\pp}_{0}}^{S}$
yields a stochastic process family that is transition-constant, with
a common matrix $\mM^{S}\left(t\right)$ specified as in \eqref{eq:common-M}.
We then introduce probability dynamics $\bP^{S}$ via the equation
of implementation: 
\begin{equation}
\bP^{S}\left(t,\vec{\pp}_{0}\right)\doteq\vec{\mm}_{\vec{\pp}_{0}}^{S}\left(t\right)\label{eq:det-implement-1}
\end{equation}
for all $t\in T$ and $\vec{\pp}_{0}\in\mathcal{S}_{N}$. The probability
vector trajectory $t\mapsto\bP^{S}\left(t,\vec{\pp}_{0}\right)$ describes
how the relative frequencies of configurations evolve over time in
the ensemble, as each system copy is driven through configuration
space by the stochastic dynamics $S$. Since $\bM^{S}$ is transition-constant,
Proposition~\ref{prop:transition-const-implies-linearity} implies
the following.
\begin{prop}
\textup{\label{prop:stochastic-linear}$\bP^{S}$} is linear, with
$\mP^{S}\left(t\right)=\mM^{S}\left(t\right)$ for all $t\in T$.
\end{prop}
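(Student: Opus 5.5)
The plan is to reduce the claim to Proposition~\ref{prop:transition-const-implies-linearity}. To do so, we first show that $\bM^{S}$ is a genuine stochastic process family implementing $\bP^{S}$, and then that it is transition-constant with common matrix $\mM^{S}(t)$.

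\emph{Step 1: $\bM^{S}$ is a stochastic process family implementing $\bP^{S}$.} For each $\vec{\pp}_{0}$, the mixture $\mm^{S}_{\vec{\pp}_{0}}=\sum_i \pp_{0i}\mm^{S}_i$ is a convex combination of probability measures on $(\Omega,\mathcal{F})$, hence itself a probability measure. Evaluating on $E_j(0)$ and using \eqref{eq:feature-mu-i-S} gives $\mm^{S}_{\vec{\pp}_{0}}(E_j(0))=\pp_{0j}$, so $\vec{\mm}^{S}_{\vec{\pp}_{0}}(0)=\vec{\pp}_{0}$. This is exactly condition (i) of Definition~\ref{def:implementation}. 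Implementation of $\bP^{S}$ then holds by the very definition \eqref{eq:det-implement-1}. We should also check that $\bP^{S}$ is a probability dynamics, i.e.\ that $\bP^{S}_0$ is the identity; this is the same computation.

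\emph{Step 2: transition-constancy.} Fix $i,j$ and $t$, and take $\pp_{0j}\neq0$. By \eqref{eq:feature-mu-i-S}, each $\mm^{S}_k$ with $k\neq j$ gives zero mass to $E_j(0)$, and $\mm^{S}_j(E_j(0))=1$. Therefore
\[
\mm^{S}_{\vec{\pp}_{0}}(E_i(t)\wedge E_j(0))=\pp_{0j}\,\mm^{S}_j(E_i(t)\wedge E_j(0))=\pp_{0j}\,\mm^{S}_j(E_i(t)).
\]
Dividing by $\mm^{S}_{\vec{\pp}_{0}}(E_j(0))=\pp_{0j}$ yields
\[
\mm^{S}_{\vec{\pp}_{0}}(E_i(t)\mid E_j(0))=\mm^{S}_j(E_i(t))=S_i(t,j)=(\mM^{S}_j(t))_{ij}.
\]
The right-hand side does not depend on $\vec{\pp}_{0}$. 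This establishes \eqref{eq:matrix-M-p0-S} and transition-constancy in the sense of Definition~\ref{def:transition-const}. The common matrix $\mM^{S}(t)$ of \eqref{eq:common-M} thus has $j$th column $\vec{\mm}^{S}_j(t)$.

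\emph{Step 3: conclusion.} Proposition~\ref{prop:transition-const-implies-linearity} now gives that $\bP^{S}$ is linear and that $\mP^{S}(t)=\mM^{S}(t)$ for all $t\in T$.

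As a sanity check, one can verify linearity directly. By linearity of the mixture in $\vec{\pp}_{0}$, $\bP^{S}_t(\vec{\pp}_{0})=\sum_j \pp_{0j}\vec{\mm}^{S}_j(t)$, which manifestly preserves convex combinations. This also covers boundary $\vec{\pp}_{0}$ with zero entries, where the conditional-probability route is only partially defined; there is no real obstacle.

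The only delicate point is the zero-probability bookkeeping in Step 2. Conditionals are undefined when $\pp_{0j}=0$, which is why Definition~\ref{def:transition-const} restricts to $\pp_{0j}\neq0$. The direct mixture formula above sidesteps this issue entirely.
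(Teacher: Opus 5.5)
Your proposal is correct and follows essentially the same route as the paper. The paper obtains \eqref{eq:matrix-M-p0-S} from \eqref{eq:feature-mu-i-S} by a ``straightforward calculation'', which your Step~2 writes out. It then concludes that $\bM^{S}$ is transition-constant and invokes Proposition~\ref{prop:transition-const-implies-linearity}. Your closing direct check that $\bP^{S}_t(\vec{\pp}_{0})=\sum_j \pp_{0j}\vec{\mm}^{S}_j(t)$ is a harmless extra; initial vectors with zero entries are already handled in the formal proof of Proposition~\ref{prop:transition-const-implies-linearity}.
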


The probabilities appearing in a probability dynamics $\bP^{S}$ originate
from two conceptually distinct sources. The first source is the stochastic
nature of the underlying dynamics governing the evolution of individual
system copies; this aspect is encoded in the collection of stochastic
processes $\mm_{i}^{S}$. The second source is the statistical character
of the ensemble itself: we consider a collection of such copies whose
initial configurations are distributed according to $\vec{\pp}_{0}$.
Importantly, the fact that the resulting probability dynamics is linear
has nothing to do with the first source---indeed, it holds regardless
of whether the underlying dynamics is deterministic or stochastic---but
everything to do with the second source. Accordingly, we will refer to $\bP^{S}$ (of which $\bP^{D}$
is a special case) as \emph{stochastic statistical dynamics}.

\subsection{Relating System-Ancilla and Stochastic Statistical Dynamics}\label{subsec:relating}

As we have seen, \emph{stochastic} statistical dynamics generalizes the concept of \emph{deterministic} statistical dynamics in the same way as \emph{system-ancilla} statistical dynamics with independent system and ancilla does: it yields a linear probability dynamics implemented by a transition-constant stochastic process family, in which the transition matrices are no longer restricted to entries 0 and 1.

In fact, there is a close correspondence between stochastic statistical
dynamics and system-ancilla statistical dynamics with independent
system and ancilla. To see this, consider a deterministic system-ancilla
dynamics $\SA$, and an initial probability distribution $\vec{\lambda}_{0}\in\mathcal{S}_{M}$
over the ancilla configurations. These data induce a stochastic dynamical
system $S$ on the system configurations by
\begin{equation}
\mm_{i}^{S} = \sum_{\alpha=1}^{M}\lambda_{0\alpha}\mm_{i,\alpha}^{\SA}\label{eq:S-from-D}
\end{equation}
Since $\SA(0,\cdot,\alpha)$ is the identity function, it
follows that
\begin{equation}
\mm_{i,\alpha}^{\SA}\left(E_{j}\left(0\right)\right)=\begin{cases}
1 & \textrm{if }i=j\\
0 & \textrm{if }i\neq j
\end{cases}
\end{equation}
and hence that each $\mm_{i}^{S}$ satisfies the defining requirement
\eqref{eq:feature-mu-i-S}. As introduced in \eqref{eq:S-from-D},
$\mm_{i}^{S}$ represents a statistical mixture of system trajectories
starting from configuration $i$, where the mixing is taken over the
``unknown'' initial configuration of the ancilla. With this identification
of $\mm_{i}^{S}$, and under the assumption that the system and ancilla
are initially independent, we obtain, for all $\vec{\pp}_{0}\in\mathcal{S}_{N}$,
\begin{equation}
\mm_{\vec{\pp}_{0}}^{\SA}=\sum_{i=1}^{N}\sum_{\alpha=1}^{M}\pp_{0i}\lambda_{0\alpha}\mm_{i,\alpha}^{\SA}=\sum_{i=1}^{N}\pp_{0i}\mm_{i}^{S}=\mm_{\vec{\pp}_{0}}^{S}\label{eq:D-S-correspondence}
\end{equation}
Here we used \eqref{eq:S-from-D}, the independence condition \eqref{eq:environmentsystemindependence},
and definitional equations \eqref{eq:composite-measure2} and \eqref{eq:mixture-stochastic}.

Thus, a deterministic evolution of the system-ancilla composite manifests
itself, at the level of the system alone, as a stochastic law of evolution.
One may then ask, conversely, whether any apparent stochasticity in
the evolution of a target system can, in principle, be accounted for
by deterministic interactions with ``unknown'' degrees of freedom of an ancilla with initially independent, fixed distribution. The answer is affirmative (at least for finite time index
sets):

\begin{restatable}{prop}{thereexistsenvironmentS}
\label{prop:thereexistsenvironment-S} Let $S$ be an arbitrary stochastic
dynamical system with $C=\left\{ 1,...,N\right\} $ and $T=\left\{ 0,1,...,\tau\right\} $
finite. Then there exists an ancillary system $\Lambda=\left\{ 1,...,M\right\} $,
a deterministic system-ancilla dynamics $\SA:T\times C\times\Lambda\rightarrow C$,
and a probability distribution $\vec{\lambda}_{0}\in\mathcal{S}_{M}$,
such that

(i) $\mm_{i}^{S}=\sum_{\alpha=1}^{M}\lambda_{0\alpha}\mm_{i,\alpha}^{\SA}$,
for all $i\in C$;

(ii) with the family of uncorrelated distributions $\left\{ \vec{\pi}_{\vec{\pp}_{0}}\right\} {}_{\vec{\pp}_{0}\in\mathcal{S}_{N}}$
defined by \eqref{eq:environmentsystemindependence}, $\bM^{\SA}=\bM^{S}$.
\end{restatable}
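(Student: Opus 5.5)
The plan is to let the ancilla label a complete trajectory of the target system, as in the sketch of Proposition~\ref{prop:thereexistsenvironment}. Since $C$ and $T=\{0,1,\dots,\tau\}$ are finite, $\Omega=C^{T}$ is a finite set with $N^{\tau+1}$ elements, and each $\mm_{i}^{S}$ is determined by its values on singletons $\{\omega\}$. Take $\Lambda\doteq C^{T}$, enumerated as $\{1,\dots,M\}$ with $M=N^{\tau+1}$, so that an ancilla configuration $\alpha$ is a trajectory $\omega_\alpha:T\rightarrow C$. Define the deterministic system-ancilla dynamics by
\[
\SA(0,i,\alpha)\doteq i,\qquad \SA(t,i,\alpha)\doteq\omega_{\alpha}(t)\ \ (t\neq0).
\]
This satisfies the requirement that $\SA(0,\cdot,\alpha)$ be the identity.

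The key choice is the ancilla distribution $\vec{\lambda}_{0}$. It must be independent of $i$, yet reproduce $\mm_{i}^{S}$ for every $i$. The natural candidate uses the average of the $N$ processes:
\[
\lambda_{0\alpha}\doteq\frac{1}{N}\sum_{k=1}^{N}\mm_{k}^{S}(\{\omega_{\alpha}\}).
\]
This only works if $\mm_i^S$ can be recovered from $\vec\lambda_0$ by conditioning on the time-$0$ value, which the $t=0$ override in $\SA$ discards. So I would instead let the ancilla carry a separate trajectory for each possible initial configuration. Set $\Lambda\doteq(C^{T})^{N}$, write $\alpha=(\omega^{1},\dots,\omega^{N})$, and define
\[
\SA(t,i,\alpha)\doteq\omega^{i}(t)\ (t\neq0),\qquad \SA(0,i,\alpha)\doteq i,
\]
with the product measure
\[
\lambda_{0\alpha}\doteq\prod_{k=1}^{N}\mm_{k}^{S}(\{\omega^{k}\}).
\]
This is a probability vector on $\Lambda$.

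To verify (i), compute $\sum_\alpha\lambda_{0\alpha}\mm^{\SA}_{i,\alpha}$ on a singleton $\{\omega\}$. The Dirac measure $\mm_{i,\alpha}^{\SA}$ from \eqref{eq:comp-joint-prob-rel2} is concentrated on the trajectory $t\mapsto\SA(t,i,\alpha)$. That trajectory equals $\omega$ iff $\omega(0)=i$ and $\omega^{i}(t)=\omega(t)$ for all $t\neq0$. Summing the product measure over the other coordinates $\omega^{k}$, $k\neq i$, gives $1$ for each of them, leaving
\[
\sum_{\omega^{i}:\ \omega^{i}|_{T\setminus\{0\}}=\omega|_{T\setminus\{0\}}}\mm_i^S(\{\omega^i\})\quad\text{if }\omega(0)=i,\qquad 0\ \text{otherwise.}
\]
By \eqref{eq:feature-mu-i-S}, $\mm_{i}^{S}$ vanishes on trajectories with $\omega^{i}(0)\neq i$. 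So the only surviving term is $\omega^{i}=\omega$, and the sum equals $\mm_{i}^{S}(\{\omega\})$. If $\omega(0)\neq i$, the value $0$ also agrees with $\mm_i^S(\{\omega\})=0$. Since measures on the finite $\Omega$ agree iff they agree on singletons, (i) follows.

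Part (ii) is then immediate. Take the uncorrelated family \eqref{eq:environmentsystemindependence} with this $\vec\lambda_0$. The computation \eqref{eq:D-S-correspondence} already in the text, which uses only (i), gives $\mm^{\SA}_{\vec\pp_0}=\mm^S_{\vec\pp_0}$ for all $\vec\pp_0$, i.e.\ $\bM^{\SA}=\bM^{S}$.

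The main obstacle is the one handled above. A single trajectory-labelling ancilla with an $i$-independent distribution cannot reproduce all $\mm_i^S$ simultaneously when the $t=0$ value is overwritten. Using $N$ independent trajectory slots, one per initial configuration, removes this difficulty. The remaining steps are routine bookkeeping with finite sums.
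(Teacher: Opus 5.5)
Your construction is correct and is essentially the paper's proof: the ancilla carries $N$ independent trajectory slots, one per initial configuration, weighted by the product $\prod_{k}\mm_{k}^{S}(\cdot)$, and (ii) then follows from (i) via \eqref{eq:D-S-correspondence}. The only difference is cosmetic. The paper's slots $\alpha^{(r)}\in C^{\tau}$ record only times $1,\dots,\tau$, giving $M=N^{\tau N}$. Your slots carry a redundant time-$0$ entry, which you correctly neutralize using \eqref{eq:feature-mu-i-S}.
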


\noindent \emph{Sketch of proof.} A stochastic dynamical system is composed of $N$ stochastic processes, and we can let a large enough ancilla be composed of $N$ blocks, each of which encodes a full configuration trajectory conditional on the system starting at $i$, and let $\SA(t,i,\alpha)$ read off the $t$-th entry of block $i$.  Choose the ancilla distribution $\vec\lambda_0$ so that block $i$ is distributed according to $\mm_i^S$, so averaging over the ancilla reproduces each $\mm_i^S$, and with $(\pi_{\vec{\pp}_0})_{i,\alpha}=\pp_{0i}\lambda_{0\alpha}$ this yields $\bM^{\SA}=\bM^S$. (Formal proof: Appendix \ref{proof:thereexistsenvironment-S}.)

This result shows that stochastic statistical
dynamics and system-ancilla statistical dynamics with independent
system and ancilla, viewed as linear probability dynamics, have the same degree of generality. In light of this equivalence, it will be convenient to refer to these probability dynamics collectively as \emph{statistical dynamics}. A natural question is whether any linear probability dynamics can, in principle, be realized as a statistical dynamics. The answer is affirmative:

\begin{restatable}{prop}{thereexistsS}
\label{prop:thereexists-S}Let $\bP$ be an arbitrary linear probability
dynamics. Then there exists a stochastic dynamical system $S$ such
that $\bP^{S}=\bP$.
\end{restatable}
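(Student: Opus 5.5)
The plan is to build $S$ column by column from the matrices $\mP(t)$ that represent $\bP$. Since $\bP$ is linear, there is a unique family $\{\mP(t)\}_{t\in T}$ of stochastic matrices with $\bP_t(\vec{\pp}_0)=\mP(t)\vec{\pp}_0$, and $\mP(0)$ is the identity because $\bP_0$ is the identity. For each $i\in C$, consider the probability vector trajectory $t\mapsto \bP(t,\vec{e}_i)=\mP(t)\vec{e}_i$, which is the $i$th column of $\mP(t)$. By Proposition~\ref{prop:Markovian-implementation}(a), this trajectory has an implementation; concretely, I would take its Markovian implementation $\mm_i^S$, with product finite-dimensional distributions. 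Then $\vec{\mm}_i^S(t)=\mP(t)\vec{e}_i$ for all $t\in T$.

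Next I check that $S=(\mm_1^S,\dots,\mm_N^S)$ is a legitimate stochastic dynamical system, i.e.\ that \eqref{eq:feature-mu-i-S} holds. This follows at once: $\vec{\mm}_i^S(0)=\mP(0)\vec{e}_i=\vec{e}_i$, so $\mm_i^S(E_j(0))=\delta_{ij}$. Nothing else is required of $S$.

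Finally I compute $\bP^S$. From \eqref{eq:mixture-stochastic} and \eqref{eq:det-implement-1}, for any $\vec{\pp}_0\in\mathcal{S}_N$ and $t\in T$,
\[
\bP^S(t,\vec{\pp}_0)=\sum_{i=1}^N \pp_{0i}\,\vec{\mm}_i^S(t)=\sum_{i=1}^N \pp_{0i}\,\mP(t)\vec{e}_i=\mP(t)\vec{\pp}_0=\bP(t,\vec{\pp}_0).
\]
The first equality uses that the one-time marginal map $\mm\mapsto\vec{\mm}(t)$ is affine under mixtures of measures, which is immediate since $\mm(E_j(t))$ is linear in $\mm$. Hence $\bP^S=\bP$.

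There is no real obstacle here. The only substantive input is the existence of a stochastic process realizing each column trajectory. For arbitrary, possibly uncountable $T$, that existence rests on the Kolmogorov extension argument already packaged in Proposition~\ref{prop:Markovian-implementation}, and I would simply invoke it. If $T$ contains negative times, the construction is unaffected, since only one-time marginals matter.
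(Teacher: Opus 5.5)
Your proof is correct and is essentially the paper's argument. Implement each vertex trajectory $t\mapsto\bP(t,\vec{e}_i)$ via Proposition~\ref{prop:Markovian-implementation}, use $\bP_0=\mathrm{id}$ to get \eqref{eq:feature-mu-i-S}, and conclude by mixing with weights $\pp_{0i}$ and invoking linearity; writing the vertex trajectories as the columns $\mP(t)\vec{e}_i$ is only a notational difference.
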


\noindent \emph{Sketch of proof.} For each pure initial condition $\vec e_i$, use Proposition~\ref{prop:Markovian-implementation} to pick a stochastic process $\mm_i$ whose one-time marginals reproduce the trajectory $t\mapsto \bP(t,\vec e_i)$, and define the stochastic dynamics $S$ by setting $\mm_i^S:=\mm_i$. For a general initial distribution $\vec{\pp}_0$, take the mixture $\mm_{\vec{\pp}_0}^S:=\sum_i p_{0i}\mm_i$, so $\vec\mm_{\vec{\pp}_0}^S(t)=\sum_i p_{0i}\bP(t,\vec e_i)=\bP(t,\vec{\pp}_0)$ by linearity, hence $\bP^S=\bP$. (Formal proof: Appendix \ref{proof:thereexists-S}.)

Proposition~\ref{prop:thereexistsenvironment-S} implies that any
statistical dynamics $\bP^{S}$ (at least when its time index set
is finite) is implemented by a stochastic process family of type $\bM^{\SA}$
satisfying the independence condition \eqref{eq:environmentsystemindependence}.  Together with Proposition~\ref{prop:thereexists-S},  this entails that any linear probability
dynamics $\bP$ (at least when its time index set is finite) can be
realized as a system-ancilla statistical dynamics satisfying the independence
condition. This claim is made precise in the following proposition.
\begin{prop}
\label{prop:thereexistsenvironment-lin} Let $\bP$ be a linear probability
dynamics with $T=\left\{ 0,1,...,\tau\right\} $ finite. Then there
exists an ancillary system $\Lambda=\left\{ 1,...,M\right\} $, a
deterministic system-ancilla dynamics $\SA:T\times C\times\Lambda\rightarrow C$,
and a probability distribution $\vec{\lambda}_{0}\in\mathcal{S}_{M}$,
such that, with the family of distributions $\left\{ \vec{\pi}_{\vec{\pp}_{0}}\right\} {}_{\vec{\pp}_{0}\in\mathcal{S}_{N}}$
defined by \eqref{eq:environmentsystemindependence}, we have $\bP^{\SA}=\bP$.
\end{prop}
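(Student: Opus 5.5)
The plan is to chain Proposition~\ref{prop:thereexists-S} with Proposition~\ref{prop:thereexistsenvironment-S}, and then to pass from equality of stochastic process families to equality of the probability dynamics they induce.

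\emph{Step 1.} Start from the given linear probability dynamics $\bP$ with $T=\{0,1,\dots,\tau\}$. Proposition~\ref{prop:thereexists-S} supplies a stochastic dynamical system $S=(\mm_1^S,\dots,\mm_N^S)$ with $\bP^S=\bP$.

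Concretely, $\mm_i^S$ is a Markovian implementation, obtained from Proposition~\ref{prop:Markovian-implementation}, of the trajectory $t\mapsto\bP(t,\vec e_i)$. Since $\bP(0,\vec e_i)=\vec e_i$, this $\mm_i^S$ satisfies the defining requirement \eqref{eq:feature-mu-i-S}. Linearity then gives
\[
\vec\mm^S_{\vec\pp_0}(t)=\sum_i \pp_{0i}\,\bP(t,\vec e_i)=\bP(t,\vec\pp_0).
\]

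\emph{Step 2.} Because $T$ is finite, Proposition~\ref{prop:thereexistsenvironment-S} applies to this $S$. It yields an ancilla $\Lambda=\{1,\dots,M\}$, a deterministic system-ancilla dynamics $\SA:T\times C\times\Lambda\to C$ with $\SA(0,\cdot,\alpha)=\mathrm{id}_C$, and a distribution $\vec\lambda_0\in\mathcal S_M$. With the uncorrelated family $(\pi_{\vec\pp_0})_{i,\alpha}=\pp_{0i}\lambda_{0\alpha}$ of \eqref{eq:environmentsystemindependence}, these satisfy $\bM^{\SA}=\bM^S$.

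\emph{Step 3.} Use the definitions \eqref{eq:det-implement-12} and \eqref{eq:det-implement-1}:
\[
\bP^{\SA}(t,\vec\pp_0)=\vec\mm^{\SA}_{\vec\pp_0}(t)=\vec\mm^S_{\vec\pp_0}(t)=\bP^S(t,\vec\pp_0)=\bP(t,\vec\pp_0)
\]
for all $t\in T$ and all $\vec\pp_0\in\mathcal S_N$. Hence $\bP^{\SA}=\bP$.

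It remains to confirm that the $\pi_{\vec\pp_0}$ are legitimate members of a family indexed by $\vec\pp_0$, i.e.\ that their system marginal is $\vec\pp_0$. This holds because $\sum_\alpha \pp_{0i}\lambda_{0\alpha}=\pp_{0i}$.

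There is essentially no obstacle, since all the work lies in the two cited propositions. The only point needing care is that Proposition~\ref{prop:thereexistsenvironment-S} requires a finite time set and an $S$ whose processes satisfy \eqref{eq:feature-mu-i-S}; Step~1 secures the latter. If one wanted to avoid relying on that proposition, one could instead write the construction out directly. Let the ancilla range over $N$-tuples of trajectories $(\omega_1,\dots,\omega_N)\in(C^{T})^N$, so $M=N^{N(\tau+1)}$. Take $\lambda_0$ to be the product $\prod_i\mm_i^S(\omega_i)$, and set $\SA(t,i,\alpha)=\omega_i(t)$. The only consistency requirement is $\omega_i(0)=i$ on the support, which \eqref{eq:feature-mu-i-S} guarantees; off the support, define $\SA(0,i,\alpha)=i$ so that $\SA(0,\cdot,\alpha)$ is the identity.
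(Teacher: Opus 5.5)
Your proof is correct and follows the paper's route: the paper obtains this proposition by chaining Proposition~\ref{prop:thereexists-S} with Proposition~\ref{prop:thereexistsenvironment-S}, so that $\bM^{\SA}=\bM^{S}$ gives $\bP^{\SA}=\bP^{S}=\bP$. Your optional direct construction is the paper's proof of Proposition~\ref{prop:thereexistsenvironment-S} with a slightly redundant ancilla that also records the time-$0$ entry.
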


Thus, statistical dynamics correspond to the most general class of linear probability dynamics.
This generality comes at a price: while these dynamics satisfy conditions
(a)--(b) listed at the beginning of the section, they---unlike a
single-system deterministic statistical dynamics $\bP^{D}$---fail
to satisfy condition (c), the equivalence of decomposability and divisibility.
The reason is that they can realize linear probability dynamics of
all kinds, including those that are decomposable but not divisible. 

Moreover, by Proposition~\ref{prop:Markovian-implementation}, every
probability dynamics admits a Markovian implementation, and every
non-degenerate probability dynamics admits a non-Markovian implementation
as well. \emph{A fortiori}, every decomposable linear probability
dynamics that is non-degenerate therefore has both types of implementations.
Since no assumptions are made about the Markovianity of the stochastic
processes $\mm_{i}^{S}$ from which $\bM^{S}$ (and, by equivalence,
$\bM^{\SA}$) arises, one should expect on the basis of Proposition~\ref{prop:Markovian-implementation}
that a decomposable linear dynamics will generally admit both Markovian and non-Markovian
implementations \emph{of type $\bM^{S}$ and $\bM^{\SA}$};
and hence that the equivalence between decomposability and Markovianity
formulated in point (d) also breaks down in this more general setting.
The following example illustrates that the equivalences
stated in points (c) and (d) indeed fail for statistical dynamics more general than single-system deterministic ones.

\begin{example}
Consider the probability dynamics introduced in Example~\ref{exa:counterexample},
with $N=2,T=\left\{ 0,1,2\right\} $, and 
\begin{gather}
\mP\left(1\right)=\left(\begin{array}{cc}
1 & \frac{1}{2}\\
0 & \frac{1}{2}
\end{array}\right),\,\,\,\,\,\mP\left(2\right)=\left(\begin{array}{cc}
\frac{1}{2} & 1\\
\frac{1}{2} & 0
\end{array}\right)
\end{gather}
As shown there, $\bP$ is decomposable but not divisible. We now construct
a system-ancilla dynamics $\SA$, together with an initial
probability distribution $\vec{\lambda}_{0}\in\mathcal{S}_{M}$, and
the corresponding stochastic dynamics $S$, such that $\bP^{\SA}=\bP^{S}=\bP$.

Let the configuration space of the ancilla be $\Lambda=\left\{ 1,2\right\} $,
and define dynamics $\SA$ by stipulating the values of $\SA\left(t,i,\alpha\right)$
as follows:
\begin{equation}
\begin{array}{l@{\qquad\qquad}l}
\SA\left(1,1,1\right)=1 & \SA\left(1,1,2\right)=1\\[6pt]
\SA\left(1,2,1\right)=1 & \SA\left(1,2,2\right)=2\\[6pt]
\SA\left(2,1,1\right)=1 & \SA\left(2,1,2\right)=2\\[6pt]
\SA\left(2,2,1\right)=1 & \SA\left(2,2,2\right)=1
\end{array}
\end{equation}
and choose the initial ancilla distribution to be uniform, $\lambda_{01}=\lambda_{02}=\frac{1}{2}$.
The induced stochastic dynamics $S$ can then be described by specifying
the probabilities assigned by $\mm_{1}^{S}$ and $\mm_{2}^{S}$ to
the three-time events $E_{i}\left(0\right)\wedge E_{j}\left(1\right)\wedge E_{k}\left(2\right)$
(trajectories), in accordance with \eqref{eq:S-from-D}. There are
eight such events; the nonzero probabilities are:
\begin{equation}
\begin{array}{l@{\qquad\qquad}l}
\mm_{1}^{S}\left(E_{1}\left(0\right)\wedge E_{1}\left(1\right)\wedge E_{1}\left(2\right)\right)=\frac{1}{2} & \mm_{2}^{S}\left(E_{2}\left(0\right)\wedge E_{1}\left(1\right)\wedge E_{1}\left(2\right)\right)=\frac{1}{2}\\[6pt]
\mm_{1}^{S}\left(E_{1}\left(0\right)\wedge E_{1}\left(1\right)\wedge E_{2}\left(2\right)\right)=\frac{1}{2} & \mm_{2}^{S}\left(E_{2}\left(0\right)\wedge E_{2}\left(1\right)\wedge E_{1}\left(2\right)\right)=\frac{1}{2}\\[6pt]
\end{array}\label{eq:3-time-joints}
\end{equation}

Using either \eqref{eq:matrix-M-p0-S} or \eqref{eq:comp-cond-prob2},
one verifies directly that
\begin{equation}
\mM^{\SA}\left(t\right)=\mM^{S}\left(t\right)=\mP\left(t\right)
\end{equation}
for all $t\in T$. As a consequence, $\mP^{\SA}\left(t\right)=\mP^{S}\left(t\right)$
is a statistical dynamics that is decomposable but not divisible.

Moreover, using \eqref{eq:mixture-stochastic} and \eqref{eq:3-time-joints},
a straightforward calculation shows that 
\begin{gather}
\mm_{\vec{\pp}_{0}}^{S}\left(E_{1}\left(2\right)|E_{1}\left(1\right)\wedge E_{1}\left(0\right)\right)=\frac{1}{2}\\
\mm_{\vec{\pp}_{0}}^{S}\left(E_{1}\left(2\right)|E_{1}\left(1\right)\wedge E_{2}\left(0\right)\right)=1
\end{gather}
for all initial distributions $\vec{\pp}_{0}\in\mathcal{S}_{N}$,
$\pp_{01},\pp_{02}>0$. This means that for every such $\vec{\pp}_{0},$
the process $\mm_{\vec{\pp}_{0}}^{S}$ (and equivalently $\mm_{\vec{\pp}_{0}}^{\SA}$)
is non-Markovian, despite the fact that the associated statistical
dynamics $\mP^{S}\left(t\right)=\mP^{\SA}\left(t\right)$
is decomposable.
\end{example}

\subsection{Statistical Dynamics and Statistical Mixing}
\label{sec:mixing}

It must be emphasized that the linearity of statistical dynamics is
not merely a consequence of general probabilistic constraints imposed
by Kolmogorov’s axioms, as standard presentations of
linearity cited in Section~\ref{sec:Linearity} would like to have it. Rather, it  depends
crucially on the physical interpretation of what probability vectors
represent in the given context: the instantaneous frequency distributions
of configurations in an ensemble of systems, each evolving independently
according to a deterministic or stochastic law of evolution. Conversely,
the fact that statistical mixing of deterministic or stochastic processes
leads to linear probability evolution provides no grounds for assuming
that an arbitrary probability dynamics, \emph{with no connection to
statistical mixing}, must be linear. To illustrate this distinction,
it is instructive to consider the following example.

\begin{example}
\label{exa:statistical-linearity}Consider a coin toss, with configuration
space $C\doteq\{H,T\}$. A probability vector of the corresponding
type, say $\vec{\pp}=\left(1/2,1/2\right)$, admits two distinct interpretations.
The vector $\vec{\pp}$ may represent the probabilistic state of
\begin{itemize}
\item[(a)] an individual coin that is fair, or
\item[(b)] an ensemble of coins, half of which always land heads and half of
which always land tails.
\end{itemize}
Formally, in both cases $\vec{\pp}$ defines the same probability distribution
on $C$, which can be expressed as the convex combination 
\begin{equation}
\left(\frac{1}{2},\frac{1}{2}\right)=\frac{1}{2}\left(1,0\right)+\frac{1}{2}\left(0,1\right)\label{eq:equal-mixture}
\end{equation}
However, the \emph{meaning} of this distribution differs in the two
cases. In case (b), it is a \emph{statistical mixture} corresponding
to the expansion \eqref{eq:equal-mixture}, in the sense that it describes
an ensemble that is an equal mixture of maximally biased ``heads''
and ``tails'' coins. In case (a), by contrast, although the expansion
\eqref{eq:equal-mixture} is mathematically valid, it does not describe a mixture of underlying physical systems.

Correspondingly, there are two different probability dynamics governing
the evolution of the probability vector for these two systems. Let
$T=\left\{ 0,1,2\right\} $ and consider the following cases.
\begin{itemize}
\item[(A)]  \emph{Individual coin.} The bias of the coin evolves according to
the probability dynamics $\bP$ defined in the Introduction:
\begin{eqnarray}
\bP_{0}\left(r,1-r\right) & \doteq & \left(r,1-r\right)\label{eq_probdyn1-1}\\
\bP_{1}\left(r,1-r\right) & \doteq & \left(f\left(r\right),1-f\left(r\right)\right)\\
\bP_{2}\left(r,1-r\right) & \doteq & \left(r,1-r\right)\label{eq_probdyn3-1}
\end{eqnarray}
where $r\in\left[0,1\right]$ and $f\left(r\right)=r^{2}$. Clearly,
$\bP$ is nonlinear, since
\begin{align}
\bP_{1}\left(\frac{1}{2},\frac{1}{2}\right) & =\left(\frac{1}{4},\frac{3}{4}\right)\neq\left(\frac{1}{2},\frac{1}{2}\right)=\frac{1}{2}\bP_{1}\left(1,0\right)+\frac{1}{2}\bP_{1}\left(0,1\right)
\end{align}
\item[(B)]  \emph{Ensemble of maximally biased coins.} Each coin in the ensemble
described in (b) retains its initial maximal bias. That is, a coin
that lands heads on the first toss will land heads on every subsequent
toss, and similarly for tails. This behavior is described by the deterministic
dynamical system $D$ given by
\begin{align}
D\left(t,H\right) & \doteq H\\
D\left(t,T\right) & \doteq T
\end{align}
for all $t\in T$. This induces a stochastic process family $\vec{\pp}_{0}\mapsto\mm_{\vec{\pp}_{0}}^{D}$
and a corresponding probability dynamics $\bP^{D}$, as defined in
\eqref{eq:composite-measure1} and \eqref{eq:det-implement}, respectively.
The vector $\bP_{t}^{D}\left(1/2,1/2\right)$ then describes how the
ratio of maximally biased ``heads'' and ``tails'' coins evolves
(in this case, remains unchanged) within the ensemble.
\end{itemize}
Since $\bP_{t}\left(1,0\right)=\left(1,0\right)$ and $\bP_{t}\left(0,1\right)=\left(0,1\right)$
for all $t\in T$, each individual coin in the ensemble evolves exactly
as prescribed by the probability dynamics of the individual coin in
case (A). As a consequence,
\begin{align}
\bP_{t}^{D}\left(1,0\right) & =\bP_{t}\left(1,0\right)\\
\bP_{t}^{D}\left(0,1\right) & =\bP_{t}\left(0,1\right)
\end{align}
for all $t\in T$. By contrast,
\begin{equation}
\bP_{1}^{D}\left(\frac{1}{2},\frac{1}{2}\right)=\frac{1}{2}\bP_{1}^{D}\left(1,0\right)+\frac{1}{2}\bP_{1}^{D}\left(0,1\right)=\frac{1}{2}\bP_{1}\left(1,0\right)+\frac{1}{2}\bP_{1}\left(0,1\right)\neq\bP_{1}\left(\frac{1}{2},\frac{1}{2}\right)
\end{equation}
That is, the temporal evolution of the statistical mixture described
by probability vector $\vec{\pp}=\left(1/2,1/2\right)$ does not coincide
with the temporal evolution of an individual coin described by the
same probability vector. While the evolution of the statistical mixture
is convex-combination preserving, this fact does not imply that a
fundamentally different quantity, the bias of a single coin, should
evolve linearly. The point is that the evolution of the probability
vector describing the instantaneous bias of an individual coin does
\emph{not} constitute a statistical dynamics, that is, it does \emph{not}
represent the evolution of frequencies of instantaneous configurations
in an ensemble of independently evolving systems. Consequently, there
is no reason for this evolution to be linear.
\end{example}

This analysis can be easily generalized to situations in which the
statistical dynamics is stochastic rather than deterministic. Even
in this simple form, however, the distinction highlighted in the example
has an important moral. There has long been a question in the foundations
of quantum mechanics whether quantum probabilities admit an interpretation
of type~(a) or type~(b). That is, whether quantum probabilities
arise from sampling of an ensemble of systems with determinate properties---analogous
to an ensemble of ``heads'' and ``tails'' coins---or whether there are
no such determinate properties, so that the ontology of quantum systems
is instead more akin to that of an individual fair coin. The prevailing
view is that quantum mechanics does not admit a statistical interpretation
of type~(b), as indicated by no-go results such as Bell's theorem (\citet{bell1964})
and the Kochen--Specker theorem  (\citet{kochen1968}), and that quantum probabilities
should therefore be understood along the lines of type~(a).\footnote{Of course, the general characterization of interpretations of type
(a) and (b), as well as the precise implications of theorems such
as those of Bell and Kochen--Specker for these interpretations, are
subtle issues that have been widely discussed and debated in the literature
(see e.g. \citet{szabo2008}). For a specific approach that preserves the
possibility of a statistical interpretation of quantum probabilities
even in the light of these ``no-go'' results, see, for example, the prism models of Arthur
Fine and L\'aszl\'o Szab\'o (\citet{fine1982, fine1986, szabo2008}).} Accepting this common view---or, more weakly, simply allowing
for the possibility of a type~(a) interpretation---what Example~\ref{exa:statistical-linearity}
shows is that the linearity of quantum probability evolution \emph{cannot}
be inferred from the linearity of the evolution of statistical mixtures,
since the latter presupposes a type~(b) interpretation of probabilities.
This recognition is particularly important in light of some discussions
in the literature, including the widely discussed \emph{general probabilistic
theories} approaches, which aim to establish the necessity of
linear probability evolution in quantum mechanics precisely on the
basis of the linearity of what we term statistical dynamics (see
e.g. \citet{muller2021}, pp.~20--21; \citet{diosi2025}, pp.~1--2;
for a detailed criticism see \citet{szabo2023}, Appendices 1 and 2).
In the final section, we discuss what kind of probability
evolution quantum mechanics can be said to deliver.

\section{Quantum Dynamics}
\label{sec:quantum-dynamics} 

Attempts to characterize quantum mechanics in terms of the
language of classical stochastic processes date back many decades (\cite{schrodinger1931, fenyes1952, nelson1966, grabert1979, hardy1992, garbaczewski1995, garbaczewski1996, gillespie2001}). Recently, \cite{barandes2025}  has advanced a novel perspective on the correspondence between quantum and classical formalisms for probabilistic laws of evolution, claiming to derive the former from the latter. Another recent strand of literature related to such a correspondence, largely focused on open quantum systems, develops a probabilistic characterization of quantum dynamics by explicitly drawing a parallel with classical stochastic dynamics (\cite{aaronson2005, vacchini2011, vacchini2012, smirne2013, rivas2014, gullo2014, breuer2016, li2018, li2019, spekkens2019}). In light of our preceding discussion,
we close with a few remarks on this topic. To make these remarks more
concrete, we formulate them using the example of a qubit.

\subsection{Quantum Probability Evolution}

Consider a qubit with basis states $\left|1\right\rangle \doteq\left(1,0\right),\left|2\right\rangle \doteq\left(0,1\right)\in\mathbb{C}^{2}$.
Measurement in this basis yields outcome set $C=\left\{ 1,2\right\} $,
which will be regarded as the configuration space of the system. For
any quantum state $\left|\psi\right\rangle \in\mathbb{C}^{2}$, Born's
rule gives the probability vector
\begin{equation}
\vec{\pp}=\left(\left|\left\langle 1|\psi\right\rangle \right|^{2},\left|\left\langle 2|\psi\right\rangle \right|^{2}\right)\in\mathcal{S}_{2}\label{eq:prob-vector-qubit}
\end{equation}
which specifies the probabilities of finding the qubit in each configuration. 

Let $T$ be a set of time indices under consideration. As the quantum
state $\left|\psi\right\rangle $ evolves in time, equation~\eqref{eq:prob-vector-qubit}
determines a corresponding \emph{probability vector trajectory} 
\begin{equation}
t \;\; \mapsto \;\; 
\Bigl(
\, |\langle 1|\psi(t)\rangle|^2,\; |\langle 2|\psi(t)\rangle|^2 \,
\Bigr)
\label{eq:prob-vector-trajectory-qubit}
\end{equation}
where $\left|\psi\left(t\right)\right\rangle $ denotes the quantum state
at time $t\in T$. Between time
$0$ and $t$, the evolution may be governed purely by Schrödinger's equation, in
which case $\left|\psi\left(t\right)\right\rangle =\mathcal{U}\left(t\right)\left|\psi\left(0\right)\right\rangle $,
with
 $\mathcal{U}(t): \mathbb{C}^2 \to \mathbb{C}^2
$ the unitary time-evolution operator. At
this level of generality, however, we also allow for the possibility
that, in case ``measurement is performed on the system'' between time
$0$ and $t$, the quantum state undergoes collapse and $\left|\psi\left(t\right)\right\rangle \neq \mathcal{U}\left(t\right)\left|\psi\left(0\right)\right\rangle $.
In either case, the probability vector trajectory given by \eqref{eq:prob-vector-trajectory-qubit}
is a well-defined concept.

One has to be careful, however, when attempting to interpret the quantum
formalism in terms of the richer notions of (a)~a probability dynamics
and (b)~a stochastic process (family). In both cases, one encounters
difficulties: 
\begin{itemize}
\item[{(a)}] While the quantum state uniquely determines the probability vector
in \eqref{eq:prob-vector-qubit}, the converse is not true: the probability
vector does not uniquely determine the quantum state. Therefore, even in a collapse-free evolution in which an initial
quantum state uniquely determines its temporal trajectory via Schrödinger's
equation, an initial probability vector generally does not uniquely
determine a probability vector trajectory. Consequently, Schrödinger’s
equation together with Born’s rule do not, by themselves, specify
a probability dynamics as defined in Definition~\ref{def:prob-dyn}.
(In the presence of ``measurements,'' the situation is even more complicated,
since the evolution of the quantum state itself is no longer uniquely
determined by the initial state.)
\item [{(b)}] As discussed in Section~\ref{sec:twodescriptions}, the
notion of a stochastic process requires specifying a probability measure
over the set of possible trajectories through configuration space.
This raises both conceptual and technical difficulties. In a stochastic
process, a trajectory is typically understood as a sequence
of objective states the system occupies independently of measurement
(cf.~the trajectory of a particle in Brownian motion). In quantum
mechanics, however, the existence of such ``hidden variables'' is contentious
and depends on the particular version of the theory one adopts. The
related technical problem is that, to define a stochastic process
$\mm$, one has to be able to specify its finite-dimensional distributions
$\mm\left(E_{i_{1}}\left(t_{1}\right)\wedge E_{i_{2}}\left(t_{2}\right)\wedge...\wedge E_{i_{n}}\left(t_{n}\right)\right)$.
In our example, these distributions correspond to the joint probabilities
of finding the qubit in the two configurations at various times. The
functions assigning these joint probabilities must satisfy certain
probability-theoretic requirements---the so-called Kolmogorov consistency
conditions---in order for an underlying stochastic process to exist
(see proof of Proposition~\ref{prop:linearimplement} and \ref{prop:Markovian-implementation}
in Appendix~\ref{appendix:proofs}). However, constructing such finite-dimensional
distributions \textit{on the basis of quantum mechanics}---and doing so in
a way that satisfies the Kolmogorov consistency conditions---is a
non-trivial problem, well known in the literature (see e.g. \cite{grabert1979, anastopoulos2006, strasberg2019, lonigro2024}).\footnote{Note that the non-triviality of this problem is consistent with Proposition~\ref{prop:Markovian-implementation},
which states that any probability vector trajectory, including \eqref{eq:prob-vector-trajectory-qubit},
can be implemented by a stochastic process. This is because the finite-dimensional
distributions of the implementing stochastic processes whose existence
is guaranteed by Proposition~\ref{prop:Markovian-implementation}
are not constrained by, nor connected to, quantum mechanics.} For example, the multi-time distributions obtained by
repeatedly probing a quantum system through a sequence of projective
measurements (of a single observable) generally fail to satisfy the
consistency conditions (\cite{strasberg2019}, p.~2). It is worth
emphasizing that this difficulty already arises for the simplest case
of two-time joint probabilities $\mm\left(E_{i_{1}}\left(t_{1}\right)\wedge E_{i_{2}}\left(t_{2}\right)\right)$,
and thus for the associated two-time conditional probabilities $\mm\left(E_{i_{2}}\left(t_{2}\right)|E_{i_{1}}\left(t_{1}\right)\right)$.
In light of this problem, \emph{two-time conditional probabilities
appear to lack a clear meaning in standard quantum mechanics}---at
least not one that can be captured by a classical stochastic process. 
\end{itemize}
In what follows, we restrict attention to purely unitary quantum mechanical processes with no measurement collapse. For such processes, what
\emph{is} well defined on the basis of the standard formalism is the
concept of a probability vector trajectory associated with a given
initial quantum state $\left|\psi\left(0\right)\right\rangle =\left|\psi\right\rangle $.
This trajectory is given by equation~\eqref{eq:prob-vector-trajectory-qubit},
where the state evolves as $\left|\psi\left(t\right)\right\rangle =\mathcal{U}\left(t\right)\left|\psi\right\rangle $.
We will use the notation $\vec{\pp}_{\left|\psi\right\rangle }\left(t\right)$
for this concept, and refer to it as \emph{quantum probability evolution}.
We stress again that quantum probability evolution thus understood incorporates only
processes in which the quantum state evolves linearly, as dictated
by Schrödinger’s equation.

In Section~\ref{sec:Linearity}, we defined the notion of linearity
for a probability dynamics. This can be straightforwardly generalized
to the case of quantum probability evolution. In our present setting,
the definition is as follows.
\begin{defn}
\label{def:quantum-probability-evolution-linear}Quantum probability
evolution is said to be\emph{ linear (convex-combination preserving)}
iff for any $\lambda\in\left[0,1\right]$ and any quantum states $\left|\psi\right\rangle ,\left|\varphi\right\rangle ,\left|\chi\right\rangle \in\mathbb{C}^{2}$
\begin{equation}
\vec{\pp}_{\left|\psi\right\rangle }\left(0\right)=\lambda\vec{\pp}_{\left|\varphi\right\rangle }\left(0\right)+\left(1-\lambda\right)\vec{\pp}_{\left|\chi\right\rangle }\left(0\right)
\end{equation}
implies
\begin{equation}
\vec{\pp}_{\left|\psi\right\rangle }\left(t\right)=\lambda\vec{\pp}_{\left|\varphi\right\rangle }\left(t\right)+\left(1-\lambda\right)\vec{\pp}_{\left|\chi\right\rangle }\left(t\right)
\end{equation}
for all $t\in T$.
\end{defn}

\noindent Indeed, this definition generalizes Definition~\ref{def:linear-prob-dyn}
for probability dynamics in the following sense. Suppose we consider
a subset $Q\subset\mathbb{C}^{2}$ in which quantum states are in
one-to-one correspondence with probability vectors: that is, for each
$\vec{\pp}_{0}\in\mathcal{S}_{2}$ there exists a unique $\left|\vec{\pp}_{0}\right\rangle \in Q$
such that $\vec{\pp}_{0}=\left(\left|\left\langle 1|\vec{\pp}_{0}\right\rangle \right|^{2},\left|\left\langle 2|\vec{\pp}_{0}\right\rangle \right|^{2}\right)$.
Relative to this set $Q$ of states, 
\begin{equation}
\vec{\pp}_{0}\;\; \mapsto \;\;\left(\left|\left\langle 1|\mathcal{U}\left(t\right)|\vec{\pp}_{0}\right\rangle \right|^{2},\left|\left\langle 2|\mathcal{U}\left(t\right)|\vec{\pp}_{0}\right\rangle \right|^{2}\right)\label{eq:QM-prob-dyn}
\end{equation}
defines a probability dynamics, which is linear if and only if the
implication in Definition~\ref{def:quantum-probability-evolution-linear}
holds for all $\left|\psi\right\rangle ,\left|\varphi\right\rangle ,\left|\chi\right\rangle \in Q$.
In what follows, however, we will not assume the existence of such
a privileged set $Q$.

We now demonstrate that quantum probability evolution fails to be
linear. The argument follows standard reasoning from textbook quantum
theory, but it is worth spelling out explicitly in the present context.

Suppose that the initial quantum state of the qubit is the superposition
\begin{equation}
\left|\psi\right\rangle =\frac{1}{\sqrt{2}}\left|1\right\rangle +\frac{1}{\sqrt{2}}\left|2\right\rangle \label{eq:superposition}
\end{equation}
By Born's rule \eqref{eq:prob-vector-trajectory-qubit}, the initial
probability vector is
\begin{equation}
\vec{\pp}_{\left|\psi\right\rangle }\left(0\right)=\left(\frac{1}{2},\frac{1}{2}\right)=\frac{1}{2}\vec{\pp}_{\left|1\right\rangle }\left(0\right)+\frac{1}{2}\vec{\pp}_{\left|2\right\rangle }\left(0\right)\label{eq:convex-combination-initial}
\end{equation}
That is, initially, the probability vector for the superposition is
the convex combination of the probability vectors for the superposed
states. We will now follow the probability vector trajectory $\vec{\pp}_{\left|\psi\right\rangle }\left(t\right)$
for $t>0$. Since the time evolution of quantum states is linear,
we have
\begin{align}
\mathcal{U}\left(t\right)\left|\psi\right\rangle  & =\frac{1}{\sqrt{2}}\mathcal{U}\left(t\right)\left|1\right\rangle +\frac{1}{\sqrt{2}}\mathcal{U}\left(t\right)\left|2\right\rangle 
\end{align}
Then, the $i$th entry of the corresponding probability vector is
\begin{gather}
\left(\vec{\pp}_{\left|\psi\right\rangle }\left(t\right)\right)_{i}=\left|\left\langle i|\mathcal{U}\left(t\right)|\psi\right\rangle \right|^{2}=\left|\frac{1}{\sqrt{2}}\left\langle i|\mathcal{U}\left(t\right)|1\right\rangle +\frac{1}{\sqrt{2}}\left\langle i|\mathcal{U}\left(t\right)|2\right\rangle \right|^{2}\\
=\frac{1}{2}\left|\left\langle i|\mathcal{U}\left(t\right)|1\right\rangle \right|^{2}+\frac{1}{2}\left|\left\langle i|\mathcal{U}\left(t\right)|2\right\rangle \right|^{2}+\frac{1}{2}\left\langle i|\mathcal{U}\left(t\right)|1\right\rangle \overline{\left\langle i|\mathcal{U}\left(t\right)|2\right\rangle }+\frac{1}{2}\overline{\left\langle i|\mathcal{U}\left(t\right)|1\right\rangle }\left\langle i|\mathcal{U}\left(t\right)|2\right\rangle \label{eq:cross-terms-0}\\
=\frac{1}{2}\left(\vec{\pp}_{\left|1\right\rangle }\left(t\right)\right)_{i}+\frac{1}{2}\left(\vec{\pp}_{\left|2\right\rangle }\left(t\right)\right)_{i}+\textrm{cross terms}\label{eq:cross-terms}
\end{gather}
where the overline denotes complex conjugation. For $t>0$, in contrast
to $t=0$, the states $\mathcal{U}\left(t\right)\left|1\right\rangle $ and
$\mathcal{U}\left(t\right)\left|2\right\rangle $ typically have \emph{overlapping
support} in the given basis, in the sense that for some $i$, both
$\left\langle i|\mathcal{U}\left(t\right)|1\right\rangle $ and $\left\langle i|\mathcal{U}\left(t\right)|2\right\rangle $
are non-vanishing.\footnote{The expression ``overlapping support'' is borrowed from the usual notion
of wavefunctions with overlapping support: it means that in a given
measurement basis (for wavefunctions, the coordinate basis), both
states assign nonzero amplitude to the same basis element. Note that
$\mathcal{U}\left(t\right)\left|1\right\rangle $ and $\mathcal{U}\left(t\right)\left|2\right\rangle $
can have overlapping support in this sense without ceasing to be orthogonal
in $\mathbb{C}^{2}$. Their orthogonality is guaranteed by that of
$\left|1\right\rangle $ and $\left|2\right\rangle $, together with
the fact that the unitary $\mathcal{U}\left(t\right)$ preserves inner products.
Conversely, however, if two states have non-overlapping support in
a given basis, they are necessarily orthogonal.} Consequently, the cross terms in \eqref{eq:cross-terms} are generally
non-zero. What this implies is that even though
\begin{equation}
\vec{\pp}_{\left|\psi\right\rangle }\left(0\right)=\frac{1}{2}\vec{\pp}_{\left|1\right\rangle }\left(0\right)+\frac{1}{2}\vec{\pp}_{\left|2\right\rangle }\left(0\right)
\end{equation}
 in general we have
\begin{equation}
\vec{\pp}_{\left|\psi\right\rangle }\left(t\right)\neq\frac{1}{2}\vec{\pp}_{\left|1\right\rangle }\left(t\right)+\frac{1}{2}\vec{\pp}_{\left|2\right\rangle }\left(t\right)\label{eq:break-down-linearity}
\end{equation}
for $t>0$. The evolution of probabilities does \emph{not} preserve
convex combinations; that is, quantum probability evolution is \emph{not}
linear. 

The nonlinear character of quantum probability evolution, as is illustrated
in the example, crucially depends on the fact that while the evolution
of quantum states is linear, the probabilities of configurations,
as given by Born's rule, are nonlinear functions of the quantum state.
At this point, one might wonder how the derived nonlinearity is consistent
with the density matrix formalism, where \emph{both} time evolution
and Born’s rule appear as \emph{linear} maps. Indeed, in that framework
the two mappings are
\begin{align}
\varrho_{0} &  \;\;\; \mapsto \;\;\;\varrho\left(t\right)\doteq \mathcal{U}\left(t\right)\varrho_{0}\mathcal{U}\left(t\right)^{\dagger}\label{eq:time-evolution-density-matrix}\\
\varrho &  \;\;\; \mapsto \;\;\; p_{i}\doteq tr\left(\left|i\right\rangle \left\langle i\right|\varrho\right)\label{eq:Born-rule-density-matrix}
\end{align}
for $i=1,2$, with $\varrho_{0}$ and $\varrho$ denoting $2\times2$
density matrices, $\dagger$ the adjoint, and $\left|i\right\rangle \left\langle i\right|$
the orthogonal projector onto $\left|i\right\rangle $. Both maps
\eqref{eq:time-evolution-density-matrix}-\eqref{eq:Born-rule-density-matrix}
are linear, hence so is their composition. Consequently, the probabilities
of configurations at any given time are linear functions of the initial
density matrix. Crucially, however, this does not imply that probabilities
at later times are linear functions of the initial \emph{probabilities}
(cf.~Fig.~\ref{fig:diagram}).
\begin{figure}
\begin{centering}
\begin{tikzcd}[column sep=3cm, row sep=2cm]
\varrho_{0} 
  \arrow[r, "\mathcal{U}(t)"{above}, "\text{linear}"{below}] 
  \arrow[d, "\mathit{diag}"'{right}, "\text{linear}"{rotate=270, anchor=center, pos=0.5, xshift=3pt, yshift=-5pt}] 
& 
\varrho(t) 
  \arrow[d, "\mathit{diag}"{right}, "\text{linear}"{rotate=270, anchor=center, pos=0.5, xshift=3pt, yshift=-5pt}] \\
\vec{\pp}_{0} 
  \arrow[r, dashed, "\text{generally nonlinear}"{below}] 
& 
\vec{\pp}(t)
\end{tikzcd}
\par\end{centering}
\caption{\emph{Even though both time evolution and taking the diagonal are
linear operations on density matrices, the time evolution of the resulting
probability vectors is generally nonlinear}}\label{fig:diagram}
\end{figure}
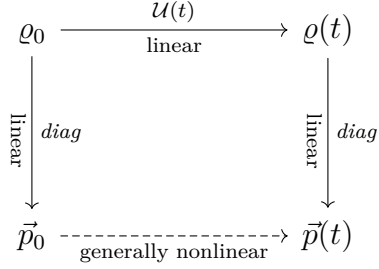
 For that to hold, it would also have to be the case that whenever
the initial probability vectors combine convexly, the corresponding
density matrices from which they are derived under Born’s rule \eqref{eq:Born-rule-density-matrix}
do so as well. But this is not the case. For example, although the superposition
state $\left|\psi\right\rangle $ defined in \eqref{eq:superposition}
is a linear combination of $\left|1\right\rangle $ and $\left|2\right\rangle $,
its density matrix $\left|\psi\right\rangle \left\langle \psi\right|$
is \emph{not} a convex combination of $\left|1\right\rangle \left\langle 1\right|$
and $\left|2\right\rangle \left\langle 2\right|$:
\begin{equation}
\left|1\right\rangle \left\langle 1\right|=\left(\begin{array}{cc}
1 & 0\\
0 & 0
\end{array}\right),\,\,\,\,\,\,\,\,\,\,\left|2\right\rangle \left\langle 2\right|=\left(\begin{array}{cc}
0 & 0\\
0 & 1
\end{array}\right)\label{eq:density-matrices}
\end{equation}
but
\begin{equation}
\left|\psi\right\rangle \left\langle \psi\right|=\left(\begin{array}{cc}
\frac{1}{2} & \frac{1}{2}\\
\frac{1}{2} & \frac{1}{2}
\end{array}\right)\label{eq:density-matrix-superposition}
\end{equation}
Yet the corresponding probability vectors given by Born’s rule \emph{do}
combine convexly, as shown in \eqref{eq:convex-combination-initial}.

Note that in the special case where the initial density matrix is
the ``mixed state'' 
\begin{equation}
\frac{1}{2}\left|1\right\rangle \left\langle 1\right|+\frac{1}{2}\left|2\right\rangle \left\langle 2\right|=\left(\begin{array}{cc}
\frac{1}{2} & 0\\
0 & \frac{1}{2}
\end{array}\right)\label{eq:mixed-state}
\end{equation}
the additional requirement is obviously satisfied: the density matrix
is itself the appropriate convex combination. This ``mixed state,''
however, has nothing to do with the superposition state $\left|\psi\right\rangle $
whose temporal evolution gives rise to the breakdown of convex-combination
preservation in the evolution of Born probabilities.

In light of these remarks, it is worth having a look at the approach to quantum mechanics advanced by \cite{barandes2025}. Here is how the paper lays out what is termed the stochastic--quantum correspondence. 
\begin{quote}
The linear marginalization relationship 
\begin{equation}
\pp_{i}\left(t\right)=\sum_{j=1}^{N}\Gamma_{ij}\left(t\leftarrow0\right)\pp_{j}\left(0\right)\label{eq:fake-lin2}
\end{equation}
between the system's standalone probabilities $\pp_{j}\left(0\right)$
at the initial time $0$ and the standalone probabilities $\pp_{i}\left(t\right)$
at the target time $t$ can now be recast as
\begin{equation}
p_{i}\left(t\right)=tr\left(\left|i\right\rangle \left\langle i\right|\varrho\left(t\right)\right)\label{eq:Born-rule-Barandes}
\end{equation}
Here
\begin{equation}
\varrho\left(t\right)\equiv\Theta\left(t\leftarrow0\right)\left[\sum_{j=1}^{N}\pp_{j}\left(0\right)\left|j\right\rangle \left\langle j\right|\right]\Theta\left(t\leftarrow0\right)^{\dagger}=\Theta\left(t\right)\textrm{diag}\left(...,\pp_{j}\left(0\right),...\right)\Theta\left(t\right)^{\dagger}\label{eq:ro-Barandes}
\end{equation}
{[}...{]} Crucially, notice how the linearity of the marginalization
relationship \eqref{eq:fake-lin2} is ultimately responsible for the
linearity of the relationship between the matrix $\varrho\left(t\right)$
and its value $\varrho\left(0\right)$ at the initial time $0$. (\cite{barandes2025},
p.~7, with minor adjustments to notation)
\end{quote}
Following the approach discussed in Section~\ref{sec:Linearity}, \cite{barandes2025} starts out by supposing a linear probability dynamics $\bP$. Then, after introducing a matrix family $\left\{ \Theta\left(t\right)\right\} _{t\in T}$
for which $\mP_{ij}\left(t\right)=\left|\Theta_{ij}\left(t\right)\right|^{2}$,
the equation $\vec{\pp}\left(t\right)=\mP\left(t\right)\vec{\pp}\left(0\right)$
is rewritten in the form \eqref{eq:Born-rule-Barandes}. The resulting
expressions resemble those of quantum mechanics \eqref{eq:time-evolution-density-matrix}-\eqref{eq:Born-rule-density-matrix}.
However, what is recovered can only have a limited connection to quantum
mechanics. As we have seen above, 1) quantum mechanics does \emph{not}
give rise to a well-defined probability dynamics, since the initial
probability vector $\vec{\pp}\left(0\right)$ does not uniquely determine
the later one $\vec{\pp}\left(t\right)$, and 2) quantum probability
evolution, described by $\vec{\pp}_{\left|\psi\right\rangle }\left(t\right)$,
is in general \emph{not} linear---even in case of a purely unitary process.
Therefore, assuming that the probabilities the theory starts out from are the same probabilities that standard quantum theory talks about, the premises on which this reconstruction rests appear to be in tension with the probabilistic structure of quantum mechanics, the very theory that the reconstruction is meant to recover.

It is instructive to highlight the discrepancy between standard quantum mechanics and the formalism developed in \cite{barandes2025}. As noted below
Definition~\ref{def:quantum-probability-evolution-linear}, one
only obtains a well-defined probability dynamics from quantum mechanics
if the set of possible initial quantum states are restricted to a
subset in one-to-one correspondence with probability vectors---in the
sense of equation \eqref{eq:QM-prob-dyn}. And indeed, in the reconstruction of \cite{barandes2025}, the possible initial states (here, density matrices)
are restricted precisely in this way. By \eqref{eq:ro-Barandes},
the initial density matrix
\begin{equation}
\varrho\left(0\right)\equiv\sum_{j=1}^{N}\pp_{j}\left(0\right)\left|j\right\rangle \left\langle j\right|=\textrm{diag}\left(...,\pp_{j}\left(0\right),...\right)
\end{equation}
is by definition diagonal, with diagonal entries given by the components
of the initial probability vector. The diagonal density matrices are
in one-to-one correspondence with probability vectors, and so such a restriction
indeed ensures the existence of a well-defined probability dynamics---as
is assumed in \cite{barandes2025} from the outset.

Crucially, the diagonal density matrices are exactly those that can
be written as convex combinations of the projectors $\left|j\right\rangle \left\langle j\right|$,
just like the ``mixed state'' \eqref{eq:mixed-state}.\footnote{Moreover, some of the arguments in \cite{barandes2025} tacitly assume not merely that
the initial density matrix is a convex combination of projectors $\left|j\right\rangle \left\langle j\right|$,
but that it is in fact one of the projectors $\left|j\right\rangle \left\langle j\right|$
itself---that is, the system must begin in one of the basis states
$\left|j\right\rangle $. This follows from the way ``pure states''
$\left|\psi\right\rangle $ are defined in his formalism: by this
definition, the version of Born’s rule expressed in terms of ``pure
states'' (\eqref{eq:prob-vector-trajectory-qubit}) is only valid in the framework of \cite{barandes2025} when the initial pure state coincides with a
basis state $\left|j\right\rangle $ (cf.~\cite{barandes2025}, p.~7).
Consequently, all arguments in \cite{barandes2025} that rely on this
formulation of Born’s rule (such as the discussion of ``division
events'' and ``decoherence'' in sections~3.7-3.8) hold only under
this restriction. Note that this restriction means that the initial
probability vectors are confined to the basis vectors $\vec{e}_{j}$,
i.e. the vertices of $\mathcal{S}_{N}$, in which case the very distinction
between linear and nonlinear probability evolution loses its meaning.
 } As we have seen, for such initial density matrices the probability
vector given by Born's rule \eqref{eq:Born-rule-Barandes} \emph{does}
evolve linearly. But this is a very special case, not representative
of quantum mechanics in general. Indeed, the density matrix \eqref{eq:density-matrix-superposition}
associated with superposition \eqref{eq:superposition}---the very
case that gives rise to nonlinear probability evolution---cannot
be represented in this way. What this formalism thus leaves out is one of the most fundamental features of quantum mechanics:
the possibility of superposition as an initial state.

\subsection{Interference and Divisibility}

In addition to the nonlinearity introduced by $\left|\cdot\right|^{2}$
in Born's rule, there is another crucial factor behind the nonlinear
character of quantum probability evolution, as illustrated in equation
\eqref{eq:cross-terms-0}: the fact that quantum states with initially
non-overlapping support generally evolve into states with overlapping
support. Recall that this means that for $t>0$, both $\left\langle i|\mathcal{U}\left(t\right)|1\right\rangle $
and $\left\langle i|\mathcal{U}\left(t\right)|2\right\rangle $ are typically
non-vanishing, even though this is not the case for $t=0$. This phenomenon
is a hallmark of the appearance of quantum interference; and in this
sense, interference can be seen as a key factor behind the nonlinear
character of quantum probability evolution.

This reading stands at odds with the interpretation of quantum
interference advanced in \cite{barandes2025}. There, it is argued
that ``interference is a direct consequence of the stochastic dynamics
not generally being divisible. More precisely, interference is nothing
more than a generic discrepancy between the \emph{actual} indivisible
stochastic dynamics and a \emph{heuristic-approximate} divisible stochastic
dynamics'' (\cite{barandes2025}, p.~12, emphasis in original).
In Appendix~\ref{appendix:Barandes-interference}, we explain in detail why this analysis is, in our view, mathematically problematic. There is, however, a more fundamental difficulty with this diagnosis.

Let us set aside the fact that, since in quantum mechanics $\vec{\pp}\left(t\right)$
is not determined by $\vec{\pp}\left(0\right)$, we have no reason
to expect that $\vec{\pp}\left(t\right)$ is determined by $\vec{\pp}\left(t'\right)$,
for any $t'<t$; and so decomposability and divisibility in the sense
of Definitions~\ref{def:decomposability} and \ref{def:divisib}
are not applicable to quantum mechanics. Instead, in analogy with
$\vec{\pp}_{\left|\psi\right\rangle }\left(t\right)$, it is natural
to introduce the weaker notion $\vec{\pp}_{\left|\psi\right\rangle ,t'}\left(t\right)$,
defined as the probability vector trajectory initiated from a given
quantum state $\left|\psi\right\rangle $ at time $t'\in T$ (rather
than at time 0). Now for the unitary time-evolution of the quantum state it holds that for all $t,t'\in T$, $t'<t$ there exists a unitary operator $\mathcal{U}\left(t\leftarrow t'\right)$ such that
\begin{equation}
\mathcal{U}\left(t\right)=\mathcal{U}\left(t\leftarrow t'\right)\,\mathcal{U}\left(t'\right)\label{eq:decomp-unitary}
\end{equation}
As a consequence,
\begin{equation}\vec{\pp}_{\left|\psi\right\rangle ,t'}\left(t\right)=\left(\left|\left\langle 1|\mathcal{U}\left(t\leftarrow t'\right)|\psi\right\rangle \right|^{2},\left|\left\langle 2|\mathcal{U}\left(t\leftarrow t'\right)|\psi\right\rangle \right|^{2}\right)
\end{equation}
is always well-defined, and quantum probability evolution can always
be decomposed into corresponding intermediate steps, in the sense
that 
\begin{equation}
\begin{aligned}
\vec{\pp}_{|\psi\rangle}(t)
&=
\left(
  \left|\langle 1|\mathcal{U}(t)|\psi\rangle\right|^{2},
  \left|\langle 2|\mathcal{U}(t)|\psi\rangle\right|^{2}
\right) \\
&=
\left(
  \left|\langle 1|\mathcal{U}(t \leftarrow t')\,\mathcal{U}(t')|\psi\rangle\right|^{2},
  \left|\langle 2|\mathcal{U}(t \leftarrow t')\,\mathcal{U}(t')|\psi\rangle\right|^{2}
\right)
=
\vec{\pp}_{\mathcal{U}(t')|\psi\rangle ,t'}(t)
\end{aligned}
\label{eq:decomp-QM}
\end{equation}
for all $t,t'\in T$, $t'<t$. Equation~\eqref{eq:decomp-QM}
says that the evolution of the probability vector $\vec{\pp}_{\left|\psi\right\rangle }\left(0\right)$
to $\vec{\pp}_{\left|\psi\right\rangle }\left(t\right)$ can be decomposed
into two steps: first, from $0$ to $t'$, the quantum state $\left|\psi\right\rangle $
evolves to $\mathcal{U}\left(t'\right)\left|\psi\right\rangle $; then, from
$t'$ to $t$, the probability vector $\vec{\pp}_{\mathcal{U}\left(t'\right)\left|\psi\right\rangle ,t'}\left(t'\right)$
evolves to $\vec{\pp}_{\mathcal{U}\left(t'\right)\left|\psi\right\rangle ,t'}\left(t\right)$.
In simpler terms, the quantum state at time $t'$, and not only at time
$0$, determines future probability vectors. In this sense, quantum
probability evolution is \emph{decomposable}.

Recall from Section~\ref{sec:Decomposability} that, in case of a well-defined
probability dynamics, what divisibility adds to the notion of decomposability
is the requirement that the decomposing maps $\bP_{t\leftarrow t'}$
be represented by stochastic matrices. Crucially, as we have emphasized,
divisibility is only seen as synonymous with decomposability when
linearity is, incorrectly, taken to be a necessary feature of
probability dynamics, so that every probability evolution is automatically
associated with stochastic matrices. For \emph{nonlinear} probability
dynamics, however, divisibility---again, understood as the requirement
that the decomposing maps be represented by stochastic matrices, and
thus be \emph{linear}---is a concept that simply does not apply.
Translating this observation into the language of quantum probabilities:
since quantum probability evolution is generally nonlinear, any notion
of ``divisibility'' that requires $\vec{\pp}_{\left|\psi\right\rangle ,t'}\left(t\right)$
to preserve convex combinations, in a sense analogous to Definition~\ref{def:quantum-probability-evolution-linear},
is inapplicable. The only notion of ``divisibility,'' or more appropriately
\emph{decomposability}, that makes sense for, and indeed is satisfied
by, the evolution of quantum probabilities is the one expressed by
equation \eqref{eq:decomp-QM}---a simple consequence of \eqref{eq:decomp-unitary}. Rather than an ``indivisible stochastic
dynamics,'' quantum probability evolution is therefore more aptly
characterized as a \emph{nonlinear yet decomposable} dynamics. Notice
that quantum interference, as one of the key sources of the nonlinearity of quantum probability evolution, is itself a central reason why the framework of indivisible stochastic processes in \cite{barandes2025} cannot model quantum mechanics and account for the interference phenomena it entails.

\subsection{Tomographic Completion}

In line with our general framework, the preceding discussion focused on a single
probability vector encoding the statistics of a single observable, corresponding to a particular choice of basis $\left|1\right\rangle,\left|2\right\rangle \in\mathbb{C}^{2}$. From the perspective of the Hilbert-space formalism, such a probability vector contains only partial information about the underlying quantum state. From this point of view, a natural extension of the framework is to consider a collection of probability
vectors encoding the statistics of a \emph{tomographically complete
set of observables}, which together uniquely determine the quantum state. This perspective will provide a further way to understand the nonlinearity of quantum probability evolution.

As an example, consider the three Pauli measurements corresponding
to the following three distinct bases in $\mathbb{C}^{2}$:

\begin{equation}
\begin{array}{rcl@{\hspace{2em}}rcl}
|z+\rangle & = & |1\rangle & |z-\rangle & = & |2\rangle\\
|x+\rangle & = & \tfrac{1}{\sqrt{2}}\big(|1\rangle+|2\rangle\big) & |x-\rangle & = & \tfrac{1}{\sqrt{2}}\big(|1\rangle-|2\rangle\big)\\
|y+\rangle & = & \tfrac{1}{\sqrt{2}}\big(|1\rangle+i|2\rangle\big) & |y-\rangle & = & \tfrac{1}{\sqrt{2}}\big(|1\rangle-i|2\rangle\big)
\end{array}
\end{equation}
One can then define the six-component vector 
\begin{equation}
\underline{v}\doteq\left(\begin{array}{c}
tr\left(\left|z+\right\rangle \left\langle z+\right|\varrho\right)\\
tr\left(\left|z-\right\rangle \left\langle z-\right|\varrho\right)\\
tr\left(\left|x+\right\rangle \left\langle x+\right|\varrho\right)\\
tr\left(\left|x-\right\rangle \left\langle x-\right|\varrho\right)\\
tr\left(\left|y+\right\rangle \left\langle y+\right|\varrho\right)\\
tr\left(\left|y-\right\rangle \left\langle y-\right|\varrho\right)
\end{array}\right)\label{eq:6d-vector}
\end{equation}
whose value uniquely determines the quantum state $\varrho$. Note
that $\underline{v}$ is no longer a probability vector as its entries sum
to 3 rather than to 1. On the other hand, because $\underline{v}$ uniquely
determines $\varrho$, its initial value uniquely determines its later
values via the unitary evolution of the quantum state---unlike the
probability vector $\vec{\pp}$ in \eqref{eq:prob-vector-qubit}, which consists
only of the first two entries of \eqref{eq:6d-vector} (with $\varrho=\left|\psi\right\rangle \left\langle \psi\right|$). That is, there
exists a well-defined mapping 
\begin{equation}
\mathbf{V}_{t}:\underline{v}\left(0\right)\mapsto\underline{v}\left(t\right)
\end{equation}
where $\underline{v}\left(t\right)$ is the value of vector $\underline{v}$ at
time $t$, obtained by substituting $\varrho\left(t\right)$ for $\varrho$
in \eqref{eq:6d-vector}.

One may now ask: although the upper block corresponding to the first two
entries does not give rise to a linear evolution in the sense of Definition~\ref{def:quantum-probability-evolution-linear},
is $\mathbf{V}_{t}$, the time evolution of the vector $\underline{v}$
as a whole, convex-combination preserving? The answer is affirmative.
The simple argument that follows is formulated in the concrete case
of the Pauli measurements, but it applies more generally to any tomographically
complete set of observables.

Let $\mathcal{D}$ denote the convex set of density matrices and consider
the map 
\begin{equation}
\phi:\mathcal{D}\rightarrow\mathbb{R}^{6},\;\;\varrho\mapsto\underline{v}\label{eq:map}
\end{equation}
$\phi$ preserves convex combinations, and since the three
Pauli measurements constitute a tomographically complete set of observables,
it is injective. Because the inverse of any convex-combination preserving
injection also preserves convex combinations, the inverse map $\phi^{-1}:\mathrm{Ran}\phi\rightarrow\mathcal{D}$
is convex-combination preserving as well.\footnote{Let $\underline{v}_{1}=\phi\left(\varrho_{1}\right)$ and $\underline{v}_{2}=\phi\left(\varrho_{2}\right)$.
For any $\lambda\in\left[0,1\right]$, $\lambda\underline{v}_{1}+\left(1-\lambda\right)\underline{v}_{2}\in\mathrm{Ran}\phi$,
and we have
\begin{align}
\phi^{-1}\left(\lambda\underline{v}_{1}+\left(1-\lambda\right)\underline{v}_{2}\right) & =\phi^{-1}\left(\lambda\phi\left(\varrho_{1}\right)+\left(1-\lambda\right)\phi\left(\varrho_{2}\right)\right)=\phi^{-1}\left(\phi\left(\lambda\varrho_{1}+\left(1-\lambda\right)\varrho_{2}\right)\right)\\
 & =\lambda\varrho_{1}+\left(1-\lambda\right)\varrho_{2}=\lambda\phi^{-1}\left(\underline{v}_{1}\right)+\left(1-\lambda\right)\phi^{-1}\left(\underline{v}_{2}\right)
\end{align}
} Consequently, the time evolution $\underline{v}\left(0\right)\mapsto\underline{v}\left(t\right)$,
given by 
\begin{equation}
\mathbf{V}_{t}=\phi\circ\mathcal{U}_{\varrho}\left(t\right)\circ\phi^{-1}
\end{equation}
(where $\mathcal{U}_{\varrho}\left(t\right)$ denotes the unitary
evolution acting on density matrices as defined in \eqref{eq:time-evolution-density-matrix}),
is a composition of convex-combination preserving maps and therefore
itself preserves convex combinations. 

It should be emphasized that this fact is in no contradiction with the nonlinearity
of the evolution \emph{of the upper two-entry block} of $\underline{v}$,
given by the probability vector  $\vec{\pp}$ in \eqref{eq:prob-vector-qubit}.
We illustrate this in a simple explicit
example provided by the superposition state $\left|\psi\right\rangle $
defined in \eqref{eq:superposition}. One readily verifies that
\begin{eqnarray}
\underline{v}_{\left|1\right\rangle } & \doteq & \phi\left(\left|1\right\rangle \left\langle 1\right|\right)=\left(1,0,\frac{1}{2},\frac{1}{2},\frac{1}{2},\frac{1}{2}\right)\\
\underline{v}_{\left|2\right\rangle } & \doteq & \phi\left(\left|2\right\rangle \left\langle 2\right|\right)=\left(0,1,\frac{1}{2},\frac{1}{2},\frac{1}{2},\frac{1}{2}\right)\\
\underline{v}_{\left|\psi\right\rangle } & \doteq & \phi\left(\left|\psi\right\rangle \left\langle \psi\right|\right)=\left(\frac{1}{2},\frac{1}{2},1,0,\frac{1}{2},\frac{1}{2}\right)
\end{eqnarray}
While in the upper block $\underline{v}_{\left|\psi\right\rangle }$
is an equal convex combination of $\underline{v}_{\left|1\right\rangle }$
and $\underline{v}_{\left|2\right\rangle }$, this is not the case for the
vectors as a whole:
\begin{equation}
\underline{v}_{\left|\psi\right\rangle }\neq\frac{1}{2}\underline{v}_{\left|1\right\rangle }+\frac{1}{2}\underline{v}_{\left|2\right\rangle }
\end{equation}
Therefore, in complete harmony with the fact that the evolution $\mathbf{V}_{t}$
is convex-combination preserving, we generally have
\begin{equation}
\mathbf{V}_{t}\left(\underline{v}_{\left|\psi\right\rangle }\right)\neq\frac{1}{2}\mathbf{V}_{t}\left(\underline{v}_{\left|1\right\rangle }\right)+\frac{1}{2}\mathbf{V}_{t}\left(\underline{v}_{\left|2\right\rangle }\right)
\end{equation}
In general, this inequality already holds when restricted to the upper block,
despite the fact that the upper block of $\underline{v}_{\left|\psi\right\rangle }$
is identical to that of $\frac{1}{2}\underline{v}_{\left|1\right\rangle }+\frac{1}{2}\underline{v}_{\left|2\right\rangle }$.
This explains why linearity can break down in the upper
block, as seen in equation~\eqref{eq:break-down-linearity}, even
though $\mathbf{V}_{t}$ is convex-combination preserving when considered
as a whole.

It must be stressed once again that the vector $\underline{v}$, by contrast with
$\vec{\pp}$, is \emph{not} a probability vector. Therefore, even though the time-evolution
of $\underline{v}$ is given by a convex-combination preserving map, this
does not license the conclusion that quantum mechanics can be described
in terms of what we have defined as a \emph{linear probability dynamics}.

\subsection{Concluding Remark}

The discussion of this section illustrates the subtlety of describing quantum dynamics within a classical probabilistic framework. We have seen that neither the notion of a stochastic process (or stochastic process family) nor that of a probability dynamics is adequate for this purpose. Instead, two alternative conceptual tools are available for characterizing the dynamics: $\vec{\pp}_{\left|\psi\right\rangle }\left(t\right)$ or
$\mathbf{V}_{t}$. Under the first representation, the dynamics fails to be linear and,
in general, cannot even be expressed as a map. Under the second,
the dynamics is represented by a convex-combination preserving
map; however, the object whose evolution is thereby described---the
vector $\underline{v}$ defined in equation~\eqref{eq:fake-lin2}---cannot
be associated with a probability distribution over a single configuration
space.

It must be emphasized that none of these facts about the probabilistic
structure of quantum theory can be anticipated on the basis of \emph{a priori} reasoning
about probability alone. This point is particularly important in light of
recent approaches in quantum foundations that attempt to derive the
formalism of quantum mechanics from seemingly more natural probabilistic
assumptions. In addition to \cite{barandes2025}, a widely discussed example is the framework of \emph{general probabilistic theories} (e.g. \cite{hardy2001, chiribella2011, masanes2011, muller2021}). A common assumption in many of these reconstruction programs
is that the temporal evolution of probabilities preserves convex combinations. Two main types of arguments are typically offered in support of
this assumption. The first, discussed in Section~\ref{sec:Linearity},
is formulated in terms of a single probability vector and appeals to the \emph{law of total
probability}. The second, examined in Section~\ref{sec:mixing}, can be formulated either in terms of a single probability vector or in terms of a collection of probability vectors thought to characterize the system’s probabilistic “state,” and
relies on the properties of \emph{statistical mixing}.\footnote{In the general probabilistic theories approach, an object analogous to our vector $\underline{v}$,
encoding the statistics of a set of “fiducial” measurements, appears
to be treated as the primary object of the probabilistic description. The argument from statistical mixing is then formulated in terms of this object. See e.g. \cite{hardy2001},
pp.~10--12.} As we have seen, the
first argument is flawed because it conflates the concepts of probability
dynamics and stochastic process. The second has force only under a
specific interpretation of the probabilities in question; more precisely,
it presupposes a particular underlying ontology for the system. That
ontology, however, is at odds with common interpretations of quantum
mechanics. In view of all this, even when quantum probability evolution can be represented in a convex-combination preserving form, it remains questionable whether these reconstruction programs provide an explanation for \emph{why that has to be the case}.
\vspace{0.2in}

\noindent
{\bf Acknowledgements.} This work has been supported by the Hungarian Scientific Research Fund, Advanced 152165.

\appendix

\section{Appendix \label{appendix:proofs}}

\transitionconstimplieslinearity*

\begin{proof}\label{proof:transition-const-implies-linearity}
From \eqref{eq:fake-lin-0}---valid only for $\vec{\pp}_{0}\in\mathcal{S}_{N}$
with $\pp_{01},\pp_{02},...,\pp_{0N}>0$---and the defining equation \eqref{eq:common-M}
of $\mM\left(t\right)$, we have
\begin{equation}
\bP\left(t,\vec{\pp}_{0}\right)=\mM\left(t\right)\vec{\pp}_{0}
\end{equation}
for all $t\in T$ and for all $\vec{\pp}_{0}\in\mathcal{S}_{N}$ {with $\pp_{01},\pp_{02},...,\pp_{0N}>0$.}

{Now fix an arbitrary $\vec{\pp}_{0}\in\mathcal{S}_{N}$ (allowing zero entries), fix $t\in T$,
and let $J\doteq\{j\in C\mid \pp_{0j}>0\}$. For each $i\in C$, using that
$\{E_j(0)\}_{j=1}^N$ is a partition of $\Omega$ and that $\mm_{\vec{\pp}_{0}}(E_j(0))=\pp_{0j}$, we have
\begin{align}
\mm_{\vec{\pp}_{0}}\!\left(E_i(t)\right)
&=\sum_{j=1}^N \mm_{\vec{\pp}_{0}}\!\left(E_i(t)\wedge E_j(0)\right)
=\sum_{j\in J} \mm_{\vec{\pp}_{0}}\!\left(E_i(t)\wedge E_j(0)\right) \nonumber\\
&=\sum_{j\in J} \mm_{\vec{\pp}_{0}}\!\left(E_i(t)\mid E_j(0)\right)\mm_{\vec{\pp}_{0}}\!\left(E_j(0)\right)
=\sum_{j\in J} \mm_{\vec{\pp}_{0}}\!\left(E_i(t)\mid E_j(0)\right)\pp_{0j} \label{eq:tc-proof-sumJ}
\end{align}
By transition-constancy \eqref{eq:transition-constant} and by \eqref{eq:common-M}, for each $j\in J$ we have $\mm_{\vec{\pp}_{0}}\!\left(E_i(t)\mid E_j(0)\right)=\mM_{ij}(t)$, hence
\begin{equation}
\mm_{\vec{\pp}_{0}}\!\left(E_i(t)\right)=\sum_{j\in J}\mM_{ij}(t)\pp_{0j}
=\sum_{j=1}^N \mM_{ij}(t)\pp_{0j}=\big(\mM(t)\vec{\pp}_{0}\big)_i
\end{equation}
where the second equality uses that $\pp_{0j}=0$ for $j\notin J$.
Therefore $\bP(t,\vec{\pp}_{0})=\mM(t)\vec{\pp}_{0}$ holds for all $\vec{\pp}_{0}\in\mathcal{S}_N$ and all $t\in T$.}

Since $\bP_{t}$ acts by matrix multiplication with $\mM\left(t\right)$---which
is now independent of $\vec{\pp}_{0}$---$\bP_{t}^ {}$ is convex
combination preserving and $\mP\left(t\right)=\mM\left(t\right)$,
for all $t\in T$.
\end{proof}

\linearimplement*

\begin{proof}\label{proof:linearimplement}
First, let $\vec{\pp}_{0}\in\mathcal{S}_{N}$ be fixed. We now define a suitable distribution function $F$ to which we can apply Kolmogorov's extension theorem (e.g. \cite{skorokhod1996},
p.~8). For any $t\in T$ and any $C_t\subseteq C$, let
\begin{equation}
	F_{t}(C_t) \doteq \sum_{j=1}^N  \pp_{0j} \sum_{i\in C_t} \mP_{ij}(t)	
\end{equation}
and for any $t\in T$ and $C_{0} ,C_t \subseteq C$, let
\begin{equation}
	F_{0,t}(C_{0},C_t)  \doteq \sum_{j\in C_{0}} \pp_{0j} \sum_{i\in C_t} \mP_{ij}(t)
\end{equation}
Note that since $\mP(0)=I$, $F_{0}(C_{0})=\sum_{j\in C_{0}}\pp_{0j}$.

Higher-order joint distributions are defined as follows. Assume that
$t_{1},\dots,t_{m}\in T$ are distinct and $0\not\in\{t_{1},\dots,t_{m}\}$,
with $m\geq2$. For subsets $C_{0},C_{1},\dots,C_{m}\subseteq C$,
let
\begin{equation}
F_{t_{1},\dots,t_{m}}(C_{1},\dots,C_{m})\doteq\sum_{j=1}^{N}\pp_{0j}\prod_{k=1}^{m}\left(\sum_{i\in C_{k}}\mP_{ij}(t_{k})\right)\label{eq:Fnincsnulla}
\end{equation}
and
\begin{equation}
F_{0,t_{1},\dots,t_{m}}(C_{0},C_{1},\dots,C_{m})\doteq\sum_{j\in C_{0}}\pp_{0j}\prod_{k=1}^{m}\left(\sum_{i\in C_{k}}\mP_{ij}(t_{k})\right)\label{eq:Fvannulla}
\end{equation}

We now verify the conditions of Kolmogorov's extension theorem for the function $F$ defined above:
\begin{itemize}
	\item Non-negativity and $\sigma$-additivity:
	\begin{itemize}
		\item Since $\mP(t)$ is a stochastic matrix and $\vec{\pp}_{0} \in \mathcal{S}_{N}$, hence $\mP(t) \vec{\pp}_{0} \in \mathcal{S}_{N}$. Hence, $F_{t}: C_t \mapsto \sum_{j=1}^N \sum_{i\in C_t} \mP_{ij}(t) \pp_{0j}$ is a probability measure on $C$. 
		\item $F_{0,t}$ is additive by definition. Moreover, since $\mP(t)$ is a stochastic matrix and $\vec{\pp}_{0} \in \mathcal{S}_{N}$, it follows that $F_{0,t}(C,C)=1$, so $F_{0,t}$ is a probability measure on $C^{2}$.
		\item For general $m$, \eqref{eq:Fnincsnulla} is a convex combination
(with weights $\pp_{0j}\ge0$, $\sum_{j=1}^{N}\pp_{0j}=1$) of product
measures on $C^{m}$, hence define a probability measures on $C^{m}$.
\eqref{eq:Fvannulla} is additive by definition. Moreover, since
$\mP(t)$ is a stochastic matrix and $\vec{\pp}_{0}\in\mathcal{S}_{N}$,
it follows that $F_{0,t_{1},\dots,t_{m}}(C,C,\dots,C)=1$, so $F_{0,t_{1},\dots,t_{m}}$
is a probability measure on $C^{m+1}$.
	\end{itemize}
	\item Symmetry under permutations: follows from \eqref{eq:Fnincsnulla}-\eqref{eq:Fvannulla}, since the product over $k$ is commutative; for tuples containing time $0$, the factor depending on $C_{0}$ only appears as restricting the $j$-sum, and the remaining factors are symmetric in the positive times.
	\item Marginalization: for $t, t_{1},t_{2}>0$
		\begin{eqnarray}
			F_{0,t}(C_{0},C)		& = &	\sum_{j\in C_{0}} \pp_{0j} \left( \sum_{i\in C} \mP_{ij}(t) \right) = \sum_{j\in C_{0}} \pp_{0j} = F_{0}(C_{0})	\\
			F_{0,t}(C,C_{t})			& = &	\sum_{j\in C} \pp_{0j} \left( \sum_{i\in C_{t}} \mP_{ij}(t) \right) = F_{t}(C_{t})	\\
			F_{t_{1},t_{2}}(C_{1},C) 	& = & \sum_{j=1}^N \pp_{0j} \left(\sum_{i\in C_{1}} \mP_{ij}(t_{1}) \right) \cdot 1 = F_{t_{1}}(C_{1})
		\end{eqnarray}
		and similarly for all higher-order marginals. (Let $S \subseteq T$ be a finite set of distinct times and $S'=S\setminus\{s_{\ell}\}$ for some $s_{\ell}\in S$. If $s_{\ell}\neq 0$, then summing over $C_{\ell}=C$ replaces the corresponding factor by
$\sum_{i\in C}\mP_{ij}(s_{\ell})=1$ (since the matrices are stochastic), yielding exactly the formula for $F_{S'}$ from $F_{S}$. If $s_{\ell}=0$, then summing over $C_{0}=C$ replaces $\sum_{j\in C_{0}}\pp_{0j}$ by $1$, again recovering the formula for $F_{S'}$.)
\end{itemize}

Since these three conditions hold, by Kolmogorov's extension theorem there exists a unique probability measure $\mm_{\vec{\pp}_{0}}$ on $(\Omega,{\cal F})$ whose finite-dimensional distributions are given by the $F$'s above:
\begin{equation}
	\mm_{\vec{\pp}_{0}}(E_{i_{1}}(t_{1}) \wedge \cdots \wedge E_{i_{m}}(t_{m})) = F_{t_{1},\dots,t_{m}}(\{ i_{1}\} ,\dots, \{ i_{m}\} )
\end{equation}
In particular, for every $t\in T$ and $i\in C$,
\begin{equation}
	\mm_{\vec{\pp}_{0}}( E_{i}(t)) = \sum_{j = 1}^N \pp_{0j} \mP_{ij}(t) = ( \mP(t) \vec{\pp}_{0} )_{i} = ( \bP_{t}( \vec{\pp}_{0} ))_{i}
\end{equation}
so $\vec{\mm}_{\vec{\pp}_{0}}(t)=\bP(t, \vec{\pp}_{0})$ for all $t\in T$, that is, the stochastic process $\mm_{\vec{\pp}_{0}}$ implements the solution of $\bP$ with initial condition $\vec{\pp}_{0}$.

Moreover, for all $t\in T$ and $i,j\in C$,
\begin{equation}
	\mm_{\vec{\pp}_{0}}( E_{j}(0) \wedge E_{i}(t)) = \pp_{0j} \mP_{ij}(t)
\end{equation}
hence whenever $\pp_{0j}>0$,
\begin{equation}\label{eq:joeredmeny}
	\mm_{\vec{\pp}_{0}}(E_{i}(t) |  E_{j}(0)) = \mP_{ij}(t)
\end{equation}

Now, by varying $\vec{\pp}_{0}$ one obtains the required stochastic process family $\bM: \vec{\pp}_{0} \mapsto \mm_{\vec{\pp}_{0}}$, since it follows from equation (\ref{eq:joeredmeny}) that $\bM$ is transition-constant. In particular, for positive $\vec{\pp}_{0}$ the common transition matrix \eqref{eq:common-M} satisfies $\mM(t)=\mP(t)$ for all $t\in T$.
\end{proof}

\markoviansatisfieschapmankolmogorov*

\begin{proof}\label{proof:markovian-satisfies-chapman-kolmogorov}
\noindent Let $t,t'\in T$ with $0\le t'\le t$. {Fix $k\in C$ with $\mm(E_k(0))>0$ and fix $i\in C$.
Let
\begin{equation}
J \doteq \left\{\, j\in C \;:\; \mm\big(E_j(t')\wedge E_k(0)\big)>0 \,\right\}
\end{equation}
By the law of total probability applied under the conditional measure $\mm(\cdot \mid E_k(0))$ (summing only over those $j$ for which the relevant conditionals are defined), we obtain
\begin{equation}
\mm\left(E_{i}\left(t\right)\mid E_{k}\left(0\right)\right)
=\sum_{j\in J}\mm\left(E_{i}\left(t\right)\mid E_{k}\left(0\right)\wedge E_{j}\left(t'\right)\right)\mm\left(E_{j}\left(t'\right)\mid E_{k}\left(0\right)\right)
\label{eq:law-of-total-prob-3}
\end{equation}
If the process is Markovian, then for every $j\in J$,
\begin{equation}
\mm\left(E_{i}\left(t\right)\mid E_{k}\left(0\right)\wedge E_{j}\left(t'\right)\right)
=\mm\left(E_{i}\left(t\right)\mid E_{j}\left(t'\right)\right)
\end{equation}
and therefore}
\begin{equation}
\mm\left(E_{i}\left(t\right)\mid E_{k}\left(0\right)\right)
=\sum_{j\in J}\mm\left(E_{i}\left(t\right)\mid E_{j}\left(t'\right)\right)\mm\left(E_{j}\left(t'\right)\mid E_{k}\left(0\right)\right)
\label{eq:law-total-prob-markovian}
\end{equation}
This is precisely the $ik$ entry of the Chapman--Kolmogorov equation \eqref{eq:transition-3} (whenever the conditional probabilities involved are well-defined).
\end{proof}

\decompmapslinear*

\begin{proof}\label{proof:decomp-maps-linear}
Let $t,t'\in T,\,t'\leq t$. Suppose that $\vec{\pp},\vec{q}\in\mathrm{Ran}\bP_{t'}$.
Then there are $\vec{\pp}_{0},\vec{q}_{0}\in\mathcal{S}_{N}$ such
that $\vec{\pp}=\bP_{t'}\left(\vec{\pp}_{0}\right)$ and $\vec{q}=\bP_{t'}\left(\vec{q}_{0}\right)$.
Let $\lambda\in\left[0,1\right]$ be arbitrary. Since $\bP_{t'}$
is convex-combination preserving, $\lambda\vec{\pp}+\left(1-\lambda\right)\vec{q}$
is also in $\mathrm{Ran}\bP_{t'}$. For this vector, we have: 
\begin{gather}
\bP_{t\leftarrow t'}\left(\lambda\vec{\pp}+\left(1-\lambda\right)\vec{q}\right)=\bP_{t\leftarrow t'}\left(\lambda\bP_{t'}\left(\vec{\pp}_{0}\right)+\left(1-\lambda\right)\bP_{t'}\left(\vec{q}_{0}\right)\right)\label{eq:convex-comb-pres-1-1}\\
=\bP_{t\leftarrow t'}\left(\bP_{t'}\left(\lambda\vec{\pp}_{0}+\left(1-\lambda\right)\vec{q}_{0}\right)\right)=\bP_{t}\left(\lambda\vec{\pp}_{0}+\left(1-\lambda\right)\vec{q}_{0}\right)\\
=\lambda\bP_{t}\left(\vec{\pp}_{0}\right)+\left(1-\lambda\right)\bP_{t}\left(\vec{q}_{0}\right)=\lambda\bP_{t\leftarrow t'}\left(\bP_{t'}\left(\vec{\pp}_{0}\right)\right)+\left(1-\lambda\right)\bP_{t\leftarrow t'}\left(\bP_{t'}\left(\vec{q}_{0}\right)\right)\\
=\lambda\bP_{t\leftarrow t'}\left(\vec{\pp}\right)+\left(1-\lambda\right)\bP_{t\leftarrow t'}\left(\vec{q}\right)\label{eq:convex-comb-pres-4-1}
\end{gather}
Here we have used the assumption that $\bP_{t'}$ and $\bP_{t}$ are
convex-combination preserving, as well as the definition of decomposability
(equation~\eqref{eq:decomp}). Comparing \eqref{eq:convex-comb-pres-1-1}
and \eqref{eq:convex-comb-pres-4-1} confirms that $\bP_{t\leftarrow t'}$
preserves the convex combination of vectors in its domain, $\mathrm{Ran}\bP_{t'}$.
\end{proof}

\linearextension*

\begin{proof}\label{proof:linear-extension}
Let $t,t'\in T,\,t'\leq t$. Suppose that the decomposing map $\bP_{t\leftarrow t'}$ can be extended to a convex-combination preserving map $\bar{\bP}_{t\leftarrow t'}:\mathcal{S}_{N}\rightarrow\mathcal{S}_{N}$.
Since any convex-combination preserving map on $\mathcal{S}_{N}$
uniquely extends to a linear map on $\mathbb{R}^{N}$ (see footnote~\ref{fn:extension}),
there exists a matrix $\mP\left(t\leftarrow t'\right)$ such that
\begin{equation}
\bP_{t\leftarrow t'}\left(\vec{\pp}\right)=\mP\left(t\leftarrow t'\right)\vec{\pp}\label{eq:matrix-repr}
\end{equation}
for all $\vec{\pp}\in\mathrm{Ran}\bP_{t'}$. Because $\bar{\bP}_{t\leftarrow t'}$
maps probability vectors to probability vectors, $\mP\left(t\leftarrow t'\right)$
must be stochastic. Then, for any $\vec{\pp}_{0}\in\mathcal{S}_{N}$,
\begin{equation}
\mP\left(t\right)\vec{\pp}_{0}=\bP_{t}\left(\vec{\pp}_{0}\right)=\bP_{t\leftarrow t'}\left(\bP_{t'}\left(\vec{\pp}_{0}\right)\right)=\mP\left(t\leftarrow t'\right)\mP\left(t'\right)\vec{\pp}_{0}\label{eq:divisib-derivation}
\end{equation}
where the last equality follows from \eqref{eq:matrix-repr}. Since
this holds for all $\vec{\pp}_{0}\in\mathcal{S}_{N}$, and $\mathcal{S}_{N}$
contains the standard basis vectors of $\mathbb{R}^{N}$, we obtain
the matrix equality
\begin{equation}
\mP\left(t\right)=\mP\left(t\leftarrow t'\right)\mP\left(t'\right)
\end{equation}
As this equation holds for all $t,t'\in T,\,t'\leq t$, it follows
that $\bP$ is divisible.

Conversely, suppose that $\bP$ is divisible. Then for any $t,t'\in T,\,t'\leq t$,
the action of stochastic matrix $\mP\left(t\leftarrow t'\right)$
defines a convex-combination preserving map $\bar{\bP}_{t\leftarrow t'}:\mathcal{S}_{N}\rightarrow\mathcal{S}_{N}$.
Its restriction $\bP_{t\leftarrow t'}$ to $\mathrm{Ran}\bP_{t'}$
satisfies
\begin{equation}
\bP_{t}\left(\vec{\pp}_{0}\right)=\mP\left(t\right)\vec{\pp}_{0}=\mP\left(t\leftarrow t'\right)\mP\left(t'\right)\vec{\pp}_{0}=\bP_{t\leftarrow t'}\left(\bP_{t'}\left(\vec{\pp}_{0}\right)\right)\label{eq:decomp-derivation}
\end{equation}
for all $\vec{\pp}_{0}\in\mathcal{S}_{N}$, where the last equality
holds because $\bP_{t\leftarrow t'}$ coincides with the action of
$\mP\left(t\leftarrow t'\right)$ on $\mathrm{Ran}\bP_{t'}$. Thus,
the family $\left\{ \bP_{t\leftarrow t'}\right\} _{t,t'\in T,\,t'\leq t}$
decomposes $\bP$. By Proposition~\ref{prop:decomp-characterization},
this family is unique, and therefore the maps $\bar{\bP}_{t\leftarrow t'}$
provide the required extensions.
\end{proof}

\Markovianimplementation*

\begin{proof}\label{proof:Markovian-implementation}
(a) Let $\vec{\pp}(t) : T \rightarrow \mathcal{S}_N$ be a probability vector trajectory. Define, for any $m \in \mathbb{N}$, for any $t_1 <  \dots < t_m \in T$, and for any subsets $C_1, \dots, C_m \subseteq C$, the function
\begin{equation}
	F_{t_1, \dots, t_m}(C_1,\dots,C_m) \doteq \prod_{k=1}^m \left( \sum_{i \in C_k} \pp_i(t_k) \right)
\end{equation}
where $\pp_i(t)$ is the $i$th entry of $\vec{\pp}(t)$. Given $F_{t_1, \dots, t_m}$ for $t_1 <  \dots < t_m \in T$, for any $\pi_1, \dots, \pi_m$ permutation of numbers $1, \dots, m$, further define
\begin{equation}
	F_{t_{\pi_1}, \dots, t_{\pi_m}}(C_1,\dots,C_m) \doteq F_{t_1, \dots, t_m}(C_{\pi^{-1}_1},\dots,C_{\pi^{-1}_m})
\end{equation}
We need to verify the following three conditions to be able to apply Kolmogorov's extension theorem to the so-defined function $F_{t_{\pi_1}, \dots, t_{\pi_m}}(C_1,\dots,C_m)$:
\begin{itemize}
	\item Non-negativity and $\sigma$-additivity: For fixed $t_1, \dots, t_m$ and all $k = 1,\dots,m$, the map $C \mapsto \sum_{i \in C} \pp_i(t_k)$ is a measure on $C$. Therefore, their product is a measure on $C^m$.
	\item Symmetry under permutations: guaranteed by definition.
	\item Marginalization:
	\begin{equation}
		F_{t_1,\dots,t_m}(C_1,\dots,C_{m-1}, C) = \prod_{k=1}^{m-1} \left( \sum_{i\in C_k} \pp_i(t_k) \right) \cdot \sum_{i\in C} \pp_i(t_m) = F_{t_1,\dots,t_{m-1}}(C_1,\dots,C_{m-1}) 
	\end{equation}
	since $\sum_{i \in C} \pp_i (t_m) =1$.
\end{itemize}
Since the three conditions are met, by Kolmogorov's extension theorem there exists a stochastic process $(\Omega, {\cal F}, X, \mm)$ whose finite dimensional distributions are given by:
\begin{equation}
	\mm \left( E_{i_1}(t_1) \wedge \dots \wedge E_{i_m} (t_m) \right) = F_{t_1, \dots, t_m} (\{ i_1 \}, \dots, \{ i_m \}) = \prod_{k=1}^m \pp_{i_k}(t_k)
\end{equation}
Since $\mm(E_i(t)) = F_{t} (\{ i \}) = \pp_i(t)$, this stochastic process implements $\vec{\pp}(t)$. Finally, this implementation is Markovian since
\begin{eqnarray}
	\mm \left( E_{i_{m+1}}(t_{m+1}) ~|~ E_{i_m}(t_m) \wedge \dots \wedge E_{i_1}(t_1) \right) = \frac{ {\displaystyle\prod_{k=1}^{m+1} \pp_{i_k}(t_k)}}{{\displaystyle\prod_{k=1}^{m} \pp_{i_k}(t_k)}} \\
	 = \frac{ \pp_{i_{m+1}} (t_{m+1}) \pp_{i_{m}}(t_{m})}{\pp_{i_{m}}(t_{m})} = \mm \left( E_{i_{m+1}}(t_{m+1}) ~|~ E_{i_m}(t_m) \right)
\end{eqnarray}

(b) First, we show that if a solution of a probability dynamics is non-degenerate then it has a non-Markovian implementation.  Let $\vec{\pp}(t) : T \rightarrow \mathcal{S}_N$ be a non-degenerate probability vector trajectory. Since $\vec{\pp}(t)$ is non-degenerate, there exists $t_{k_1} < t_{k_2} < t_{k_3} \in T$ and $i^1_{k_1}, i^2_{k_1}, i^1_{k_3}, i^2_{k_3} \in C$ such that $0< \pp_{i^1_{k_1}}(t_{k_1}), \pp_{i^2_{k_1}}(t_{k_1}) <1$ and $0< \pp_{i^1_{k_3}}(t_{k_3}),  \pp_{i^2_{k_3}}(t_{k_3}) <1$.

(b1) Furthermore, let us assume that there also exist $i^1_{k_2}, i^2_{k_2} \in C$ such that $0< \pp_{i^1_{k_2}}(t_{k_2}), \pp_{i^2_{k_2}}(t_{k_2}) <1$. In this case, for $j,l,n \in C$ define $K_{jln}$ as
\begin{equation}\label{Kjln}
K_{jln}  = 
\begin{cases}
	\pp_j(t_{k_1}) \pp_l(t_{k_2}) \pp_n(t_{k_3})				&	\text{if} ~ j \not\in \{ i^1_{k_1}, i^2_{k_1} \} ~\text{or}~ l \not\in \{ i^1_{k_2}, i^2_{k_2} \} ~\text{or}~ n \not\in \{ i^1_{k_3}, i^2_{k_3} \}	\\
	\pp_j(t_{k_1}) \pp_l(t_{k_2}) \pp_n(t_{k_3})	 + \epsilon		&	\text{if} ~ j = i^a_{k_1}, l = i^b_{k_2}, n = i^c_{k_3} ~\text{and}~ a+b+c ~ \text{is odd}	\\
	\pp_j(t_{k_1}) \pp_l(t_{k_2}) \pp_n(t_{k_3})	 - \epsilon		&	\text{if} ~ j = i^a_{k_1}, l = i^b_{k_2}, n = i^c_{k_3} ~\text{and}~ a+b+c ~ \text{is even}

\end{cases}
\end{equation}
where $a,b,c$ takes values from $\{ 1, 2 \}$ and $0<\epsilon <1$ is chosen so that $0 \leq K_{jln} \leq 1$ for all $j,l,n \in C$ (due to the assumptions, such an $\epsilon$ exists).

Note that 
\begin{eqnarray*}
	& & \sum_{j \in C} K_{jln}  = \\
	& & \left( \sum_{j \not\in \{ i^1_{k_1}, i^2_{k_1}\}} \pp_j(t_{k_1}) \pp_l(t_{k_2}) \pp_n(t_{k_3}) \right) + \pp_{i^1_{k_1}}(t_{k_1}) \pp_l(t_{k_2}) \pp_n(t_{k_3}) + \pp_{i^2_{k_1}}(t_{k_1}) \pp_l(t_{k_2}) \pp_n(t_{k_3}) + \epsilon - \epsilon \\ 
	& & = \pp_l(t_{k_2}) \pp_n(t_{k_3})
\end{eqnarray*}
similarly, $\sum_{l \in C} K_{jln} = \pp_j(t_{k_1}) \pp_n(t_{k_3})$ and $\sum_{n \in C} K_{jln} = \pp_j(t_{k_1}) \pp_l(t_{k_2})$. It follows that $\sum_{j, l, n \in C} K_{jln} = 1$. Thus, $K_{jln}$ yields a probability distribution on $C^3$ which marginalizes to product distributions on $C^2$, e.g.
\begin{equation}
\sum_{j \in C, ~l \in C_1, ~n \in C_2} K_{jln} = \sum_{l \in C_1, ~n \in C_2} \pp_l(t_{k_2}) \pp_n(t_{k_3}) = \left( \sum_{l \in C_1} \pp_l(t_{k_2}) \right) \cdot \left( \sum_{n \in C_2} \pp_n(t_{k_3}) \right)
\end{equation}

Analogously to the proof in (a) we now define a function $F_{t_1, \dots, t_m}(C_1,\dots,C_m)$ to which we then apply the Kolmogorov extension theorem. Let $m \in \mathbb{N}$, $t_1 < \dots < t_m \in T$, and let $C_1, \dots, C_m \subseteq C$. If $t_{k_1}, t_{k_2}, t_{k_3} \in \{ t_1, \dots, t_m \}$ then define $F_{t_1, \dots, t_m}(C_1,\dots,C_m)$ as
\begin{equation}\label{eq:Fujdef}
	F_{t_1, \dots, t_m}(C_1,\dots,C_m) \doteq \left( \prod_{k \in \{1,\dots,m\} \setminus  \{ k_1, k_2, k_3 \}} \left( \sum_{i \in C_k} \pp_i(t_k) \right) \right) \cdot \sum_{j \in C_{k_1}, ~l \in C_{k_2}, ~n \in C_{k_3}} K_{jln} 
\end{equation}
otherwise, define $F_{t_1, \dots, t_m}(C_1,\dots,C_m)$ as
\begin{equation}\label{eq:Feredetidef}
	F_{t_1, \dots, t_m}(C_1,\dots,C_m) \doteq \prod_{k=1}^m \left( \sum_{i \in C_k} \pp_i(t_k) \right)
\end{equation}
finally, given a so-defined $F_{t_1, \dots, t_m}$ for $t_1 <  \dots < t_m \in T$, for any $\pi_1, \dots, \pi_m$ permutation of numbers $1, \dots, m$, define
\begin{equation}
	F_{t_{\pi_1}, \dots, t_{\pi_m}}(C_1,\dots,C_m) \doteq F_{t_1, \dots, t_m}(C_{\pi^{-1}_1},\dots,C_{\pi^{-1}_m})
\end{equation}

We only need to verify the three conditions to be able to apply Kolmogorov's extension theorem when $F_{t_1, \dots, t_m}(C_1,\dots,C_m)$ takes the form \eqref{eq:Fujdef} (namely, when $t_{k_1}, t_{k_2}, t_{k_3} \in \{ t_1, \dots, t_m \}$; the case when $F_{t_1, \dots, t_m}(C_1,\dots,C_m)$ takes the form \eqref{eq:Feredetidef} has already been covered by the proof of part (a)).
\begin{itemize}
	\item Non-negativity and $\sigma$-additivity: for fixed $t_1, \dots, t_m$, the map $$(C_{k_1}, C_{k_2}, C_{k_3}) \mapsto {\displaystyle\sum_{j \in C_{k_1}, ~l \in C_{k_2}, ~n \in C_{k_3}}} K_{jln}$$ is a probability measure on $C^3$; the map $$(\dots , C_k, \dots) \mapsto {\displaystyle\prod_{k \in \{1,\dots,m\} \setminus  \{ k_1, k_2, k_3 \}}} \left( \sum_{i \in C_k} \pp_i(t_k) \right)$$ is a probability measure on $C^{m-3}$; hence their product is a probability measure on $C^m$.
	\item Symmetry under permutation: guaranteed by definition.
	\item Marginalization: for $k \in \{1,\dots,m\} \setminus  \{ k_1, k_2, k_3 \}$ the proof is analogous to the proof in (a). For $k \in \{ k_1, k_2, k_3 \}$ marginalization follows from the fact that $K_{jln}$ marginalizes to product distributions on $C^2$.
\end{itemize}

Since the three conditions are met, by Kolmogorov's extension theorem there exists a stochastic process $(\Omega, {\cal F}, X, \mm)$ whose finite dimensional distributions are given by
\begin{equation}
	\mm \left( E_{i_1}(t_1) \wedge \dots \wedge E_{i_m} (t_m) \right) = F_{t_1, \dots, t_m} (\{ i_1 \}, \dots, \{ i_m \})
\end{equation}
Since $\mm(E_i(t)) = F_{t} (\{ i \}) = \pp_i(t)$, this stochastic process implements $\vec{\pp}(t)$. Finally, this implementation is non-Markovian since
\begin{eqnarray}
	\mm \left( E_{i^1_{k_2}}(t_{k_2}) \right) & = & \pp_{i^1_{k_2}}(t_{k_2})	\\
	\mm \left( E_{i^1_{k_1}}(t_{k_1}) \wedge E_{i^1_{k_2}}(t_{k_2}) \right) & = & \pp_{i^1_{k_1}}(t_{k_1}) \pp_{i^1_{k_2}}(t_{k_2})	\\
	\mm \left( E_{i^1_{k_2}}(t_{k_2}) \wedge E_{i^1_{k_3}}(t_{k_3}) \right) & = & \pp_{i^1_{k_2}}(t_{k_2}) \pp_{i^1_{k_3}}(t_{k_3})	\\
	\mm \left( E_{i^1_{k_1}}(t_{k_1}) \wedge E_{i^1_{k_2}}(t_{k_2}) \wedge E_{i^1_{k_3}}(t_{k_3}) \right) & = & \pp_{i^1_{k_1}}(t_{k_1}) \pp_{i^1_{k_2}}(t_{k_2}) \pp_{i^1_{k_3}}(t_{k_3}) + \epsilon
\end{eqnarray}
and hence 
\begin{eqnarray}
	&  & \mm \left( E_{i^1_{k_3}}(t_{k_3}) ~|~ E_{i^1_{k_2}}(t_{k_2}) \wedge E_{i^1_{k_1}}(t_{k_1}) \right) = \frac{\pp_{i^1_{k_1}}(t_{k_1}) \pp_{i^1_{k_2}}(t_{k_2}) \pp_{i^1_{k_3}}(t_{k_3}) + \epsilon}{\pp_{i^1_{k_1}}(t_{k_1}) \pp_{i^1_{k_2}}(t_{k_2})} \neq	\\
	& & \frac{\pp_{i^1_{k_2}}(t_{k_2}) \pp_{i^1_{k_3}}(t_{k_3})}{\pp_{i^1_{k_2}}(t_{k_2})} = \mm \left( E_{i^1_{k_3}}(t_{k_3}) ~|~ E_{i^1_{k_2}}(t_{k_2}) \right) 
\end{eqnarray}

(b2) In case there does not exist $i^1_{k_2}, i^2_{k_2} \in C$ such that $0< \pp_{i^1_{k_2}}(t_{k_2}),  \pp_{i^2_{k_2}}(t_{k_2}) <1$, then let $i^1_{k_2}$ be the index for which $\pp_{i^1_{k_2}}(t_{k_2}) = 1$ (hence $\pp_{l}(t_{k_2})=0$ for all $l\neq i^1_{k_2}$). Define $K_{jln}$, instead of (\ref{Kjln}), as
\begin{equation}\label{Kjln-degenerate}
K_{jln}  = 
\begin{cases}
	0				&	\text{if} ~ l \neq i^1_{k_2}	\\
	\pp_j(t_{k_1}) \pp_n(t_{k_3})				&	\text{if} ~ l = i^1_{k_2} ~\text{and}~ \bigl( j \not\in \{ i^1_{k_1}, i^2_{k_1} \} ~\text{or}~ n \not\in \{ i^1_{k_3}, i^2_{k_3} \}\bigr)	\\
	\pp_j(t_{k_1}) \pp_n(t_{k_3})	 + \epsilon		&	\text{if} ~ j = i^a_{k_1}, l = i^1_{k_2}, n = i^c_{k_3} ~\text{and}~ a+1+c ~ \text{is odd}	\\
	\pp_j(t_{k_1}) \pp_n(t_{k_3})	 - \epsilon		&	\text{if} ~ j = i^a_{k_1}, l = i^1_{k_2}, n = i^c_{k_3} ~\text{and}~ a+1+c ~ \text{is even}
\end{cases}
\end{equation}
where $a,c \in\{1,2\}$ and $0<\epsilon<1$ is chosen so that $0\leq K_{jln}\leq 1$ for all $j,l,n\in C$ (such an $\epsilon$ exists because $0<\pp_{i^1_{k_1}}(t_{k_1}),\pp_{i^2_{k_1}}(t_{k_1})<1$ and $0<\pp_{i^1_{k_3}}(t_{k_3}),\pp_{i^2_{k_3}}(t_{k_3})<1$).

Then $K_{jln}$ is a probability distribution on $C^{3}$ with the following marginals:
\begin{itemize}
\item for the pairs involving $t_{k_2}$ one has
\begin{equation}
\sum_{n\in C}K_{jln}=\pp_j(t_{k_1})\pp_l(t_{k_2})
\qquad\text{and}\qquad
\sum_{j\in C}K_{jln}=\pp_l(t_{k_2})\pp_n(t_{k_3})
\end{equation}
since $\pp_l(t_{k_2})=0$ for $l\neq i^1_{k_2}$ and the $\pm\epsilon$ perturbations cancel in the relevant sums;
\item for the $(t_{k_1},t_{k_3})$-marginal define
\begin{equation}
J_{jn}\doteq \sum_{l\in C}K_{jln}=K_{j\,i^1_{k_2}\,n}
\end{equation}
which satisfies $\sum_{n\in C}J_{jn}=\pp_j(t_{k_1})$ and $\sum_{j\in C}J_{jn}=\pp_n(t_{k_3})$, but is \emph{not} the product distribution (because of the $\pm\epsilon$ terms).
\end{itemize}

Now repeat the above construction of the finite-dimensional distributions, with the following (minimal) modification of the definition of $F_{t_1,\dots,t_m}$: keep \eqref{eq:Fujdef} unchanged when $t_{k_1},t_{k_2},t_{k_3}\in\{t_1,\dots,t_m\}$, but when $t_{k_1},t_{k_3}\in\{t_1,\dots,t_m\}$ and $t_{k_2}\notin\{t_1,\dots,t_m\}$ (letting $k_1,k_3$ denote the positions of $t_{k_1},t_{k_3}$ in the ordered list $t_1<\cdots<t_m$) define instead
\begin{equation}\label{Fdegeneratepair}
	F_{t_1, \dots, t_m}(C_1,\dots,C_m) \doteq 
	\left( \prod_{k \in \{1,\dots,m\} \setminus  \{ k_1, k_3 \}} \left( \sum_{i \in C_k} \pp_i(t_k) \right) \right) \cdot \sum_{j \in C_{k_1}, ~n \in C_{k_3}} J_{jn}
\end{equation}
In all remaining cases (i.e., if neither $\{t_{k_1},t_{k_2},t_{k_3}\}\subseteq\{t_1,\dots,t_m\}$ nor $\{t_{k_1},t_{k_3}\}\subseteq\{t_1,\dots,t_m\}$) keep the product definition \eqref{eq:Feredetidef}.

With this modification, Kolmogorov consistency is verified similarly as before: the only additional marginalization to check is that marginalizing \eqref{eq:Fujdef} over $t_{k_2}$ yields (\ref{Fdegeneratepair}), which holds because $\sum_{l\in C}K_{jln}=J_{jn}$.

Finally, the resulting implementation is non-Markovian: since $\mm(E_{i^1_{k_2}}(t_{k_2}))=\pp_{i^1_{k_2}}(t_{k_2})=1$ and
\begin{eqnarray}
\mm\!\left(E_{i^1_{k_1}}(t_{k_1})\wedge E_{i^1_{k_2}}(t_{k_2})\right)&=&\pp_{i^1_{k_1}}(t_{k_1})\\
\mm\!\left(E_{i^1_{k_1}}(t_{k_1})\wedge E_{i^1_{k_2}}(t_{k_2})\wedge E_{i^1_{k_3}}(t_{k_3})\right)
&=&K_{i^1_{k_1}\,i^1_{k_2}\,i^1_{k_3}}
=\pp_{i^1_{k_1}}(t_{k_1})\pp_{i^1_{k_3}}(t_{k_3})+\epsilon
\end{eqnarray}
we obtain (whenever the conditional probabilities are well-defined)
\begin{align}
 \mm\!\left(E_{i^1_{k_3}}(t_{k_3})~\middle|~E_{i^1_{k_2}}(t_{k_2})\wedge E_{i^1_{k_1}}(t_{k_1})\right)
&=\frac{\pp_{i^1_{k_1}}(t_{k_1})\pp_{i^1_{k_3}}(t_{k_3})+\epsilon}{\pp_{i^1_{k_1}}(t_{k_1})} 
\neq \\  \pp_{i^1_{k_3}}(t_{k_3})
&=\mm\!\left(E_{i^1_{k_3}}(t_{k_3})~\middle|~E_{i^1_{k_2}}(t_{k_2})\right)
\end{align}
so the process is not Markovian.

\bigskip

Second, we show that a degenerate solution cannot have a non-Markovian implementation. There are three cases in which a solution $\vec{\pp}(t) = \bP(t, \vec{\pp}_0)$ of probability dynamics $\bP$ can be degenerate: either for all time $t\in T$ there is an $i_t \in C$ such that $\pp_{i_t}(t) = 1$; or there only exists one time $t\in T$ for which there is an $i \in C$ such that $0 < \pp_i(t) < 1$; or, there only exists two times $t_m < t_{m+1} \in T$ for which there are $i_m, i_{m+1} \in C$ such that $0 < \pp_{i_m}(t_m), \pp_{i_{m+1}}(t_{m+1}) < 1$, but there does not exist $t \in T$ such that $t_m < t < t_{m+1}$. One can easily verify for these three cases that the Markov condition (whenever it is meaningful) is always trivially satisfied. For instance, $\mm \left( E_{i_{m+1}}(t_{m+1}) ~|~ E_{i_m}(t_m) \wedge E_{i_{m-1}}(t_{m-1}) \wedge \dots \wedge E_{i_1}(t_1) \right)$ in the third case is only well-defined when $\mm(E_{i_{m-1}}(t_{m-1})) = \dots = \mm(E_{i_1}(t_1)) = 1$, but then $$\mm \left( E_{i_{m+1}}(t_{m+1}) ~|~ E_{i_m}(t_m) \wedge E_{i_{m-1}}(t_{m-1}) \wedge \dots \wedge E_{i_1}(t_1) \right) = \mm \left( E_{i_{m+1}}(t_{m+1}) ~|~ E_{i_m}(t_m) \right)$$ 

\end{proof}

\detdecompmarkov*

\begin{proof}\label{proof:det-decomp-markov}
(a) Suppose $D$ is decomposable. Then there is a family of maps $\left\{ D_{t\leftarrow t'}:\mathrm{Ran}D_{t'}\rightarrow C\right\} _{t,t'\in T,\,t'\leq t}$
decomposing $D$. For each decomposing map $D_{t\leftarrow t'}$,
introduce an extension $\bar{D}_{t\leftarrow t'}:C\rightarrow C$
to the whole of $C$, satisfying $\left.\bar{D}_{t\leftarrow t'}\right|_{\mathrm{Ran}D_{t'}}=D_{t\leftarrow t'}$;
the values of $\bar{D}_{t\leftarrow t'}$ on $C\setminus\mathrm{Ran}D_{t'}$
are chosen arbitrarily. We then define, for all $t,t'\in T,\,t'\leq t$,
the stochastic matrices: 
\begin{equation}
\left(\mP^{D}\left(t\leftarrow t'\right)\right)_{ij}\doteq\begin{cases}
1 & \textrm{if }\bar{D}_{t\leftarrow t'}\left(j\right)=i\\
0 & \textrm{if }\bar{D}_{t\leftarrow t'}\left(j\right)\neq i
\end{cases}\label{eq:division-matrix-proof}
\end{equation}
Clearly, 
\begin{equation}
\mP^{D}\left(t\right)=\mP^{D}\left(t\leftarrow t'\right)\mP^{D}\left(t'\right)\label{eq:composition-matrix-proof}
\end{equation}
for all $t,t'\in T,\,t'\leq t$. Hence, the family of convex combination
preserving maps 
\begin{equation}
\left\{ \left.\bP_{t\leftarrow t'}^{D}\right|_{\mathrm{Ran}\bP_{t'}}:\mathrm{Ran}\bP_{t'}\rightarrow\mathcal{S}_{N}\right\} _{t,t'\in T,\,t'\leq t}
\end{equation}
 determined by the stochastic matrices $\mP^{D}\left(t\leftarrow t'\right)$
decomposes probability dynamics $\bP^{D}$.

Conversely, suppose $D$ is not decomposable. This means that there
exist $t,t'\in T,\,t'<t$, and $i,j\in C$ such that $D\left(t',i\right)=D\left(t',j\right)$
but $D\left(t,i\right)\neq D\left(t,j\right)$. This entails that
$\bP^{D}\left(t',\vec{e}_{i}\right)=\bP^{D}\left(t',\vec{e}_{j}\right)$
but $\bP^{D}\left(t,\vec{e}_{i}\right)\neq\bP^{D}\left(t,\vec{e}_{j}\right)$;
in which case, due to Proposition~\ref{prop:decomp-characterization}~(i),
$\bP^{D}$ is not decomposable.

(b) Suppose $D$ is decomposable, with decomposing family of maps
\begin{equation}
\left\{ D_{t\leftarrow t'}:\mathrm{Ran}D_{t'}\rightarrow C\right\} _{t,t'\in T,\,t'\leq t}
\end{equation}
First we verify that 
\begin{equation}
\mm_{\vec{\pp}_{0}}^{D}\left(E_{i}\left(t\right)|E_{j}\left(t'\right)\right)=\begin{cases}
1 & \textrm{if }D_{t\leftarrow t'}\left(j\right)=i\\
0 & \textrm{if }D_{t\leftarrow t'}\left(j\right)\neq i
\end{cases}\label{eq:det-cond-prob3}
\end{equation}
for all $t,t'\in T,\,t'\leq t$, $i,j\in C$, with $\mm_{\vec{\pp}_{0}}^{D}\left(E_{j}\left(t'\right)\right)>0$,
and all $\vec{\pp}_{0}\in\mathcal{S}_{N}$. 
{Indeed, by Bayes' rule and the law of total probability we have 
\begin{equation}
\mm_{\vec{\pp}_{0}}^{D}\left(E_{i}\left(t\right)|E_{j}\left(t'\right)\right)
=\frac{\sum_{k=1}^{N}\mm_{\vec{\pp}_{0}}^{D}\left(E_{i}\left(t\right)\wedge E_{j}\left(t'\right)\wedge E_{k}\left(0\right)\right)}{\sum_{k=1}^{N}\mm_{\vec{\pp}_{0}}^{D}\left(E_{j}\left(t'\right)\wedge E_{k}\left(0\right)\right)}
\label{eq:det-cond-prob2}
\end{equation}
}

Using \eqref{eq:det-joint-prob} and the existence of maps $D_{t\leftarrow t'}$,
the terms in the numerator can be expressed as 
\begin{equation}
\mm_{\vec{\pp}_{0}}^{D}\left(E_{i}\left(t\right)\wedge E_{j}\left(t'\right)\wedge E_{k}\left(0\right)\right)=\begin{cases}
\mm_{\vec{\pp}_{0}}^{D}\left(E_{k}\left(0\right)\right) & \textrm{if }D\left(t',k\right)=j\textrm{ and }D_{t\leftarrow t'}\left(j\right)=i\\
0 & \textrm{otherwise}
\end{cases}
\end{equation}
{By \eqref{eq:det-joint-prob} and \eqref{eq:det-joint-prob0}, the terms in the denominator can be
written as 
\begin{equation}
\mm_{\vec{\pp}_{0}}^{D}\left(E_{j}\left(t'\right)\wedge E_{k}\left(0\right)\right)=\begin{cases}
\mm_{\vec{\pp}_{0}}^{D}\left(E_{k}\left(0\right)\right) & \textrm{if }D\left(t',k\right)=j\\
0 & \textrm{otherwise}
\end{cases}
\end{equation}
}
We have two cases: 
\begin{itemize}
\item if $D_{t\leftarrow t'}\left(j\right)=i$, then the numerator and denominator
of \eqref{eq:det-cond-prob2} coincide, hence $\mm_{\vec{\pp}_{0}}^{D}\left(E_{i}\left(t\right)|E_{j}\left(t'\right)\right)=1$, 
\item if $D_{t\leftarrow t'}\left(j\right)\neq i$, then the numerator of
\eqref{eq:det-cond-prob2} vanishes and $\mm_{\vec{\pp}_{0}}^{D}\left(E_{i}\left(t\right)|E_{j}\left(t'\right)\right)=0$. 
\end{itemize}
This verifies \eqref{eq:det-cond-prob3}.

Now, it is an elementary fact from probability theory that for any
probability measure $\mu$: if $\mm\left(A|B\right)\in\left\{ 1,0\right\} $,
then for any event $C$ (with $\mm\left(B\wedge C\right)>0$), it
holds that $\mm\left(A|B\wedge C\right)=\mm\left(A|B\right)$. Applying
this fact yields 
\begin{equation}
\mm_{\vec{\pp}_{0}}^{D}\left(E_{i_{m+1}}\left(t_{m+1}\right)|E_{i_{1}}\left(t_{1}\right)\wedge...\wedge E_{i_{m}}\left(t_{m}\right)\right)=\mm_{\vec{\pp}_{0}}^{D}\left(E_{i_{m+1}}\left(t_{m+1}\right)|E_{i_{m}}\left(t_{m}\right)\right)\label{eq:markov-1-1}
\end{equation}
for all $m\in\mathbb{N}$, $i_{1},...,i_{m},i_{m+1}\in C$, $t_{1}\leq t_{2}\leq...\leq t_{m}\leq t_{m+1}\in T$,
and $\vec{\pp}_{0}\in\mathcal{S}_{N}$ (whenever both sides are well-defined).
Hence the stochastic process family $\bM^{D}$ is Markovian.

Conversely, suppose $D$ is not decomposable. Then there exist $t,t'\in T,\,t'<t$,
and $i,j,k,l\in C$ such that $D\left(t',i\right)=D\left(t',j\right)=k$
and $D\left(t,i\right)=l\neq D\left(t,j\right)$. Choose $\vec{\pp}_{0}\in\mathcal{S}_{N}$
with $\pp_{0i}=\pp_{0j}=1/2$. Using equations \eqref{eq:det-joint-prob0}-\eqref{eq:det-cond-prob}
and \eqref{eq:det-cond-prob2}, one finds
\begin{align}
\mm_{\vec{\pp}_{0}}^{D}\left(E_{l}\left(t\right)|E_{k}\left(t'\right)\wedge E_{i}\left(0\right)\right) & =1\\
\mm_{\vec{\pp}_{0}}^{D}\left(E_{l}\left(t\right)|E_{k}\left(t'\right)\right) & =\frac{1}{2}
\end{align}
Thus the stochastic process family $\bM^{D}$ is not Markovian. 
\end{proof}

\thereexistsenvironment*

\begin{proof}\label{proof:thereexistsenvironment}
Let $M=N^{\tau}$ and identify the elements of $\Lambda=\left\{ 1,...,N^{\tau}\right\} $
with those of $C^{\tau}$, ordered arbitrarily. In this setting, a
configuration pair $\left(i,\alpha\right)\in C\times\Lambda$ corresponds
to a temporal trajectory of the target system through $C$, i.e. to
an $\omega\in\Omega$, where $i$ specifies the initial configuration
and $\alpha$ encodes the remainder of the trajectory.

With the identification $\omega=\left(i,\alpha\right)$, define the
deterministic system-ancilla dynamics by 
\begin{equation}
\SA\left(t,i,\alpha\right)\doteq X_{t}\left(\omega\right)
\end{equation}
for all $t\in T$, $i\in C$, $\alpha\in\Lambda$, and define the initial probability distributions by 
\begin{equation}
(\pi_{\vec{\pp}_{0}})_{i,\alpha}\doteq\mm_{\vec{\pp}_{0}}\left(\left\{ \omega\right\} \right)\label{eq:i-alpha-omega}
\end{equation}
for all $i\in C$, $\alpha\in\Lambda$ and $\vec{\pp}_{0}\in\mathcal{S}_{N}$.
Here, $(\pi_{\vec{\pp}_{0}})_{i,\alpha}$ denotes the entry of $\vec{\pi}_{\vec{\pp}_{0}}$
representing the probability that the target system is in configuration
$i\in C$ and the ancilla is in configuration $\alpha\in\Lambda$.
Then, from \eqref{eq:comp-joint-prob-rel2}, \eqref{eq:composite-measure2}, and \eqref{eq:i-alpha-omega} we have 
\begin{equation}
\mm_{\vec{\pp}_{0}}^{\SA}=\sum_{i=1}^{N}\sum_{\alpha=1}^{M}(\pi_{\vec{\pp}_{0}})_{i,\alpha}\mm_{i,\alpha}^{\SA}=\underset{\omega\in\Omega}{\sum}\mm_{\vec{\pp}_{0}}\left(\left\{ \omega\right\} \right)\delta_{\omega}=\mm_{\vec{\pp}_{0}}
\end{equation}
for all $\vec{\pp}_{0}\in\mathcal{S}_{N}$, where $\delta_{\omega}$
is the Dirac measure on $\Omega$ concentrated at $\omega$. Hence,
$\bM^{\SA}\left(\vec{\pp}_{0}\right)=\bM\left(\vec{\pp}_{0}\right)$
for all $\vec{\pp}_{0}\in\mathcal{S}_{N}$, and therefore $\bM^{\SA}=\bM$. 
\end{proof}

\thereexistsS*

\begin{proof}\label{proof:thereexists-S}
Fix $i\in C$ and consider the probability vector trajectory $t\mapsto\bP\left(t,\vec{e}_{i}\right)$.
By Proposition~\ref{prop:Markovian-implementation}, there exists
a stochastic process $\mm_{i}$ that implements this probability vector
trajectory. Since $\bP\left(0,\cdot\right)$ is the identity map on
$\mathcal{S}_{N}$, we have
\begin{equation}
\vec{\mm}_{i}\left(0\right)=\bP\left(0,\vec{e}_{i}\right)=\vec{e}_{i}
\end{equation}
Hence $\mm_{i}$ satisfies \eqref{eq:feature-mu-i-S}, the defining
condition of process $\mm_{i}^{S}$. We therefore define $\mm_{i}^{S}\doteq\mm_{i}$
for all $i=1,...,N$. 

Now let $\vec{\pp}_{0}\in\mathcal{S}_{N}$. Based on the definition
of the statistical mixture \eqref{eq:mixture-stochastic} and the
linearity of $\bP$, we obtain
\begin{equation}
\bP^{S}\left(t,\vec{\pp}_{0}\right)=\vec{\mm}_{\vec{\pp}_{0}}^{S}\left(t\right)=\sum_{i=1}^{N}\pp_{0i}\vec{\mm}_{i}\left(t\right)=\sum_{i=1}^{N}\pp_{0i}\bP\left(t,\vec{e}_{i}\right)=\bP\left(t,\vec{\pp}_{0}\right)
\end{equation}
for all $t\in T$. Hence $\bP^{S}=\bP$, as claimed.
\end{proof}

\thereexistsenvironmentS*

\begin{proof}\label{proof:thereexistsenvironment-S}
We adopt similar notation to the one introduced in the proof of Proposition~\ref{prop:thereexistsenvironment}.
Let $\Omega$ be the set of trajectories $\omega:T\to C$ with $T=\{0,1,\dots,\tau\}$. Set $M \doteq (N^\tau)^N$ and identify the elements of $\Lambda=\{1,\dots,M\}$ with those of $(C^\tau)^N$,
ordered arbitrarily. Thus each $\alpha\in\Lambda$ can be written as
\begin{equation}
\alpha = \big(\alpha^{(1)},\dots,\alpha^{(N)}\big),
\qquad \alpha^{(r)}\in C^\tau \ \ (r=1,\dots,N)
\end{equation}
where $\alpha^{(r)}$ encodes a sequence of configurations of length $\tau$ for times $1,\dots,\tau$.

For each $i\in C$ and $\alpha\in\Lambda$, define the trajectory $\omega_{i,\alpha}\in\Omega$ by
\begin{equation}
\omega_{i,\alpha}(0)\doteq i,
\qquad
\omega_{i,\alpha}(t)\doteq \alpha^{(i)}(t)\ \ \text{for }t=1,\dots,\tau
\end{equation}
(where we view $\alpha^{(i)}$ as a function on $\{1,\dots,\tau\}$ in the natural way).
Define the deterministic system-ancilla dynamics by
\begin{equation}
\SA(t,i,\alpha)\doteq X_t(\omega_{i,\alpha})=\omega_{i,\alpha}(t)=\alpha^{(i)}(t)\label{eq:envSbizonyitas-SA}
\end{equation}
In particular, $\SA(0,\cdot,\alpha)$ is the identity on $C$ for every $\alpha\in\Lambda$.

Next, define a probability distribution $\vec{\lambda}_0\in\mathcal{S}_M$ as follows.
For each $r\in C$ and each $\beta\in C^\tau$, let $\omega_{r,\beta}\in\Omega$ denote the
trajectory with $\omega_{r,\beta}(0)=r$ and $(\omega_{r,\beta}(1),\dots,\omega_{r,\beta}(\tau))=\beta$,
and set
\begin{equation}
q_r(\beta)\doteq \mm_r^S(\{\omega_{r,\beta}\})\label{eq:envSbizonyitas-qr}
\end{equation}
(These $q_r$ satisfy 
\begin{equation}
\sum_{\beta\in C^\tau} q_r(\beta)=1\label{eq:envSbizonyitas-qr-sum}
\end{equation}
since $\mm_r^S$ is a probability measure
and is supported on trajectories starting at $r$ at time $0$.)
Now define, for $\alpha=(\alpha^{(1)},\dots,\alpha^{(N)})\in\Lambda$,
\begin{equation}
\lambda_{0\alpha}\doteq \prod_{r=1}^N q_r(\alpha^{(r)})\label{eq:envSbizonyitas-lambda}
\end{equation}
Then $\sum_{\alpha\in\Lambda}\lambda_{0\alpha}=1$, hence $\vec{\lambda}_0\in\mathcal{S}_M$.

We now verify (i): $\mm_i^S=\sum_{\alpha=1}^M \lambda_{0\alpha}\mm_{i,\alpha}^{\SA}$ for all $i\in C$. Fix $i\in C$ and $\beta\in C^\tau$. For $\omega_{i,\beta}$ we then have
\begin{align}
\sum_{\alpha\in\Lambda}\lambda_{0\alpha}\mm_{i,\alpha}^{\SA}(\{\omega_{i,\beta}\})
&=\sum_{\alpha\in\Lambda}\lambda_{0\alpha}\,\delta_{\omega_{i,\alpha}}(\{\omega_{i,\beta}\}) 
=\sum_{\substack{\alpha\in\Lambda:\\ \omega_{i,\alpha}=\omega_{i,\beta}}}\lambda_{0\alpha}
=\sum_{\substack{\alpha\in\Lambda:\\ \alpha^{(i)}=\beta}}\lambda_{0\alpha} \\
&=\sum_{\substack{\alpha\in\Lambda:\\ \alpha^{(i)}=\beta}}
\prod_{r=1}^N q_r(\alpha^{(r)}) \\
&= q_i(\beta)\!\!\sum_{\alpha^{(1)}\in C^\tau}\cdots\!\!\sum_{\alpha^{(i-1)}\in C^\tau}
\sum_{\alpha^{(i+1)}\in C^\tau}\cdots\!\!\sum_{\alpha^{(N)}\in C^\tau}
\prod_{r\neq i} q_r(\alpha^{(r)}) \\
&= q_i(\beta)\prod_{r\neq i}\left(\sum_{\gamma\in C^\tau} q_r(\gamma)\right) = q_i(\beta)\cdot 1 =\mm_i^S(\{\omega_{i,\beta}\})
\end{align}
where the first equality uses \eqref{eq:comp-joint-prob-rel2} together with \eqref{eq:envSbizonyitas-SA} ($\mm_{i,\alpha}^{\SA}$ is the Dirac measure concentrated at $\omega_{i,\alpha}$); the third equality uses that $\omega_{i,\alpha}=\omega_{i,\beta}$ is equivalent to $\alpha^{(i)}=\beta$ by the definition of $\omega_{i,\alpha}$; the fourth equality uses \eqref{eq:envSbizonyitas-lambda}; the seventh equality uses \eqref{eq:envSbizonyitas-qr-sum}; and the last equality uses \eqref{eq:envSbizonyitas-qr}. Hence $\sum_{\alpha}\lambda_{0\alpha}\mm_{i,\alpha}^{\SA}=\mm_i^S$ for all $i \in C$, proving (i).

For (ii), define the family $\{\vec{\pi}_{\vec{\pp}_0}\}_{\vec{\pp}_0\in\mathcal{S}_N}$ by the independence condition
\begin{equation}
(\pi_{\vec{\pp}_0})_{i,\alpha}=\pp_{0i}\lambda_{0\alpha}
\qquad
\end{equation}
for all $i\in C,\ \alpha\in\Lambda$. Then, using \eqref{eq:composite-measure2} and \eqref{eq:comp-measure2},
\begin{align}
\mm_{\vec{\pp}_0}^{\SA}
&=\sum_{i=1}^N\sum_{\alpha=1}^M \pp_{0i}\lambda_{0\alpha}\mm_{i,\alpha}^{\SA}
=\sum_{i=1}^N \pp_{0i}\left(\sum_{\alpha=1}^M \lambda_{0\alpha}\mm_{i,\alpha}^{\SA}\right)
=\sum_{i=1}^N \pp_{0i}\mm_i^S
=\mm_{\vec{\pp}_0}^S
\end{align}
Therefore $\bM^{\SA}=\bM^S$.
\end{proof}

\section{Appendix}\label{appendix:innerimplementation}

As throughout Sections \ref{sec:twodescriptions}--\ref{sec:quantum-dynamics}, let $C = \{1, \dots, N\}$ be a set of configurations, $0 \in T\subseteq\mathbb{R}$ a set of time indices, $\Omega$ the set of all $T\rightarrow C$ functions, $\mathcal{F}$ the cylinder $\sigma$-algebra of $\Omega$, ${\cal M}$ the set of probability measures (the set of canonical stochastic processes) $\mm$ over $(\Omega, \mathcal{F})$, $\bP:T\times\mathcal{S}_{N}\rightarrow\mathcal{S}_{N}$ a probability dynamics, and $\bM: \mathcal{S}_{N}\rightarrow{\cal M}, ~\vec{\pp}_{0}\mapsto\mm_{\vec{\pp}_{0}}$ a (canonical) stochastic process family. Suppose that $\bM$ implements $\bP$. Starting from the stochastic process \emph{family} $\bM$ we now define a \emph{single}, non-canonical stochastic process $\tilde{\bM}$ which, in a precise sense, contains the same information about probability dynamics $\bP$ as does $\bM$.

Thus, let $\tilde{\Omega} \doteq  \Omega \times \mathcal{S}_{N}$, $\tilde{\mathcal{F}} \doteq \mathcal{F} \otimes {\cal B}(\mathcal{S}_{N})$, let $\lambda$ be the normalized Lebesgue measure on $(\mathcal{S}_{N}, {\cal B}(\mathcal{S}_{N}))$, and let $\tilde{\mm}_{\vec{\pp}}$ be the probability measures on $(\Omega, \mathcal{F})$ defined as $\tilde{\mm}_{\vec{\pp}} \doteq  \mm_{\vec{\pp}}$ for each $\vec{\pp} \in \mathcal{S}_{N}$. For any $\tilde{A} \in \tilde{\mathcal{F}}$ and for any $\vec{\pp} \in \mathcal{S}_{N}$ define $A_{\vec{\pp}}\in \mathcal{F}$ as
\begin{equation}
	A_{\vec{\pp}}  \doteq \{ \tau \in \Omega ~|~ (\tau, \vec{\pp}) \in \tilde{A} \}
\end{equation}
and define the probability measure $\tilde{\mm}$ on $(\tilde{\Omega}, \tilde{\mathcal{F}})$ as
\begin{equation}\label{eq:probmeasureinner}
	\tilde{\mm}(\tilde{A}) = \int_{\mathcal{S}_{N}} \tilde{\mm}_{\vec{\pp}} (A_{\vec{\pp}}) ~ d\lambda(\vec{\pp})
\end{equation}
Finally, let $\tilde{X} : T \times \tilde{\Omega} \rightarrow C$ be the random variable defined as
\begin{equation}
	\tilde{X}(t, (\tau, \vec{\pp})) \doteq \tau(t)
\end{equation}
We arrived at a single, non-canonical stochastic process $\tilde{\bM} = (\tilde{\Omega}, \tilde{\mathcal{F}}, \tilde{X}, \tilde{\mm})$ which we may call as the {\em uniform inner representation of the (canonical) stochastic process family $\bM$}.

Define the sets 
\begin{equation}
	\tilde{E}_i(t) \doteq \{ \tau \in \Omega  ~|~ \tau(t) = i \}
\end{equation}
and the vector
\begin{equation}
	\vec{\tilde{\mm}}_{\vec{\pp}} (t) \doteq ( \tilde{\mm}_{\vec{\pp}}(\tilde{E}_1(t)), \dots, \tilde{\mm}_{\vec{\pp}}(\tilde{E}_N(t)))
\end{equation}
for all $t \in T$.
We say that stochastic process $(\tilde{\Omega}, \tilde{\mathcal{F}}, \tilde{X}, \tilde{\mm})$ \emph{inner-implements} probability dynamics $\bP$ iff, for all $t \in T$ and for all $\vec{\pp}_0 \in \mathcal{S}_{N}$, we have
\begin{equation}
	\vec{\tilde{\mm}}_{\vec{\pp}_0}(t)  = \bP(t, \vec{\pp}_0)
\end{equation}

From the previous construction it is easy to see the following:

\begin{prop}\label{prop:innerimplements}
Let $\bP$ be a probability dynamics, $\bM$ a (canonical) stochastic process family, and let the stochastic process $\tilde{\bM}$ be the uniform inner representation of $\bM$. Then $\bM$ implements $\bP$ if and only if $\tilde{\bM}$ inner-implements $\bP$.
\end{prop}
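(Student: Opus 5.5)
The argument will be a direct comparison of definitions, since the uniform inner representation stores $\bM$ pointwise. By construction $\tilde{\mm}_{\vec{\pp}}=\mm_{\vec{\pp}}$ for every $\vec{\pp}\in\mathcal{S}_N$. Also, the set $\tilde{E}_i(t)=\{\tau\in\Omega~|~\tau(t)=i\}$ is literally the event $E_i(t)$ of \eqref{eq:event}, because $X_t(\tau)=\tau(t)$. The first step will therefore be to record the identity
\begin{equation}
\vec{\tilde{\mm}}_{\vec{\pp}_0}(t)=\left(\mm_{\vec{\pp}_0}(E_1(t)),\dots,\mm_{\vec{\pp}_0}(E_N(t))\right)=\vec{\mm}_{\vec{\pp}_0}(t),
\end{equation}
valid for all $t\in T$ and all $\vec{\pp}_0\in\mathcal{S}_N$.

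The second step is to compare the two defining conditions. By Definition~\ref{def:implementation}, $\bM$ implements $\bP$ exactly when $\vec{\mm}_{\vec{\pp}_0}(t)=\bP(t,\vec{\pp}_0)$ for all $t$ and $\vec{\pp}_0$. By definition, $\tilde{\bM}$ inner-implements $\bP$ exactly when $\vec{\tilde{\mm}}_{\vec{\pp}_0}(t)=\bP(t,\vec{\pp}_0)$ for all $t$ and $\vec{\pp}_0$. Given the identity from the first step, these two conditions coincide, so both directions of the biconditional follow at once.

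The only point I expect to need care is well-definedness: $\tilde{\mm}$ should be a genuine probability measure on $(\tilde{\Omega},\tilde{\mathcal{F}})$. This requires the map $\vec{\pp}\mapsto\mm_{\vec{\pp}}(A_{\vec{\pp}})$ to be measurable, so that the integral in \eqref{eq:probmeasureinner} makes sense. For an arbitrary family $\bM$ this measurability is not automatic.

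This condition is not needed for the equivalence itself. The inner-implementation condition refers only to the fibre measures $\tilde{\mm}_{\vec{\pp}}$, not to $\tilde{\mm}$. I would therefore note that the biconditional holds irrespective of this issue, and that the construction of $\tilde{\mm}$ presupposes $\bM$ to be measurable as a kernel. That assumption is needed only for $\tilde{\bM}$ to be a bona fide stochastic process, not for the proposition.
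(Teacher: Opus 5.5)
Your proposal is correct and is the same argument the paper has in mind; the paper gives no proof beyond saying the result is ``easy to see'' from the construction. Since $\tilde{\mm}_{\vec{\pp}}=\mm_{\vec{\pp}}$ and $\tilde{E}_i(t)=E_i(t)$, the implementation and inner-implementation conditions are literally identical. Your side remark is also right: the integral defining $\tilde{\mm}$ presupposes that $\vec{\pp}\mapsto\mm_{\vec{\pp}}$ is a measurable kernel, which the paper leaves implicit, but this does not affect the biconditional.
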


Combined with Proposition \ref{prop:implementsalways}, which asserts that a probability dynamics can always be implemented by a stochastic process family, Proposition \ref{prop:innerimplements} shows a sense in which a probability dynamics can also be reproduced by a single stochastic process on an enlarged sample space. In some cases such a single stochastic process may provide a natural probabilistic description of a physical system. However, it is important to distinguish this construction from a system--ancilla implementation in which an ancilla is prepared independently in a fixed state and then marginalised: in the inner representation, the relevant system statistics are recovered \emph{conditional on} the value of an additional parameter, rather than by averaging it out.

For example, the uniform inner representation of the stochastic process family implementing the probability dynamics of the coin toss example of the Introduction can be described as follows. The coin is first ``shaken'', which we model as the selection of an initial parameter $\vec{\pp}\in\mathcal{S}_{2}$ according to the uniform (Lebesgue) measure on $\mathcal{S}_{2}$. This parameter $\vec{\pp}$ labels the corresponding member of the implementing family and thereby determines a conditional probability measure $\mm_{\vec{\pp}}$ on configuration trajectories $C^{3}$ via equations (\ref{eq:mur1})--(\ref{eq:mur8}). The so-prepared coin is then tossed three times, changing the position of its inner weight according to (\ref{eq_probdyn1})--(\ref{eq_probdyn3}). The sample space of the uniform inner representation is $C^{3}\times\mathcal{S}_{2}$ and its probability measure $\mm$ on $C^{3}\times\mathcal{S}_{2}$ is generated from the conditional probability measures $\mm_{\vec{\pp}}$ on $C^{3}$ for $\vec{\pp}\in\mathcal{S}_{2}$ via equation (\ref{eq:probmeasureinner}). Crucially, the probability dynamics of interest is obtained from this single measure by conditioning on the selected value of $\vec{\pp}$ (i.e.\ by working with $\mm_{\vec{\pp}}$), not by taking the unconditional marginal on $C^{3}$, which would correspond to averaging over $\vec{\pp}$ and in general would not reproduce the original (possibly nonlinear) dependence on the initial parameter.

It is important to emphasize that, while the state space of each stochastic process of a (canonical) stochastic process family agrees with the set of configurations $C$ of the probability dynamics, and while their sample spaces consist of the configuration trajectories $\Omega$, the sample space of the inner representation is inflated to $\Omega \times \mathcal{S}_N$. This inflation of the sample space allows the inner representation to contain the same set of probability measures on $(\Omega, \mathcal{F})$ (as conditional probabilities parametrized with $\vec{\pp} \in \mathcal{S}_N$) which a (canonical) stochastic process family explicitly expresses. In general, Proposition \ref{prop:innerimplements} is due to the mathematical concept of a stochastic process being flexible enough to represent any sort of information. However, Proposition \ref{prop:innerimplements} does not change the key conceptual point that a {\em set} of probability measures for the configuration trajectories is needed to implement or inner-implement a probability dynamics, and a single probability measure for the configuration trajectories is not sufficient to do so. 

In order to put emphasis on this conceptual point, and despite that in some cases it may be natural to probabilistically describe a physical system with a single stochastic process whose sample space differs from the possible configuration trajectories or whose state space differs from the possible configurations, the article focuses on canonical stochastic processes and (canonical) stochastic process families.

\section{Appendix }\label{appendix:Barandes-interference}

To recover the formalism of quantum mechanics, \cite{barandes2025} considers
what we would describe as a linear and indivisible probability dynamics
$\bP$, with the special property that the matrices $\mP\left(t\right)$
are \emph{unistochastic}. This means that there exists a family $\left\{ U\left(t\right)\right\} _{t\in T}$
of unitary matrices such that
\begin{equation}
\mP\left(t\right)=\overline{U\left(t\right)}\odot U\left(t\right)
\end{equation}
for all $t\in T.$ Here the overline denotes complex conjugation and
$\odot$ is the Schur-Hadamard product, defined for arbitrary $N\times N$
matrices $X$ and $Y$ as entry-wise multiplication: $\left(X\odot Y\right){}_{ij}\equiv X_{ij}Y_{ij}$.
In the interpretation proposed in \cite{barandes2025}, $U\left(t\right)$ plays the role of the unitary time-evolution operator familiar from quantum theory---here
emerging from the description of a generic ``indivisible stochastic
process.'' The author writes:
\begin{quote}
There is an alternative---and far-reaching---way to understand the
generic indivisibility of a time-dependent transition matrix $\mP\left(t\right)$.
To this end, suppose that $\mP\left(t\right)$ happens to be unistochastic,
with unitary time-evolution operator $U\left(t\right)$. Then, for
any two times $t$ and $t'$, one can define a \emph{relative} time-evolution
operator
\begin{equation}
U\left(t\leftarrow t'\right)\equiv U(t)U^{-1}\left(t'\right)
\end{equation}
which is guaranteed to be unitary and which yields the composition
law
\begin{equation}
U(t)=U\left(t\leftarrow t'\right)U\left(t'\right)
\end{equation}
Note that this composition law does not extend to the transition matrix
$\mP\left(t\right)$ due to cross terms.

With
\begin{equation}
\mP_{kj}\left(t'\right)\equiv\left|U_{kj}\left(t'\right)\right|^{2}
\end{equation}
defined as usual, and defining
\begin{equation}
\mP_{ik}\left(t\leftarrow t'\right)\equiv\left|U_{ik}\left(t\leftarrow t'\right)\right|^{2}
\end{equation}
{[}...{]} one sees that the discrepancy between the true transition
matrix $\mP\left(t\right)$ and its would-be division $\mP\left(t\leftarrow t'\right)\mP\left(t'\right)$
is given by
\begin{equation}
\mP_{ij}\left(t\right)-\left[\mP\left(t\leftarrow t'\right)\mP\left(t'\right)\right]_{ij}=\underset{k\neq l}{\sum}\overline{U_{ik}\left(t\leftarrow t'\right)\left(U\left(t'\right)\vec{e}_{j}\right)_{k}}U_{il}\left(t\leftarrow t'\right)\left(U\left(t'\right)\vec{e}_{j}\right)_{l}\label{eq:discrepancy}
\end{equation}

{[}...{]} Remarkably, the right-hand side of \eqref{eq:discrepancy}
gives the general mathematical formula for quantum interference, despite
the absence of manifestly quantum-theoretic assumptions. One sees
from this analysis that interference is a direct consequence of the
stochastic dynamics not generally being divisible. More precisely,
interference is nothing more than a generic discrepancy between the
\emph{actual} indivisible stochastic dynamics and a \emph{heuristic-approximate}
divisible stochastic dynamics. (\cite{barandes2025}, p.~12, emphasis
in original, with notation adjusted to present purposes) 
\end{quote}
Aside from the methodological concern about whether one can identify
``the general mathematical formula for quantum interference'' solely
on the basis of a formal similarity between expressions with ``cross
terms,'' there is a concrete mathematical problem with this interpretation.
The problem is that the ``discrepancy'' between $\mP\left(t\right)$
and $\mP\left(t\leftarrow t'\right)\mP\left(t'\right)$, as described
above, is unrelated to indivisibility. Notice that the comparison is effectively between $\mP\left(t\right)$ and $\left(\overline{U\left(t\leftarrow t'\right)}\odot U\left(t\leftarrow t'\right)\right)\mP\left(t'\right)$. And these two matrices do not generally coincide even when the probability
dynamics \emph{is} divisible. The reason is that $\mP\left(t\leftarrow t'\right)$,
defined via the condition that the probability dynamics is divisible,
does not in general equal $\overline{U\left(t\leftarrow t'\right)}\odot U\left(t\leftarrow t'\right)$.
This is a consequence of the fact that the Schur-Hadamard product,
on the one hand, and matrix multiplication and inversion, on the other,
are not commutative operations. More precisely:

\noindent
\begin{equation}
\overline{U\left(t\right)U\left(t'\right)^{-1}}\odot U\left(t\right)U\left(t'\right)^{-1}\neq\left(\overline{U\left(t\right)}\odot U\left(t\right)\right)\left(\overline{U\left(t'\right)}\odot U\left(t'\right)\right)^{-1}\label{eq:not-commute}
\end{equation}
assuming that the inverse on the right-hand side exists. In other
words, the following diagram does not commute:\[
\begin{tikzcd}[column sep=large, row sep=large]
    {U(t),\,U(t')} 
        \arrow[rrr, "{\begin{array}{c}\text{dynamics} \\[-2pt] \text{decomposition}\end{array}}"] 
        \arrow[d, "{\begin{array}{c}\text{Schur-Hadamard} \\[-2pt] \text{square}\end{array}}"', anchor=center] 
    & & & 
    {U(t)U(t')^{-1}} 
        \arrow[d, "{\begin{array}{c}\text{Schur-Hadamard} \\[-2pt] \text{square}\end{array}}", style={draw=red}, anchor=center] \\
    {\mathbb{P}(t),\,\mathbb{P}(t')} 
        \arrow[rrr, "{\begin{array}{c}\text{dynamics} \\[-2pt] \text{decomposition}\end{array}}"'] 
    & & & 
    {\mathbb{P}(t)\,\mathbb{P}(t')^{-1}}
\end{tikzcd}
\]On the left-hand side of \eqref{eq:not-commute} we have $\overline{U\left(t\leftarrow t'\right)}\odot U\left(t\leftarrow t'\right)$;
on the right-hand side, we have $\mP\left(t\right)\mP\left(t'\right)^{-1}$.
The two expressions diverge even in cases where---and, indeed, regardless
of whether---$\mP\left(t\right)\mP\left(t'\right)^{-1}$ happens
to be a stochastic matrix, that is, when the dynamics is divisible.
The following example illustrates this point.
\begin{example}
Let $N=2,T=\left\{ 0,1,2\right\} $, and consider the linear probability
dynamics $\bP$ defined by:
\begin{gather}
\mP\left(1\right)=\left(\begin{array}{cc}
\cos^{2}\frac{\pi}{8} & \sin^{2}\frac{\pi}{8}\\
\sin^{2}\frac{\pi}{8} & \cos^{2}\frac{\pi}{8}
\end{array}\right),\,\,\,\,\,\,\,\,\,\,\mP\left(2\right)=\left(\begin{array}{cc}
\cos^{2}\frac{3\pi}{8} & \sin^{2}\frac{3\pi}{8}\\
\sin^{2}\frac{3\pi}{8} & \cos^{2}\frac{3\pi}{8}
\end{array}\right)
\end{gather}
These matrices are unistochastic, since each is the Schur--Hadamard
square of the corresponding unitary (rotation) matrix:
\begin{gather}
U\left(1\right)=\left(\begin{array}{cc}
\cos\frac{\pi}{8} & -\sin\frac{\pi}{8}\\
\sin\frac{\pi}{8} & \cos\frac{\pi}{8}
\end{array}\right),\,\,\,\,\,\,\,\,\,\,U\left(2\right)=\left(\begin{array}{cc}
\cos\frac{3\pi}{8} & -\sin\frac{3\pi}{8}\\
\sin\frac{3\pi}{8} & \cos\frac{3\pi}{8}
\end{array}\right)
\end{gather}
It is straightforward to verify that $\mP\left(1\right)$ is invertible
and
\begin{align}
\mP\left(1\right)^{-1} & =\sqrt{2}\left(\begin{array}{cc}
\cos^{2}\frac{\pi}{8} & -\sin^{2}\frac{\pi}{8}\\
-\sin^{2}\frac{\pi}{8} & \cos^{2}\frac{\pi}{8}
\end{array}\right)\\
\mP\left(2\right)\mP\left(1\right)^{-1} & =\left(\begin{array}{cc}
0 & 1\\
1 & 0
\end{array}\right)\label{eq:rotation-decomp}
\end{align}
Since $\mP\left(2\right)\mP\left(1\right)^{-1}$ is a stochastic matrix,
$\bP$ is a divisible probability dynamics, with $\mP\left(2\leftarrow1\right)\doteq\mP\left(2\right)\mP\left(1\right)^{-1}$.

On the other hand, 
\begin{equation}
U\left(2\leftarrow1\right)\doteq U(2)U^{-1}\left(1\right)=\left(\begin{array}{cc}
\cos\frac{\pi}{4} & -\sin\frac{\pi}{4}\\
\sin\frac{\pi}{4} & \cos\frac{\pi}{4}
\end{array}\right)
\end{equation}
and its Schur-Hadamard square is
\begin{equation}
\overline{U\left(2\leftarrow1\right)}\odot U\left(2\leftarrow1\right)=\left(\begin{array}{cc}
\frac{1}{2} & \frac{1}{2}\\
\frac{1}{2} & \frac{1}{2}
\end{array}\right)
\end{equation}
which is not equal to $\mP\left(2\leftarrow1\right)$, as given by
\eqref{eq:rotation-decomp}.
\end{example}

\bibliographystyle{plainnat}
\bibliography{main}
\end{document}